\documentclass[11pt]{article}

\usepackage[margin=1in]{geometry}
\usepackage{amsfonts,amssymb,amsmath,amsthm,mathtools}
\usepackage{thmtools,thm-restate}
\usepackage[round]{natbib}

\usepackage[T1,T2A]{fontenc}
\usepackage[utf8]{inputenc}
\usepackage[main=english,russian]{babel}

\usepackage{graphicx,latexsym,lpic,bm,xspace,booktabs,wrapfig,array}
\usepackage[protrusion=true,expansion=false]{microtype}

\usepackage[bf]{caption}

\usepackage{algorithmicx,algpseudocode}

\usepackage{dsfont}

\usepackage{enumitem}
\setlist[itemize]{noitemsep,label=$-$}
\setlist[enumerate]{noitemsep,label=\itshape(\arabic*)}

\usepackage{hyperref}
\addto\extrasenglish{}
\addto\extrasenglish{}
\newcommand{\appref}[1]{\hyperref[#1]{Appendix~\ref{#1}}}

\usepackage{algorithm}

\usepackage[usenames,dvipsnames]{xcolor}
\usepackage[framemethod=tikz]{mdframed}
\mdfsetup{
	nobreak=true,
	linewidth=.5,
	innerleftmargin=\parindent,
	innerrightmargin=20pt,
	innertopmargin=12pt,
	innerbottommargin=14pt,
	skipabove=0,skipbelow=0
}

\usepackage{tikz}
\usepackage{amsmath}
\usetikzlibrary{decorations.pathreplacing, shapes.geometric}
\usetikzlibrary{arrows.meta} 

\hypersetup{
	colorlinks=true,
	linkcolor=Sepia,
	citecolor=Sepia,
	filecolor=Sepia,
	urlcolor=Sepia
}

\usepackage{todonotes}
\presetkeys%
{todonotes}%
{inline}{}

\declaretheorem[name=Theorem]{theorem}
\declaretheorem[name=Lemma,sibling=theorem]{lemma}

\declaretheorem[name=Claim,sibling=theorem]{claim}

\declaretheorem[name=Corollary,sibling=theorem]{corollary}

\declaretheorem[name=Definition,style=definition]{definition}

\newcommand{\smallFunction}[2]{\newcommand{#1}{{\textsc{#2}}}}

\smallFunction{\Or}{Or}
\smallFunction{\Xor}{Xor}
\smallFunction{\Ind}{Ind}
\smallFunction{\IP}{IP}
\smallFunction{\MAJ}{MAJ}
\smallFunction{\Forr}{Forrelation}
\smallFunction{\Eol}{End-of-Line}

\newcommand{\newclass}[2]{\newcommand{#1}{{\text{\upshape\sffamily #2}}\xspace}}

\newclass{\BPP}{BPP}
\newclass{\RR}{R}
\newclass{\DD}{D}
\newclass{\sizeR}{sizeR}
\newclass{\sizeD}{sizeD}
\newclass{\sR}{sR}
\newclass{\sD}{sD}
\newclass{\NP}{NP}
\newclass{\RP}{RP}
\newclass{\RS}{RS}
\newclass{\ZPP}{ZPP}
\newclass{\BQP}{BQP}
\newclass{\PPAD}{PPAD}
\newclass{\WAPP}{WAPP}

\newcommand{\newsftext}[2]{\newcommand{#1}{{\text{\upshape\sffamily #2}}}}
\newsftext{\dt}{dt}
\newsftext{\cc}{cc}
\newsftext{\ccsize}{ccsize}

\newcommand{\x}{\bm{x}}

\let\OLDthebibliography\thebibliography
\renewcommand\thebibliography[1]{
  \OLDthebibliography{#1}
}

\usepackage{authblk}

\renewcommand{\Ind}{\operatorname{Ind}}

\renewcommand{\NP}{\mathrm{NP}}

\newcommand{\prop}{\mathrm{PROP}}

\newcommand{\disc}{\mathrm{disc}}

\newcommand{\oasymdiscm}{\mathrm{osasymdisc}^{\mathrm{max}}}

\author{Alexander Shekhovtsov, Georgy Sokolov, Mikhail Cherniavskii, Andrey Kupavskii}
\date{}
\title{Nearly Tight Bounds for Proportional Group Fair Divisions and One-Sided Discrepancy}

\begin{document}

\maketitle

\begin{abstract}
This paper studies the problem of fair division of indivisible goods among $k$ groups of $n_1,\ldots, n_k$ agents. We look at the worst downward deviation $\prop(n_1,\ldots, n_k)$ of an agent in a group from its $1/k$-share.  
We improve the bounds of \citep{DBLP:conf/sosa/ManurangsiM26} and show that $\prop(n_1,\ldots, n_k) = \tilde\Theta(\sqrt{n/k})$,  where $n = n_1 + \ldots + n_k$ is the total number of agents.

For the proof of the upper bound, we develop novel discrepancy-type tools and, in particular, a way to  efficiently  work with one-sided discrepancy constraints. 

\end{abstract}

\tableofcontents


\section{Introduction}
\subsection{Fair division}
Allocation of indivisible goods among several agents is an important theoretical and practical problem. In the classical setting, we have $k$ agents and a set $G$ of commodities to be divided among them. Each agent has their own nonnegative evaluation of the commodities. A typical goal is to provide a nearly-fair partition subject to different fairness notions. We refer the reader to \citep{amanatidis2023fair} for survey on fair division of indivisible goods.

In recent years, group fair divisions gained a significant amount of attention. In group fair division, individual agents are replaced by groups of agents that share the same commodities. Formally, we are given $k$ groups of $n_1\ge \ldots\ge n_k$ agents, and a set of objects $G$. Denote by $\mu^{(i,j)}(g)$  the value that the $j$-th agent in the $i$-th group accord to a commodity $g\in G$. Importantly, all valuations are nonnegative: $\mu^{(i,j)}(g)\ge 0$. The goal of the partition is again to produce nearly-fair parts, now evaluated by each agent in the group. We work with the following well-studied fairness notion. 
\begin{definition}\label{def1} Let $c$ be a positive integer. A partition $G = B_1 \sqcup B_2 \sqcup \ldots \sqcup B_k$ satisfies $\mathrm{PROP-}c$, if for any $i, j$ there is a subset $C \subseteq G \setminus B_i,~|C| \leq c$ such that 
$$
\mu^{(i, j)}(B_i) \geq \frac{\mu^{(i, j)}(G)}{k} - \mu^{(i, j)}(C).
$$
Given $k$ and $n_1,\ldots, n_k$, let us denote by
$\mathrm{PROP}(n_1, \ldots, n_k)$ the smallest $c$ such that for $k$ groups of agents of sizes $n_1,\ldots, n_k$ with any  valuations there exists a partition that satisfies  $\mathrm{PROP-}c$.
\end{definition}

In a recent paper \cite{DBLP:conf/sosa/ManurangsiM26}, the authors proved the following upper and lower bounds for this problem.
\begin{theorem}[Manurangsi--Meka, \cite{DBLP:conf/sosa/ManurangsiM26}]
We have $$\Omega\left(\max_{i\in[k]}\frac i{k}\sqrt{n_i}\right)=\mathrm{PROP}(n_1, \ldots, n_k) = O (\sqrt{n_1}).$$
\end{theorem}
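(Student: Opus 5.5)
Two bounds are claimed, so the plan has two independent parts: a discrepancy-style rounding argument for the upper bound $O(\sqrt{n_1})$, and an explicit hard instance for the lower bound $\Omega\bigl(\frac ik\sqrt{n_i}\bigr)$.

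\textbf{Upper bound.}
\begin{enumerate}
\item Start from the fractional allocation giving each good weight $1/k$ in every part. This is exactly proportional for every agent.
\item Round it to an integral partition. Encode the partition by indicators $x_{g,\ell}\in\{0,1\}$ with $\sum_\ell x_{g,\ell}=1$.
\item The constraints to control are, for each group $i$ and each agent $j$ in it, the linear form $\mu^{(i,j)}(B_i)-\mu^{(i,j)}(G)/k$. There are only $n_i\le n_1$ such constraints that involve part $i$.
\item Normalize each agent's valuation and sort goods by value. The error $\mu^{(i,j)}(C)$ with $|C|\le c$ can absorb the $c$ largest goods in $G\setminus B_i$. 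So it suffices to bound the deviation in a ``number of top-valued goods'' sense.
\item Run an iterated-rounding / Beck--Fiala-type argument, or a Spencer/Banaszczyk-type lemma for the $k$-colour setting. Each part $i$ is charged only with the $n_i$ constraints of its own group. An agent's loss is then bounded by the values of $O(\sqrt{n_i})$ goods, which are among the goods it did not receive.
\item A simpler first attempt: reduce to $k=2$ by recursive halving. Split groups into two halves and apply a two-colour discrepancy bound with $n$ constraints. This gives only $O(\sqrt n)$ rather than $O(\sqrt{n_1})$, so the weighting of constraints by group must be kept.
\end{enumerate}

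\textbf{Lower bound.} Fix $i$. Give the $n_i$ agents of group $i$ (and of every larger group) $\pm$-type valuations built from a Hadamard-like set system on $m=\Theta(n_i)$ goods. These systems have discrepancy $\Omega(\sqrt{n_i})$. Agents of the other groups value everything equally, which forces essentially $m/k$ goods into each part. Any partition then leaves group $i$ with $B_i$ whose intersection with some agent's valued set falls short of its share. Groups $1,\ldots,i$ all face the same Hadamard-type instance and compete for goods, so the deficit is scaled by a factor $i/k$ in the worst case, giving $\Omega\bigl(\frac ik\sqrt{n_i}\bigr)$.

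\textbf{Main obstacle.} The hardest step is the one-sidedness of the rounding in the upper bound. Only downward deviations for part $i$ matter, and goods in $C$ must come from outside $B_i$. Standard two-sided discrepancy with $n$ constraints loses the dependence on $n_1$ versus $n$. To fix this I would first try a partial-colouring lemma with constraint slacks chosen so that group $i$'s constraints tighten only on part $i$'s coordinates.
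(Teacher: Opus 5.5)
\textbf{Upper bound.} This half does not yet contain an argument. Steps 4--5 list candidate tools (iterated rounding, Beck--Fiala, a Spencer/Banaszczyk lemma ``for the $k$-colour setting'') but never state or prove a rounding lemma that yields error $O(\sqrt{n_1})$.

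Moreover, the one concrete route you analyse, recursive halving, is discarded on the basis of a wrong estimate, and it is in fact the route that works. At a node of the halving tree handling $k_t$ groups there are at most $k_t n_1$ rows. So a Spencer/Lovett--Meka linear-discrepancy rounding of $x_0=\frac{\lfloor k_t/2\rfloor}{k_t}\bm{1}$ costs $O(\sqrt{k_t n_1})$. This error is then shared by the $k_{t+1}\approx k_t/2$ groups below the node, so it moves a final share by only $O(\sqrt{n_1/k_t})$. Summing over the geometrically shrinking $k_t$ on a root-to-leaf path gives $O(\sqrt{n_1})$. You charged the top-level cost $O(\sqrt n)$ to every agent without dividing it among the groups that later share it. This telescoping is exactly the one in the proof of Theorem~\ref{thm:delta-struct-division}. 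With Theorem~\ref{th:comb1} at each node it gives $\tilde O(\sqrt{n_i})$; with Spencer's $O(\sqrt{N})$ bound for $N\le k_t n_1$ rows at each node it gives a clean $O(\sqrt{n_1})$.

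For the same reason, the ``main obstacle'' you name is not an obstacle for this statement. Since $\disc^k_>\le\disc^k$, the two-sided bound $\disc^k(n_1,\dots,n_k)=O(\sqrt{n_1})$ together with $\prop(n_1,\dots,n_k)\le 2\lceil\disc^k_>(n_1,\dots,n_k)\rceil$ already gives the upper bound. The latter inequality is Lemma~\ref{thm:osasprop}, from Manurangsi--Meka's Theorem 5.3, and it is what your step 4 gestures at. One-sidedness only matters for the sharper $\sqrt{n/k}$ bound of this paper.

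\textbf{Lower bound.} Three steps fail.

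First, valuations must be nonnegative and only shortfalls count. So a two-sided ($\pm$/Hadamard) discrepancy bound is useless when the large deviation is upward. Manurangsi--Meka, and Lemma~\ref{lem:hardcompl} here, add for each hard agent a complementary agent valuing $1-A_j$. This converts an upward deviation into a shortfall, but only up to the group's surplus: the guaranteed shortfall is $\sqrt{r-1}/16-\max(0,\|v\|_1-m/k)$. A group can therefore escape by taking extra goods.

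Second, the instance size is wrong. If the hard agents value only $\Theta(n_i)$ goods, each agent's share, and hence its shortfall, is $O(n_i/k)$. So the claimed $\Omega(\sqrt{n_i})$ deficit is false once $k\gg\sqrt{n_i}$. What is needed is a weighted ($1/k$-share) lower bound on $\Theta(kn_i)$ goods, namely Theorem~\ref{thm:hardmatrix}: $A\in[0,1]^{r\times rk}$ with $\|A(v-\bm{1}/k)\|_\infty\ge\sqrt{r-1}/16$ for all $v\in\{0,1\}^{rk}$.

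Third, the factor $i/k$ is asserted (``compete for goods \dots{} in the worst case'') rather than derived. It comes from counting surpluses. Put the complemented hard instance with $r=\lfloor n_i/2\rfloor$, on the same $m$ goods, into each of groups $1,\dots,i$, and let groups $i+1,\dots,k$ value every good at $1$. Suppose every shortfall is at most $c<\sqrt{r-1}/16$. Then each hard group has surplus $d_g=\|v_g\|_1-m/k\ge\sqrt{r-1}/16-c$, and each other group has $d_g\ge -c$. Since $\sum_g d_g=0$, this forces $kc\ge i\sqrt{r-1}/16$. This is the mechanism the paper refines in Section~\ref{sec_lower_bound}.
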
  
While being tight for groups of nearly equal size, these bounds are far apart for asymmetric groups. The main result of this paper is an almost tight upper and lower bounds for this setting in general.

\begin{theorem}
    \label{thm:main}
Fix some positive integers $n_1\ge \ldots \ge n_k$ and put $n = \sum_{i=1}^k n_i$. We have
    $$\Omega\left(\sqrt{\frac nk}\right)=\mathrm{PROP}(n_1, \ldots, n_k) = O \left( \sqrt{\frac{n}{k}  \cdot \left( \ln^{3/2} \left(n_1 \cdot \frac{k}{n}\right)  + 1 \right) }\right) .$$
\end{theorem}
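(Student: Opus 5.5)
We prove the two bounds separately.

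\emph{Lower bound.} We reduce to the Manurangsi--Meka bound $\Omega(\max_i \frac ik\sqrt{n_i})$, which is only useful when many groups are large. The idea is to merge groups. Given sizes $n_1\ge\dots\ge n_k$, take the index set $I$ of the $\lceil k/2\rceil$ smallest groups. Their total size is at most $n/2$, so the largest $\lfloor k/2\rfloor$ groups carry at least $n/2$ agents.

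Since monotonicity of $\prop$ in each $n_i$ is clear (extra agents can copy existing valuations), it suffices to build a hard instance for balanced sizes. Let $m=\lfloor k/2\rfloor$. Redistribute agents so that $m$ of the groups have size at least $\approx n/(2m)\approx n/k$. Concretely, the $j$-th largest group has $n_j\ge$ the average of the top $m$ groups only if balanced, which it need not be.

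So we argue more directly instead. Take an instance in which groups with $n_i<n/(4k)$ receive valuations concentrated on private dummy goods that force each of them to take a specific set. The remaining groups have total size at least $3n/4$. Splitting large groups does not help, since a group's agents must share a bundle.

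We therefore adapt the Manurangsi--Meka construction directly: use $N$ goods with random $\pm$-type (Hadamard/random) valuations. Agent $(i,j)$ values goods via a random $\{0,1\}$ vector, and we show by a union/counting argument that for any partition, some group $i$ with $n_i$ agents has an agent deficient by $\Omega(\sqrt{n_i\cdot ?})$. The expected target is to get a deficiency of $\sqrt{n_i}\cdot$(fraction of goods not in $B_i$) summed appropriately. Averaging over groups weighted by $n_i$ gives $\Omega(\sqrt{n/k})$ via Cauchy--Schwarz whenever $\sum_i |B_i|/N=1$.

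The main subtlety is making the spectral/anticoncentration argument work simultaneously across groups. I would use an orthogonal (Hadamard-row) valuation design so that each group's $n_i$ agents can certify a deficit of order $\sqrt{n_i}\cdot(1-|B_i|/N)$ or similar.

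\emph{Upper bound.} We phrase the task as one-sided discrepancy. Start from a fractional allocation giving each good weight $1/k$ to each group; this is exactly proportional. Then round it to an integral allocation via iterated rounding or Lovett--Meka-style partial colouring. The constraints are one-sided: for agent $(i,j)$ we need only $\mu^{(i,j)}(B_i)\ge \mu^{(i,j)}(G)/k-\text{loss}$. Losses are measured in the number of goods $c$, so we normalise by bucketing goods by each agent's value.

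The key new tool is a one-sided partial-colouring lemma. With $n$ one-sided constraints on $N$ variables, we can fix a constant fraction of the variables while each constraint loses only $O(\sqrt{N\cdot\log(\cdot)})$, and constraints with slack are ignored. A group of size $n_i$ only needs its bundle to be good for $n_i$ constraints. The allocation is done hierarchically: split goods into two halves among two sets of groups with balanced total agent count, recursing $\log k$ levels. At a level where group sets have $t$ groups and $\approx n t/k$ agents, entropy/partial colouring gives a per-agent error $\sqrt{(n/k)\cdot\log(n_1k/n)}$ in good-count units. Summing the geometric levels and the $\log$ rounds per level gives the $\ln^{3/2}$ factor.

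The main obstacle is handling unbalanced groups: a group with $n_1\gg n/k$ agents. Here the one-sidedness matters. We give large groups proportionally more "discrepancy budget" by assigning constraints weights and using the entropy method with threshold $\lambda_{ij}\sim\sqrt{\ln(n_ik/n)}$. The entropy sum $\sum_i n_i e^{-\lambda^2}$ then stays $O(n/k)$-normalised, and each agent's loss is $O(\sqrt{n/k}\,\lambda)$, yielding the stated bound.
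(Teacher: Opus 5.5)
\textbf{Lower bound.} Your sketch does not reach $\sqrt{n/k}$. The sentence ``splitting large groups does not help, since a group's agents must share a bundle'' throws away exactly the idea the paper's proof rests on. Agents of one group share a bundle, but they may have disjoint supports. The paper sets $r=\lfloor n/(4k)\rfloor$ and cuts every group with $n_i\ge 2r$ into $s_i=\lfloor n_i/(2r)\rfloor$ subgroups of $2r$ agents. Each subgroup values only its own block of goods, through the Manurangsi--Meka matrix stacked with its complementary rows $\bm{1}-A'_i$ (Lemma~\ref{lem:hardcompl}). Every group with $n_i<2r$ values all goods at $1$.

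The complementary rows turn the two-sided bound into a one-sided one with an \emph{additive} correction. In each block, some agent of the subgroup is short by at least $\sqrt{r-1}/16$ minus the group's surplus of goods in that block. The blocks are disjoint, so the surpluses needed by the $\Omega(k)$ subgroups add up to $\Omega(\sqrt{nk})$ extra goods for the large groups. The all-ones groups can release that many goods only if one of them loses $\Omega(\sqrt{n/k})$.

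Your instance lacks both ingredients:
\begin{itemize}
\item Small groups whose valuations live on private dummy goods compete for nothing the large groups want. For $n_1=n-k+1$, $n_2=\dots=n_k=1$, group $1$ can take every non-dummy good and nobody is short.
\item Without splitting, a hard matrix on the whole group forces a surplus of only order $\sqrt{n_1}$. The $k-1$ small groups can fund this at about $\sqrt{n_1}/k$ each, which is just the old $\frac1k\sqrt{n_1}$ bound.
\end{itemize}
Your certificate $\sqrt{n_i}(1-|B_i|/N)$ is neither proved nor of the right shape. It vanishes once $B_i$ is everything, and in the example above the Cauchy--Schwarz averaging then gives only a constant.

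\textbf{Upper bound.} Here the central mechanism is missing. A partial-colouring step charges agent $(i,j)$ its threshold times $\|A^i_j\|_2$. Every node of your recursion that contains group $1$ has at least $n_1$ constraints. So after kernel sliding, rows of group-$1$ agents can have $\ell_2$-norm of order $\sqrt{n_1}$ on the surviving variables, and at the split that finally isolates group $1$ this error passes through essentially undamped. Retuning your thresholds to $\lambda_{ij}\sim\sqrt{\ln(n_ik/n)}$ only moves logarithmic factors. So the claim that each agent loses $O(\sqrt{n/k}\,\lambda)$ is unsupported; it is exactly the $O(\sqrt{n_1})$ barrier. Even granting your per-level estimate, a logarithmic number of rounds each costing $\sqrt{(n/k)\ln}$ gives $\sqrt{n/k}\,\ln^{3/2}$, not $\sqrt{(n/k)\ln^{3/2}}$.

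Your remark that constraints with slack can be ignored points the right way. But nothing in your plan creates slack for the dense rows of large groups, or bounds what that slack costs the others. The paper supplies this in two steps.
\begin{itemize}
\item A top-down Beck--Fiala bound (Theorem~\ref{thm:logbound}, via the row count of Lemma~\ref{lem:rowbound} and iterated partial colouring) is applied in a recursive halving by number of groups (Theorem~\ref{thm:delta-struct-division}). This gives error $O(\Xi\sqrt{\Delta^i_j})$, where $\Delta^i_j$ bounds the column sums of the rows from $j$ downward rather than the norm of the row itself.
\item The gifting algorithm (Theorem~\ref{thm:gifting_individual}) repeatedly gives group $s$ the item with the largest column sum $\lambda$ over still-active agents. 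It deactivates an agent once its gifts are worth $C''\Upsilon\sqrt{\lambda}$ to it, and stops when $\lambda<n/(\Upsilon k)$, where $\Upsilon=\Xi(n_1k/n,n_1k/n)$.
\end{itemize}
Double counting bounds the gifts by $C''n_s\sqrt{\Upsilon k/n}$ per group, hence $O(\sqrt{\Upsilon nk})$ in total. This lowers anyone's fair share by only $O(\sqrt{\Upsilon n/k})$. The leftover matrix is structured with $\Delta$ equal to $\lambda$ at deactivation, so each deactivated agent's gifts pre-pay its later rounding error $O(\Upsilon\sqrt{\Delta})$. Agents never deactivated have $\Delta=n/(\Upsilon k)$, so their error is $O(\sqrt{\Upsilon n/k})$ anyway.

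The gifts are never compensated, which is legitimate only because the constraints are one-sided. The stopping threshold is what puts $\Upsilon=\Theta(\ln^{3/2}(n_1k/n)+1)$ under the square root.
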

The authors of \cite{DBLP:conf/sosa/ManurangsiM26} related the problem of determining $\mathrm{PROP}$ to a multicolor discrepancy problem. Certain tools from discrepancy theory underlie both lower and upper bounds. In earlier papers, the authors also related fair division and discrepancy questions \citep*{doerr2003multicolour,manurangsi2022almost,caragiannis2025new,DBLP:conf/sosa/ManurangsiM26}. However, all previous work always worked with two-sided discrepancy problems. In private communication, Pasin Manurangsi mentioned that developing an approach to benefit from one-sided discrepancy constraints is an important challenge. 
In this paper, we managed to develop such an approach.  

\paragraph{Our contributions}\begin{itemize}
    \item We provide nearly tight bounds for ${\rm PROP}$;
    \item We develop an approach to benefit from one-sided discrepancy constraints in discrepancy problems via the gifting mechanism;
    \item We prove a discrepancy result on the so-called Top-down Beck-Fiala problem, which could be seen as its sharper, non-uniform analogue.
\end{itemize}

Next, we describe the discrepancy theory counterpart. We  start with some background.

\subsection{Discrepancy}
Discrepancy theory studies how to color a set system so that every set remains nearly balanced. 
In the standard two-color matrix formulation, given a matrix $A \in \mathbb{R}^{n \times m}$, a {\it coloring} is a vector $\chi \in \{-1,1\}^m$, 
and the {\it discrepancy of $A$} is defined as
\[
    \disc(A) = \min_{\chi \in \{-1,1\}^m} \|A\chi\|_\infty .
\]

Several classical results provide general-purpose upper bounds for discrepancy. 
If every column of a $0$-$1$ matrix $A$ has at most $t$ nonzero entries, then $\disc(A) \leq 2t-1$~\citep*{DBLP:journals/dam/BeckF81}. 
For every matrix $A \in [-1,1]^{n \times m}$, one has $\disc(A) = O(\sqrt n)$~\citep*{spencer1985six}. 
More generally, for vectors of small Euclidean norm and any convex body of Gaussian measure at least $1/2$, there are signs whose signed sum lies in that body; in particular, this implies an $O(\sqrt{\log n})$ discrepancy bound in the $\ell_\infty$ norm for matrices with $n$ rows and columns of unit $\ell_2$ norm~\citep*{DBLP:journals/rsa/Banaszczyk98}. 
Recent work improves Banaszczyk-type bounds for the Beck--Fiala and Koml{\'o}s settings and gives efficient algorithms for finding the corresponding colorings~\citep*{DBLP:conf/stoc/BansalJ26}. 

For the fair-division setting considered here, the most relevant notion is that of multi-color and multi-matrix  discrepancy.

\begin{definition}[$k$-matrix discrepancy]
Let $A^1 \in [0, 1]^{n_1 \times m}, \ldots, A^k \in [0, 1]^{n_k \times m}$ be matrices. Define $\disc^k(A^1, \ldots, A^k)$ to be the smallest $c$, such that there exists a partition $v_1, \ldots, v_k \in \{0, 1\}^m,$ $\sum v_i = \bm{1},$ with  the following property. For all $i \in [k]$ and $ j \in [n_i]$ we have
$$
\langle A^i_j, \bm{1}/k\rangle - c\le \langle A^i_j, v_i\rangle \leq \langle A^i_j, \bm{1}/k\rangle + c.
$$
    Here $A^i_j$ denotes the $j$-th row of matrix $A^i$.
\end{definition}
$k$-matrix discrepancy was called asymmetric $k$-color discrepancy in \cite{DBLP:conf/sosa/ManurangsiM26}. We put $$\disc^k(n_1,\ldots,n_k) = \max \disc(A^1,\ldots, A^k),$$
where the maximum is taken over $A^i\in [0,1]^{n_i\times m}$, where $m$ could be arbitrary. 
We use the same notational conventions for other types of discrepancy.

\begin{definition}[One-sided $k$-matrix discrepancy]
Let $A^1 \in [0, 1]^{n_1 \times m}, \ldots, A^k \in [0, 1]^{n_k \times m}$ be matrices. Define $\disc^k_>(A^1, \ldots, A^k)$ similar to $\disc^k$, only with a different set of inequalities imposed: 
$$
\langle A^i_j, v_i\rangle \geq \langle A^i_j, \bm{1}/k\rangle - c
$$
for each row $A^i_j$ of $A^i$.
\end{definition}

Note that we may renormalize the valuations of each agent so that for each $i,j$ we have 
$$\max_{g\in G} \mu^{(i,j)}(g)=1.$$ If there is a ${\rm PROP-}c$ partition when agents utilities are represented by matrices $A^1,\ldots, A^k$, then $\disc^k_>(A^1,\ldots, A^k)\le c$.  It implies the left inequality in the theorem below.
\begin{lemma}
\label{thm:osasprop}
For any $k \geq 2$ and $n_1 \geq \ldots \geq n_k \geq 1$,
$$
\disc^k_>(n_1, \ldots, n_k) \leq \prop(n_1, \ldots, n_k) \leq 2 \cdot \lceil \disc^k_>(n_1, \ldots, n_k) \rceil
$$
\end{lemma}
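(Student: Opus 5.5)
The left inequality should follow directly from the definitions, and for the right inequality I plan to truncate each agent's valuation at a suitable threshold so that the one-sided discrepancy guarantee can be applied.

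\textbf{Left inequality.} Let $c=\prop(n_1,\ldots,n_k)$ and take arbitrary $A^i\in[0,1]^{n_i\times m}$. Interpret row $A^i_j$ as the valuation $\mu^{(i,j)}$ of agent $j$ in group $i$ on a set $G$ of $m$ goods; these values are nonnegative. A $\mathrm{PROP-}c$ partition gives, for every $(i,j)$, a set $C\subseteq G\setminus B_i$ with $|C|\le c$ and
$$\mu^{(i,j)}(B_i)\ge \mu^{(i,j)}(G)/k-\mu^{(i,j)}(C).$$
Every entry is at most $1$, so $\mu^{(i,j)}(C)\le c$. Hence $\langle A^i_j,v_i\rangle\ge\langle A^i_j,\bm 1/k\rangle-c$, i.e.\ $\disc^k_>(A^1,\ldots,A^k)\le c$. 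Taking the maximum over all matrices gives the claim. No renormalization is needed, since entries in $[0,1]$ are already admissible valuations.

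\textbf{Right inequality.} Put $d=\disc^k_>(n_1,\ldots,n_k)$ and $c=\lceil d\rceil$.

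\emph{Truncation.} For each agent $(i,j)$, sort the goods by $\mu^{(i,j)}$ in decreasing order. Let $T_{ij}$ be the top $2c$ goods and $\tau_{ij}$ the value of the $2c$-th one; if $|G|\le 2c$ the statement is trivial, and if $\tau_{ij}=0$ then $C=T_{ij}\setminus B_i$ works trivially. Define the row
$$a_{ij}(g)=\min(\mu^{(i,j)}(g),\tau_{ij})/\tau_{ij}\in[0,1],$$
so $a_{ij}(g)=1$ on $T_{ij}$. Write $\mathrm{ex}(S)=\sum_{g\in S\cap T_{ij}}(\mu(g)-\tau)$ for the excess above $\tau$. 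Then
$$\mu(S)=\tau a(S)+\mathrm{ex}(S)\quad\text{for every } S.$$

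\emph{Applying the discrepancy bound.} Apply the definition of $\disc^k_>$ to the matrices with rows $a_{ij}$. This yields a partition $B_1,\ldots,B_k$ with $a_{ij}(B_i)\ge a_{ij}(G)/k-d$ for all $(i,j)$. Multiplying by $\tau$ and adding excesses gives
$$\mu(B_i)\ge \mu(G)/k-\tau d-\mathrm{ex}(T\setminus B_i)/k+(1-1/k)\,\mathrm{ex}(T\cap B_i).$$

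\emph{Choice of $C$.} Take $C=T_{ij}\setminus B_i$, so $|C|\le 2c$. We have
$$\mu(C)=\tau|C|+\mathrm{ex}(T\setminus B_i)\ge \tau|C|+\mathrm{ex}(T\setminus B_i)/k.$$
Whenever $|C|\ge c\ge d$, this makes $\mu(C)\ge \tau d+\mathrm{ex}(T\setminus B_i)/k$, which absorbs the deficit, so $\mathrm{PROP-}2c$ holds for that agent.

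\emph{Main obstacle.} The remaining case is $|T_{ij}\cap B_i|>c$, where $B_i$ already holds more than $c$ of the agent's top $2c$ goods. The plan is to show the agent is then satisfied with $C=T\setminus B_i$ anyway. The first thing I would try is to strengthen the truncation so that the discrepancy inequality is wasted less on the top goods. For instance, apply $\disc^k_>$ with $a_{ij}(g)=0$ on $T_{ij}$ (still a valid $[0,1]$ row), and treat $T_{ij}$ separately: then
$$\mu(B_i)\ge \tau a(G\setminus T)/k-\tau d+\mu(T\cap B_i),$$
while
$$\mu(G)/k=\tau a(G\setminus T)/k+\mu(T)/k.$$
So it suffices that
$$\mu(T\cap B_i)+\mu(T\setminus B_i)\ge \tau d+\mu(T)/k,$$
i.e.\ $(1-1/k)\mu(T)\ge\tau d$. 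This holds since $\mu(T)\ge 2c\tau$ and $1-1/k\ge 1/2$. With this variant no case split is needed at all, so I expect it to be the version that carries through. The only delicate point left is checking the bookkeeping with $C=T\setminus B_i$ against the exact form of Definition~\ref{def1}.
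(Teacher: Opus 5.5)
Your final (zeroing) argument is correct. The left inequality is the same one-line observation the paper makes, and you are right that no renormalization is needed in that direction.

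For the right inequality the paper gives no argument of its own; it simply defers to Theorem~5.3 of Manurangsi--Meka. What you supply is therefore a self-contained proof of that reduction:
\begin{enumerate}
\item zero out each agent's top $2c$ goods;
\item rescale the remaining values by the $2c$-th largest value $\tau_{ij}$;
\item apply $\disc^k_>$;
\item take $C=T_{ij}\setminus B_i$.
\end{enumerate}
The decisive inequality is $(1-1/k)\,\mu(T_{ij})\ge (1-1/k)\,2c\,\tau_{ij}\ge c\,\tau_{ij}\ge d\,\tau_{ij}$, which is valid for $k\ge 2$. The same computation shows that zeroing out only $\lceil dk/(k-1)\rceil$ goods already suffices. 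Since $\lceil dk/(k-1)\rceil\le\lceil 2d\rceil\le 2\lceil d\rceil$, this is slightly sharper than the stated bound.

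In the write-up, present only the zeroing variant. The capped version cannot be completed from the discrepancy guarantee alone. The guarantee $a_{ij}(B_i)\ge a_{ij}(G)/k-d$ is compatible with the following situation:
\begin{itemize}
\item $T_{ij}\subseteq B_i$, with all top values equal to $\tau_{ij}$;
\item a deficit close to $\tau_{ij}d$, spread over many goods of tiny value;
\item all goods outside $B_i$ being of that tiny value.
\end{itemize}
Then no $C$ with $|C|\le 2c$ covers the deficit.

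Also record two small bookkeeping points. First, agents with $\tau_{ij}=0$ still need some row in the discrepancy instance; the zero row works. Second, $c\ge 1$ because $d\ge (k-1)/k>0$ by Lemma~\ref{low:trivial}, so $T_{ij}$ and $\tau_{ij}$ are well defined.
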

The right inequality  was essentially proved in  \citep[Theorem 5.3]{DBLP:conf/sosa/ManurangsiM26}.
They have also showed the following.
    $$
    \disc^k(n_1, \ldots, n_k) = \Theta(\sqrt{n_1}).
    $$
Using Lemma~\ref{thm:osasprop}, we get that Theorem~\ref{thm:main} is equivalent to the following result.
\begin{theorem}
    \label{thm:main2}
Fix some positive integers $n_1\ge \ldots \ge n_k$ and put $n = \sum_{i=1}^k n_i$. We have
    $$\Omega\left(\sqrt{\frac nk}\right)=\disc^k_>(n_1, \ldots, n_k) = O \left( \sqrt{\frac{n}{k}  \cdot \left( \ln^{3/2} \left(n_1 \cdot \frac{k}{n}\right)  + 1 \right) }\right) .$$
\end{theorem}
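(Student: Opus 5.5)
The two bounds need different techniques, so I would prove them separately, treating the lower bound as a construction and the upper bound as an iterative rounding scheme.

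\textbf{Lower bound.} By Lemma~\ref{thm:osasprop} it suffices to build matrices $A^1,\ldots,A^k$ with $\disc^k_>(A^1,\ldots,A^k)=\Omega(\sqrt{n/k})$. The basic device turns two-sided deviations into one-sided ones: whenever a group receives a row $S$, it also receives the complementary row $G\setminus S$. Then $\langle 1_S,v_i\rangle-|S|/k$ and $\langle 1_{G\setminus S},v_i\rangle-|G\setminus S|/k$ sum to $|B_i|-m/k$. So if $|B_i|$ is close to $m/k$, a large positive deviation on one row forces a large negative deviation on the other.

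To pin the sizes $|B_i|$, I would give every group the all-ones row. The one-sided constraints then force $|B_i|\ge m/k-c$ for all $i$, hence $|B_i|\le m/k+(k-1)c$. Summing over $i$ makes this an averaging-type bound, and I would handle the slack that way rather than per group.

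Next I take $m\asymp n/k$ items and give group $i$ about $n_i/2$ random $\pm1$ sign vectors $h$, each encoded through the pair $S$, $G\setminus S$. For a fixed partition $(B_1,\ldots,B_k)$, the quantity $\langle h,1_{B_i}-\bm 1/k\rangle$ is a sum of independent terms with variance $\asymp |B_i|(1-1/k)\asymp m/k$. It therefore exceeds $\sqrt{m/k}$ with constant probability, independently across the rows of group $i$ and across groups. The probability that no row in any group deviates is thus $e^{-\Omega(\sum_i n_i)}=e^{-\Omega(n)}$. The number of partitions is $k^m=e^{m\ln k}$, so I would set $m=\epsilon n/\ln k$ to make the union bound work.

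This yields only $\Omega(\sqrt{n/(k\ln k)})$. To remove the logarithm I would replace the union bound by a Hadamard/orthogonality argument. Split the rows of a Hadamard matrix of order $m\asymp n$ into blocks of sizes $\asymp n_i$. A Parseval computation on $\sum_i\|P_i(1_{B_i}-\bm1/k)\|^2$, where $P_i$ is projection onto block $i$, together with an averaging over random sign-flips of columns, should give a row with deviation $\Omega(\sqrt{n/k})$. I flag this removal of $\sqrt{\ln k}$ as the delicate part of the lower bound. In the regime $n_1\approx n$ one can instead fall back on the Manurangsi--Meka bound.

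\textbf{Upper bound.} I would run an iterative partial-coloring / rounding scheme on the fractional point $x=\bm1/k$ for all groups, with every item split uniformly among the $k$ colours. Each row constraint $\langle A^i_j,v_i\rangle\ge\langle A^i_j,\bm1/k\rangle-c$ gets a discrepancy budget $c\asymp\sqrt{n/k}$ up to logarithms. The key new device is the gifting mechanism, which exploits one-sidedness. Once a group $i$ row is safely ahead by a margin, it no longer needs tracking. More usefully, when group $i$ loses mass on an item, that mass goes to other groups, whose constraints are also one-sided, so a loss for one group is a gift to the others. Heavy-row agents of a group can therefore be decoupled: constraints are dropped once satisfied with slack, and are never re-added because the process never decreases the other groups' mass on their own rows by more than what has been accounted for.

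\textbf{Counting and the main obstacle.} In each round the active constraints number about $\sum_i n_i^{\text{active}}$, while the free variables number about $k\cdot(\#\text{fractional items})$. A Lovett--Meka/Bansal style partial colouring then gives per-round error $\sqrt{\#\text{rows}_i/\text{size}}$. The decisive ingredient is the Top-down Beck--Fiala result mentioned in the contributions: a non-uniform bound in which group $i$'s error scales with $\sqrt{n/k}$ rather than $\sqrt{n_1}$. This is where the $\ln^{3/2}(n_1k/n)$ factor enters, from the $\log(n_1k/n)$ dyadic scales of group sizes above average, with an extra $\sqrt{\ln}$ per scale from geometric summation. I expect this step to be the main obstacle: bounding the cumulative error over these scales while keeping large groups from dominating.
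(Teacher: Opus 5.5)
Both halves of your proposal have genuine gaps.

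\textbf{Lower bound.} Your random construction fails in the skewed regime, which is the only regime where something beyond Manurangsi--Meka is needed.

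\begin{itemize}
\item \emph{Why it fails.} The all-ones rows only force $|B_i|\ge m/k-c$. So the small groups can jointly cede about $(k-1)c$ items to one large group. Averaging over $i$ does not help, because that group may carry almost all the rows.
\item \emph{A concrete instance.} Let $n_1=n-k+1$ and $n_2=\dots=n_k=1$, so groups $2,\dots,k$ carry only the all-ones row. Give each of them $m/k-c$ items. Group~1 then has excess $(k-1)c$. One of its pairs $(S,G\setminus S)$ is violated only if $|\langle h,1_{B_1}-\bm 1/k\rangle|>(k+1)c$.
\item \emph{Why the instance is easy.} For a typical draw of your rows, a uniformly random $B_1$ of this size keeps all these quantities at order $\sqrt{|B_1|\log n}$. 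That is far below $(k+1)c$ once $k$ exceeds roughly $\sqrt{\log n}$. So this instance has one-sided discrepancy below $c$.
\item \emph{The fallback also fails.} In this regime Manurangsi--Meka gives only $\max(\sqrt n/k,1)$, which is short by a factor up to $\sqrt k$.
\item \emph{The other steps.} Where your union bound does work, it loses $\sqrt{\ln k}$. The Hadamard/Parseval repair is only asserted: nothing in the sketch makes $\sum_i\|P_i(1_{B_i}-\bm1/k)\|^2$ large for an adversarial partition when $k\ge 3$, since only $\sum_i(1_{B_i}-\bm 1/k)=0$ is forced.
\end{itemize}

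The missing idea is to dilute the excess by block structure. Take $r=\lfloor n/(4k)\rfloor$. Groups with $n_i<2r$ value every item at $1$. Each group with $n_i\ge 2r$ is cut into $s_i=\lfloor n_i/2r\rfloor$ subgroups of size $2r$. Each subgroup values only its own disjoint block of items, via the Manurangsi--Meka $r\times rk$ hard matrix plus complementary rows. By pigeonhole, some block of group $i$ receives excess at most $d_i/s_i$. Hence each large group needs $d_i\gtrsim s_i\sqrt r$, for a total of $\Omega(\sqrt{nk})$. The small groups can absorb a total deficit of at most $k\cdot 2^{-15}\sqrt{n/k}$, which contradicts $\sum_i d_i=0$.

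\textbf{Upper bound.} This part does not describe a mechanism that works.

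\begin{itemize}
\item \emph{Dropping rows is not safe.} In an iterative rounding, a row dropped ``once satisfied with slack'' can still lose value in later rounds. Mass moved away from group $i$ is no compensation to anyone, since the receiving groups' agents need not value it. So nothing controls the heavy rows of large groups.
\item \emph{Top-down Beck--Fiala is misstated.} It bounds the error of row $i$ of a $(\Delta_i)$-structured matrix by $\Xi\cdot\sqrt{\Delta_i}$, not by $\sqrt{n/k}$. Applied directly, with $\Delta^i_j=n_i$ and binary splitting of the groups as in Theorem~\ref{thm:delta-struct-division}, it gives the bad bound $\tilde O(\sqrt{n_i})$.
\end{itemize}

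The paper fixes this with a separate gifting preprocessing step.

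\begin{itemize}
\item \emph{The gifting rule.} For each group, it repeatedly hands over the unassigned item maximizing the column sum $\lambda$ over still-active agents. An agent is deactivated once its gifted value reaches $C''\Upsilon\sqrt\lambda$. The process stops when $\lambda<n/(\Upsilon k)$.
\item \emph{Structure and prepayment.} Ordering agents by deactivation time makes the residual matrix $(\Delta_j)$-structured, where $\Delta_j$ is the value of $\lambda$ at deactivation. Either the gift prepays the later rounding error $C'\Xi\sqrt{\Delta_j}\le C''\Upsilon\sqrt{\Delta_j}$, or $\Delta_j\le n/(\Upsilon k)$ and that error is already $O(\sqrt{\Upsilon n/k})$.
\item \emph{Cost of the gifts.} Double counting $\sum_t\sum_{i\in S_t}A_{i,j_t}/\lambda_t$ bounds the gifts by $C''n_s\sqrt{\Upsilon k/n}$ per group, hence $O(\sqrt{\Upsilon nk})$ in total. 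This costs each agent only $O(\sqrt{\Upsilon n/k})$ of its $1/k$-share.
\item \emph{Role of one-sidedness.} It is used exactly here: gifts may push heavy agents arbitrarily far above their share.
\end{itemize}

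Finally, the factor $\ln^{3/2}(n_1k/n)$ is $\Upsilon=\Xi(n_1k/n,n_1k/n)$. One logarithm counts the halving rounds of partial colouring. The $\sqrt{\log}$ factor comes from the dyadic count of large-$\ell_1$ rows in Lemma~\ref{lem:rowbound}, not from dyadic scales of group sizes.
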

The bounds in  Theorem~\ref{thm:main2} and the displayed equality before it are very different for skewed matrix sizes. Thus the one-sided nature of the problem makes it genuinely different.  Little was known concerning  rounding of one-sided constraints from this point of view. In this work, we  develop the techniques that take advantage of the one-sided inequalities. Specifically, we introduce  the gifting algorithm, described in Section~\ref{sec211}. As is witnessed by our main result, the resulting approach is optimal up to polylog factors. Below, we summarize the main ideas of the proofs.

\paragraph{Our techniques.}
The proof of the upper bound has two main ingredients. The first ingredient is Theorem~\ref{thm:delta-struct-division} that essentially allows to bound $\disc^k$ using a binary tree-type procedure. In its core is a discrepancy theory result (Top-down Beck-Fiala) which we believe is interesting in its own right. In a nutshell, it is based on a careful accounting of the $\ell_1$-norms of rows and uses the Partial Coloring Lemma. It allows to get discrepancy bounds that incorporate average row complexity after several columns were deleted.

The problem with Theorem~\ref{thm:delta-struct-division} is that it gives worse errors for larger groups and for rows with larger $\ell_1$-norm. As a result, its direct application would lead to suboptimal bounds. To address this issue, we introduce a rebalancing procedure called the {\em gifting algorithm}, which carefully distributes a small part of commodities between (mostly large) groups. As a result, the most problematic agents (from large groups and with large $\ell_1$-norm of their valuation get the most gifts. This provides us with a necessary correction that then allows to apply Theorem~\ref{thm:delta-struct-division}. We note that the absence of the inequality from the other side spares us from a more elaborate tracking the weight of commodities that these `heavy' agents get and gives us the necessary freedom in action.

Our lower bound is inspired by the construction used in \cite{DBLP:conf/sosa/ManurangsiM26}. The improvement comes from the intuition that rows of very different $\ell_1$-norm should make fair division more complicated. In the construction, we have some groups consisting of all-ones vectors, and the other (larger) groups get sparse valuations that look like block-diagonal matrices with the construction of Manurangsi and Meka as a block.

\paragraph{Organization.} In Section~\ref{secBeckFiala}, we formulate our generalization of the Beck--Fiala problem. In Section~\ref{XiBound}, we prove an upper bound for this problem. In Section~\ref{sec_upper_bound}, we use this result to obtain the upper bound in Theorem~\ref{thm:main}. The gifting algorithm, another important part of the proof of the upper bound, is presented in Section~\ref{sec211}. Finally, in Section~\ref{sec_lower_bound}, we prove the lower bound in Theorem~\ref{thm:main}.

\subsection{Discrepancy and the Top-down Beck-Fiala Problem} \label{secBeckFiala}
In the last decade, there has been great progress on the algorithmic aspects of discrepancy. One major development was the paper  \cite{DBLP:conf/focs/LovettM12}, where the authors provided a polynomial algorithm for the classical Spencer's `six standard deviations' result \cite{spencer1985six}. They introduced the following  powerful tool. 

\begin{theorem}[Partial Coloring Lemma, \citep{DBLP:conf/focs/LovettM12}]
\label{thm:main-partial-coloring}
Let $v_1,\dots,v_n \in \mathbb{R}^m$ be vectors, and $y \in [-1,1]^m$ be a ``starting'' point. Let $c_1,\dots,c_n \geq 0$ be thresholds such that $\sum_{i=1}^n \exp(-c_i^2/16) \leq m/16$. Let $\delta > 0$ be a small approximation parameter. Then there exists an efficient randomized algorithm that with probability at least $0.1$ finds a point $y' \in [-1,1]^m$ such that

\begin{itemize}
    \item[(i)] $|\langle y - y', v_i \rangle| \leq c_i \|v_i\|_2$.
    \item[(ii)] $|y'_i| \geq 1 - \delta$ for at least $m/2$ indices $i \in [m]$.
\end{itemize}

Moreover, the algorithm runs in time $O((m+n)^3 \cdot \delta^{-2} \cdot \log(nm/\delta))$.
\end{theorem}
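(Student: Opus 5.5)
We would prove the Partial Coloring Lemma (Theorem~\ref{thm:main-partial-coloring}) with the constrained Gaussian random walk of Lovett and Meka, the ``Edge-Walk''. Fix a step size $\gamma>0$, a threshold $\delta$, and a number of steps $T=K/\gamma^2$ for an absolute constant $K$. Start at $X_0=y$. At step $t$, call coordinate $i$ \emph{frozen} if $|X_{t-1}(i)|\ge 1-\delta$, and call constraint $j$ \emph{frozen} if $|\langle X_{t-1}-y,v_j\rangle|\ge (c_j-\delta)\|v_j\|_2$. Let $V_t$ be the subspace orthogonal to the frozen standard basis vectors $e_i$ and to the frozen $v_j$. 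Then set $X_t=X_{t-1}+\gamma U_t$, where $U_t$ is a standard Gaussian vector in $V_t$. The output is $y'=X_T$.

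The first step is feasibility. Once a coordinate or constraint freezes, it never moves again. Before it freezes, a single step changes it by a Gaussian of standard deviation at most $\gamma$ (respectively $\gamma\|v_j\|_2$). We choose $\gamma\le\delta/\sqrt{C\log(nm/\gamma)}$. A union bound over all $T(n+m)$ step/constraint pairs then shows that with high probability no step overshoots by more than $\delta$. Hence $X_T\in[-1,1]^m$ and $|\langle X_T-y,v_j\rangle|\le c_j\|v_j\|_2$, which is item~(i).

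The second step bounds the number of frozen constraints. For each $j$, the process $\langle X_t-y,v_j\rangle/\|v_j\|_2$ is a martingale whose increments are Gaussians of variance at most $\gamma^2$. Its total variance is therefore at most $\gamma^2T=K$. By a standard martingale tail bound (Freedman/Azuma), the probability that it ever reaches $c_j-\delta$ is at most $2\exp(-c_j^2/(2K)-O(1))$. We take $K\le 8$ and use the hypothesis $\sum_j\exp(-c_j^2/16)\le m/16$. This gives that the expected number of constraints frozen by time $T$ is at most roughly $m/8$. By Markov's inequality, with probability at least $1/2$ at most $m/4$ constraints ever freeze.

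The third step, which we expect to be the main obstacle because the constants must interlock, shows that at least $m/2$ coordinates freeze. We use the energy argument. Since $\|X_t\|_2^2\le m$ throughout, and $\mathbb{E}\|X_T\|_2^2-\|y\|_2^2=\gamma^2\sum_t\mathbb{E}\dim V_t$, we get $\gamma^2\sum_t\mathbb{E}\dim V_t\le m$. Consider the event that fewer than $m/2$ coordinates and at most $m/4$ constraints are frozen at time $T$. On this event, $\dim V_t\ge m/4$ for every $t$, so the sum is at least $Km/4$. For $K$ a suitable constant, Markov's inequality makes this event unlikely. The delicate point is choosing $K$ so that it is small enough for the second step yet large enough here; the factor $16$ in the hypothesis is exactly what leaves room for this, and we expect the two failure probabilities to combine to success probability at least $0.1$.

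Finally, each step needs a projection onto $V_t$, which costs $O((m+n)^3)$ via Gram--Schmidt, and there are $T=O(\gamma^{-2})=O(\delta^{-2}\log(nm/\delta))$ steps. This yields the stated running time.
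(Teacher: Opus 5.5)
The paper does not prove this lemma; it imports it from Lovett and Meka. Your Edge-Walk outline (freezing rules, a union bound against overshooting, a martingale tail for each constraint, the energy identity $\mathbb{E}\|X_T\|_2^2-\|y\|_2^2=\gamma^2\sum_t\mathbb{E}\dim V_t$) is their argument, and the feasibility and running-time steps are sound.

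The gap is in the third step, which you leave at ``we expect''. The way you combine the two estimates gives no positive success probability. Let $N_{\mathrm{var}}$ and $N_{\mathrm{con}}$ be the numbers of frozen coordinates and frozen constraints at time $T$. You apply Markov's inequality twice and take a union bound:
\begin{itemize}
\item $\Pr[N_{\mathrm{con}}>m/4]\le 1/2$, from $\mathbb{E}N_{\mathrm{con}}\le m/8$;
\item $\Pr[E]\le m/(Km/4)=4/K$, for $E=\{N_{\mathrm{var}}<m/2,\ N_{\mathrm{con}}\le m/4\}$.
\end{itemize}
Your second step needs $K\le 8$, so $4/K\ge 1/2$ and the failure bounds already sum to at least $1$. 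Moving the cut-off does not help. With cut-off $\alpha m$ the bounds are $1/(8\alpha)$ and $1/(K(1/2-\alpha))$, and by Cauchy--Schwarz their sum is at least $2(8^{-1/2}+K^{-1/2})^2\ge 1$ for every $K\le 8$. So with your tail estimate, no choice of constants works along this route.

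The missing idea is to combine the two bounds in expectation and use Markov only once. Frozen sets only grow, so $\dim V_t\ge m-N_{\mathrm{var}}-N_{\mathrm{con}}$ holds pointwise for every $t\le T$. Hence
\[
m\ \ge\ \gamma^2\sum_{t=1}^{T}\mathbb{E}\dim V_t\ \ge\ K\bigl(m-\mathbb{E}N_{\mathrm{var}}-\mathbb{E}N_{\mathrm{con}}\bigr),
\]
which gives $\mathbb{E}N_{\mathrm{var}}\ge(1-1/K)m-\mathbb{E}N_{\mathrm{con}}$. Since $N_{\mathrm{var}}\le m$, a single reverse Markov step gives $\Pr[N_{\mathrm{var}}\ge m/2]\ge 2\,\mathbb{E}N_{\mathrm{var}}/m-1$.

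Your constraint tail also needs repair. The correct bound is $2\exp(-(c_j-\delta)^2/(2K))$; your ``$-O(1)$'' has the wrong sign. At $K=8$ the shift by $\delta$ costs a factor $\exp((2c_j\delta-\delta^2)/16)$, which is unbounded in $c_j$. So you should take $K$ strictly below $8$; Lovett and Meka use $K=16/3$. There, $2\exp(-3(c_j-\delta)^2/32)\le 4\exp(-c_j^2/16)$ uniformly in $c_j$ for $\delta\le 1$. This gives:
\begin{itemize}
\item $\mathbb{E}N_{\mathrm{con}}\le m/4$;
\item $\mathbb{E}N_{\mathrm{var}}\ge 9m/16$;
\item success probability at least $1/8$, hence at least $0.1$ after subtracting the overshoot probability.
\end{itemize}

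A smaller point: $\|X_t\|_2^2\le m$ holds only off the overshoot event. The energy bound therefore needs a one-line correction, for example $\mathbb{E}\|X_T\|_2^2\le(1+o(1))m$, since each coordinate exceeds $1$ by at most a single Gaussian step.
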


Another classical discrepancy result is the Beck--Fiala theorem \citep{DBLP:journals/dam/BeckF81}, which states that, if each column of $A$ has $\ell_1$-norm at most $\Delta$, then the upper bound can be improved to $O(\Delta)$. They conjectured that the right upper bound is $O(\sqrt{\Delta})$. \cite{DBLP:journals/rsa/Banaszczyk98} proved a $O(\sqrt{\Delta \cdot \ln n})$ bound, and the result of \cite{DBLP:conf/focs/LovettM12} gives the same bound and a polynomial construction. In a recent breakthrough paper,  \cite{DBLP:conf/stoc/BansalJ26} showed the $O(\sqrt{\Delta})$ bound in the regime $\Delta \geq \log^2 n$. Still, the question remains open in general. 


In this paper, we prove and apply the following theorem, which deals with an asymmetric generalization of the Beck-Fiala problem.\footnote{We state it for $[0,1]$ for convenience. The result also works for $[-1, 1]$  as well, with column sums of absolute values instead of column sums.}

\begin{definition}[$(\Delta_i)$-structured matrix]
Given positive reals $\Delta_1 \geq \Delta_2 \geq \ldots \geq \Delta_n$, we say that a matrix $A \in [0,1]^{n \times m}$ is {\em $(\Delta_i)$-structured} if for any $i\in [n]$, the maximum column sum in a submatrix $A_{i, \ldots, n}$ is at most $\Delta_i$. Formally,
$$
\forall~ 1 \leq i \leq n, ~~\max_{j\in[m]} \sum_{k=i}^n A_{k, j} \leq \Delta_i.
$$
\end{definition}

We remark that  the setting of the Beck--Fiala theorem and, importantly, the procedure underlying its proof, works with the setting  $\Delta_1 = \ldots = \Delta_n$. However, as we illustrate in our proofs, one could draw benefits from this asymmetric setting.  Let us define the linear discrepancy problem for $(\Delta_i)$-structured matrices.




\begin{theorem}[Bound on the top-down Beck-Fiala problem]
\label{thm:logbound}
    There exists an absolute constant $C > 0$, such that the following holds. Let $n \geq \Delta_1 \geq \Delta_2 \geq \ldots \geq \Delta_n > 0$ be a sequence of reals.  Let $A \in [0, 1]^{n \times m}$ be a $(\Delta_i)$-structured matrix and $x_0 \in [0, 1]^m$ be a point. Then there exists $x \in \{0, 1\}^m$, such that for all $1 \leq i \leq n$ we have
    $$
     |\langle A_i, x_0 - x \rangle| \leq C \log_2 (4n / \Delta_n) \cdot \sqrt{\Delta_i\log_2 (4\Delta_1 / \Delta_n)}.
    $$
\end{theorem}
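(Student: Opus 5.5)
We will prove Theorem~\ref{thm:logbound} by iterated partial coloring in the style of \cite{DBLP:conf/focs/LovettM12}. We maintain a fractional point $y\in[0,1]^m$, starting from $y=x_0$, together with the set $F$ of \emph{floating} coordinates (those not yet within $\delta$ of $\{0,1\}$). Each round applies Theorem~\ref{thm:main-partial-coloring}, after the affine change $[0,1]\to[-1,1]$, to the rows restricted to $F$. After a round, at least half of the floating coordinates are frozen. Hence after $T=\lceil\log_2 m\rceil+O(1)$ rounds everything is essentially integral, and a final negligible rounding step (with $\delta$ tiny) finishes. The work is to choose the thresholds $c_i$ in each round so that the admissibility condition $\sum_i e^{-c_i^2/16}\le |F|/16$ holds, while the accumulated error on row $i$ is $O(\log(4n/\Delta_n)\sqrt{\Delta_i\log(4\Delta_1/\Delta_n)})$. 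This requires two things: only $O(\log(n/\Delta_n))$ rounds may contribute substantially, and each of them contributes $O(\sqrt{\Delta_i\log(\Delta_1/\Delta_n)})$.

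\textbf{Counting budget.} In a round with $|F|=m'$ floating columns, let $a_i=\|A_i|_F\|_1$. Summing the structured condition over the columns of $F$ gives
\[
\sum_{k\ge i} a_k \le \Delta_i m'
\qquad\text{for every } i .
\]
Also $\|A_i|_F\|_2\le\sqrt{a_i}$, because the entries lie in $[0,1]$. We group the rows dyadically by $\Delta_i$: let $I_s=\{i:\Delta_i\in(\Delta_1 2^{-s-1},\Delta_1 2^{-s}]\}$, for $s=0,\dots,L$ with $L=\log_2(\Delta_1/\Delta_n)$. We call a row \emph{heavy} if $a_i>\lambda\Delta_i$ for a parameter $\lambda\approx 16\cdot 4(L+1)$. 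Within $I_s$, let $i_s$ be its first index. Then the heavy rows of $I_s$ number at most
\[
\frac{\Delta_{i_s}m'}{\lambda\Delta_1 2^{-s-1}}\le \frac{2m'}{\lambda}.
\]
Over all $L+1$ groups the heavy rows therefore number at most $m'/64$, so we give them the threshold $c_i=0$. Each such row costs $e^0=1$, for a total of at most $m'/64$. For a light row the error bound $c_i\sqrt{a_i}\le c_i\sqrt{\lambda\Delta_i}$ holds automatically. If we give every light row a constant $c_i$, however, their number can be up to $n$ rather than $O(m')$.

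\textbf{Main obstacle: light rows.} There can be up to $n$ light rows, so they need $c_i\approx\sqrt{\log(n/m')}$, and this is exactly where the $\log(4n/\Delta_n)$ factor enters. Rounds split into two regimes. While $m'\ge n$, constant $c_i$ suffice. Once $m'<n$, we set
\[
c_i=C'\sqrt{\log(n/m')}\approx C'\sqrt{\text{(round index past } \log n)}.
\]
A crude bound on the per-row error of those rounds is then
\[
\sum_t \sqrt{\lambda\Delta_i\log(n/m'_t)} .
\]
This sum stays controlled only because the number of relevant rounds is limited: once $m'<\Delta_n$, or more generally once $m'$ is small, every row has $a_i\le m'$. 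For the late rounds we therefore switch to the bound $\|A_i|_F\|_2\le\sqrt{m'}$, so those errors decay geometrically, and their sum is dominated by the round where $m'\approx\Delta_i$. The rounds that matter are thus those with $m'$ between $\Delta_n$ and $n$, which number $O(\log(n/\Delta_n))$. Each contributes $O(\sqrt{\lambda\Delta_i})$ times a $\sqrt{\log}$ factor. In total this gives $O(\log(4n/\Delta_n)\sqrt{\Delta_i L})$. The logarithmic factor from $c_i$ should be absorbable by being careful: charge $\sqrt{\log(n/m')}$ against the number of remaining rounds, or simply accept an extra $\sqrt{\log}$ and tune $\lambda$.

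The heavy rows are the delicate part. A row with $c_i=0$ incurs no error in that round, but we must make sure heavy rows are not "forgotten" later. They are not: heaviness is re-evaluated each round, and zero thresholds give zero error. So every round's error on row $i$ is either $0$ or at most $c_i\sqrt{\lambda\Delta_i}$, and in late rounds at most $c_i\sqrt{m'}$. I expect the main technical difficulty to be making the threshold bookkeeping $\sum e^{-c_i^2/16}\le m'/16$ work simultaneously with heavy rows, dyadic groups, and the $n$ versus $m'$ regimes. The first thing I would try is the explicit choice $c_i=4\sqrt{\ln(64n/m')}$ for light rows, together with the heavy-row count above.
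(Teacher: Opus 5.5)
There is a genuine gap, and it is exactly the step you leave open. Suppose all light rows share one threshold $c_i\approx\sqrt{\log(n/m')}$. A light row whose restricted norm stays near $\lambda\Delta_i$ (nothing prevents this while its support keeps floating) is then charged $\sqrt{\log(n/m'_t)}\cdot\sqrt{\lambda\Delta_i}$ in every round with $m'_t\ge\lambda\Delta_i$. In the worst case $m'$ merely halves each round, so $\log(n/m'_t)$ grows linearly in $t$, and the sum is of order $\log^{3/2}(n/\Delta_i)\sqrt{\lambda\Delta_i}$. Take all $\Delta_i$ equal to $\Delta$, so $L=0$ and $\lambda=O(1)$. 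Your analysis then yields only $O(\log^{3/2}(n/\Delta)\sqrt{\Delta})$, against the claimed $O(\log(4n/\Delta)\sqrt{\Delta})$. So your sentence ``in total this gives $O(\log(4n/\Delta_n)\sqrt{\Delta_i L})$'' does not follow from your own per-round estimate. Accepting the extra factor proves a weaker theorem. Tuning $\lambda$ cannot remove it either: the factor $\sqrt{\log(n/m')}$ is forced by up to $n$ light rows sharing one threshold, whatever $\lambda$ is.

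The missing idea is to let the threshold depend on the row's current norm at every dyadic scale, not just at two levels. With $r=\Delta_1/\Delta_n$ and $C$ large, take $c_i=C\sqrt{(\log_2 r+2)\Delta_i}/\|A_i|_F\|_2$, so every row pays exactly $C\sqrt{(\log_2 r+2)\Delta_i}$ per round. Run your counting argument at every scale $s\in\mathbb{Z}$ rather than only at the level $\lambda$; this is Lemma~\ref{lem:rowbound} applied to $A|_F$, which is still $(\Delta_i)$-structured with $m'$ columns. It shows that at most $m'/2^{s-1}$ rows have $\|A_i|_F\|_2^2\ge 2^s(\log_2 r+2)\Delta_i$. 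Rows below $2^{s+1}(\log_2 r+2)\Delta_i$ have $e^{-c_i^2/16}\le e^{-C^2 2^{-s}/32}$. Hence $\sum_i e^{-c_i^2/16}\le 2m'\sum_{s\in\mathbb{Z}}2^{-s}e^{-C^2 2^{-s}/32}\le m'/16$, with no dependence on $n$. The many rows of small restricted norm are exactly the ones that receive huge thresholds, and a uniform threshold throws this away. Thresholds of size $\sqrt{\log(n/m')}$ are needed only in the tail, once $m'\le\Delta_n$. There $\|A_i|_F\|_2\le\sqrt{m'}$ makes the errors decay geometrically to $O(\sqrt{\Delta_n\log(n/\Delta_n)})$, so that part of your plan is fine.

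A second, easier gap is the dependence on $m$. You run about $\log_2 m$ rounds and use constant thresholds while $m'\ge n$. Each of the $\log_2(m/n)$ rounds in that regime is only guaranteed to cost $O(\sqrt{\lambda\Delta_i})$, so your bound grows with $m$, which is unbounded. The claim that only rounds with $\Delta_n\le m'\le n$ matter is asserted, not shown. You must first reduce to $m\le n$. One way is the Beck--Fiala kernel step (Lemma~\ref{lem:kerslide}): move inside the kernel of $A$ restricted to the fractional coordinates until at most $n$ of them remain fractional, at zero cost. Another is to set every $c_i=0$ while $m'\ge 16n$. With both fixes your outline becomes the paper's proof: $\log_2(n/\Delta_n)+O(1)$ main rounds at cost $C\sqrt{(\log_2 r+2)\Delta_i}$ each, followed by the geometric tail. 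Your separate heavy class with $c_i=0$ is then unnecessary, since heavy rows are already covered by the large-$s$ terms above.
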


We prove this theorem in Section \ref{XiBound}. For convenience,  let us introduce the following function
\begin{align} \label{eq_xi_def}
    \Xi(s, r) := C \log_2(4s) \cdot \sqrt{\log_2 (4r)},
\end{align}
where $C$ is a constant from Theorem \ref{thm:logbound}. With this notation, the bound from Theorem \ref{thm:logbound} becomes 
$$|\langle A_i, x_0 - x \rangle| \leq\Xi(n/\Delta_n, \Delta_1/\Delta_n) \cdot \sqrt{\Delta_i}.$$
The proof is based on the idea that we can bound the number of rows having $\ell_1$-norm above a certain threshold. The idea of bounding large-norm rows has been instrumental in discrepancy. It has been used in random walk techniques \citep{DBLP:conf/focs/LovettM12, DBLP:conf/stoc/BansalJ26} and in the foundational work of \citep{DBLP:journals/dam/BeckF81}. Besides this, we use the partial coloring lemma from \citep{DBLP:conf/focs/LovettM12} and draw on some of the ideas from that paper.
We suspect that 
the polylogarithmic term $\sqrt{\log_2(4r)}$ in Theorem \ref{thm:logbound} is  necessary, see Lemma~\ref{lem:counter1}. 
\subsection{Related work}
\paragraph{Group fair division.}
Fair division among groups (families), models settings in which each group receives a common bundle while agents within the same group may have different valuations. 
For divisible resources, several fairness notions for families have been studied, including average, unanimous, and democratic variants of proportionality and envy-freeness~\citep*{segal2019fair}. 
Another line of work considers cake divisions among groups of prescribed sizes, with the additional requirement that every group receives a connected piece~\citep*{segal2021how}.

For indivisible goods, early work studied the asymptotic existence of envy-free allocations for groups under random additive valuations~\citep*{manurangsi2017asymptotic}, as well as approximate maximin-share guarantees for groups~\citep*{suksompong2018approximate}. 
Since unanimous guarantees can be too demanding when agents in the same group have heterogeneous preferences, democratic fairness notions have also been proposed for indivisible goods~\citep*{segal2019democratic}. 
Further work studied EF1- and EFX-type guarantees for group allocations under several valuation classes and group models~\citep*{kyropoulou2020almost}. 
More recent directions include ordinal maximin-share guarantees, variable groups, and allocations among couples or other small groups~\citep*{manurangsi2025ordinal,golz2026fair,dupre2026bad}.

The line of work closest to the present paper connects group fair division with discrepancy theory. 
There, the properties of approximate proportionality, envy-freeness, and consensus division for groups are translated into multi-color and weighted discrepancy problems~\citep*{manurangsi2022almost}. 
Subsequent sharper lower bounds used further developed the apparatus of  multi-color and weighted discrepancy techniques~\citep*{caragiannis2025new,DBLP:conf/sosa/ManurangsiM26,dupre2026bad}.

\paragraph{Discrepancy} Constructive discrepancy results often rely on random walk and semidefinite programming ideas~\citep*{bansal2010constructive,DBLP:conf/focs/LovettM12,rothvoss2017constructive,bansal2019algorithm,bansal2018gram,DBLP:conf/stoc/BansalJ26}. A more detailed introduction to discrepancy theory can be found in the standard books on the discrepancy method and geometric discrepancy~\citep*{chazelle2000discrepancy,matousek1999geometric}.

In the fair-division setting considered here, the relevant notions are not only ordinary two-color discrepancy, but also multi-color and weighted variants. 
For $k$ colors, one natural objective is to minimize, over all relevant rows, the largest imbalance between any two color classes. 
Multi-color discrepancy and its weighted variants provide a natural language for studying allocations among several groups: colors can be interpreted as groups, and a coloring can be viewed as rounding the ideal fractional allocation in which every group receives a $1/k$ fraction of every good~\citep*{doerr2002discrepancy,doerr2003multicolour}. 
This viewpoint underlies discrepancy-based bounds for group fair division, including bounds for approximate proportionality, envy-freeness, and consensus division~\citep*{manurangsi2022almost,caragiannis2025new,DBLP:conf/sosa/ManurangsiM26,dupre2026bad}. 
Recent work also extends discrepancy-based fair-division techniques beyond additive valuations by defining discrepancy notions for non-additive valuations and applying them to fair-division problems~\citep*{la2025discrepancy}.

\subsection{Notation}
We denote $[n] = \{1, 2, \ldots, n\}$.

For a matrix $A \in \mathbb{R}^{n \times m}$ and an index $i \in [n]$, by $A_i \in \mathbb{R}^m$ we denote the vector representing the $i$-th row of the matrix $A$. If $I \subseteq [n]$, then we denote by $A_I \in \mathbb{R}^{|I| \times m}$ the submatrix of $A$ obtained by retaining only the rows $i$ that lie in $I$. If $1 \leq i \leq j \leq n$, then we define $A_{i \ldots j}$ to be $A_I$ where $I = \{i, i + 1, \ldots, j\}$. For a vector $v \in \mathbb{R}^m$ and a subset $I \subseteq [m]$, we denote by $v_I \in \mathbb{R}^{|I|}$ the restriction of $v$ to the coordinates $I$.

For a vector $v \in \mathbb{R}^m$, let $\| v\|_1, \|v\|_2, \|v\|_{\infty}$ denote its $1$-norm, $2$-norm, and $\infty$-norm, respectively.

\section{Proof of Theorem~\ref{thm:logbound}}
\label{XiBound}

We first prove the following simple, but important lemma.

\begin{lemma}
    \label{lem:rowbound}
    Let $n, m \in \mathbb{N}$ and $A \in [0, 1]^{n \times m}$ be a $(\Delta_i)$-structured matrix where $\Delta_1/\Delta_n \leq r$ for $r > 0$. Then, for any $t > 0$, the number of rows $i$ which satisfy
    \begin{equation}
     \|A_i\|_1 > 2 \cdot t \cdot (\log_2 r + 2) \cdot \Delta_i \label{eqn:exceed}
    \end{equation}
    is at most $m/t$.
\end{lemma}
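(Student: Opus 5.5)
Let $S$ be the set of rows satisfying \eqref{eqn:exceed}. The plan is to bound $\sum_{i\in S}\|A_i\|_1/\Delta_i$ from above by a double count over columns, namely by $m$ times an $O(\log r)$ factor, and to compare it with the lower bound $|S|\cdot 2t(\log_2 r+2)$ that \eqref{eqn:exceed} gives. Dividing the two bounds yields $|S|\le m/t$. So the whole task reduces to proving a per-column inequality
\[
\sum_{i=1}^n \frac{A_{i,j}}{\Delta_i} \le 2(\log_2 r+2)\qquad\text{for every column } j .
\]
Summing this over $j$ gives $\sum_i \|A_i\|_1/\Delta_i \le 2m(\log_2 r + 2)$. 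Each $i\in S$ contributes more than $2t(\log_2 r+2)$ to the left-hand side, and all terms are nonnegative, so the conclusion follows.

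To prove the per-column inequality, fix $j$ and write $a_i=A_{i,j}\in[0,1]$ and $T_i=\sum_{k\ge i}a_k\le \Delta_i$. Group the rows dyadically by their value of $\Delta_i$. Since $\Delta_n\le\Delta_i\le\Delta_1\le r\Delta_n$, set $I_\ell=\{i:\ 2^{\ell}\Delta_n\le \Delta_i<2^{\ell+1}\Delta_n\}$ for $\ell=0,1,\dots,\lfloor\log_2 r\rfloor$. This gives at most $\log_2 r+1$ classes. Because $\Delta_i$ is nonincreasing in $i$, each class is a contiguous interval of indices; let $p_\ell$ be its smallest index. Then
\[
\sum_{i\in I_\ell}\frac{a_i}{\Delta_i}\le \frac{1}{2^\ell\Delta_n}\sum_{i\in I_\ell} a_i\le \frac{T_{p_\ell}}{2^\ell\Delta_n}\le\frac{\Delta_{p_\ell}}{2^\ell\Delta_n}<2 .
\]
Summing over the at most $\log_2 r+1$ classes gives at most $2(\log_2 r+1)\le 2(\log_2 r+2)$.

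A subtlety arises if $r<1$ or if $\Delta_1/\Delta_n$ is not a power of two. The slack of the extra $+1$ inside $(\log_2 r+2)$ absorbs the rounding of the number of classes. If $r<1$ were allowed, the hypothesis could not hold anyway, since $\Delta_1\ge\Delta_n$ forces $r\ge1$, so this case is vacuous.

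The main obstacle is only the bookkeeping in the per-column bound. The tempting alternative is summation by parts, $\sum_i a_i/\Delta_i=\sum_i T_i(1/\Delta_i-1/\Delta_{i-1})$, which gives a bound of $1+\sum_i \Delta_i(1/\Delta_i-1/\Delta_{i-1})$. That quantity can be as large as $\ln r$ only when the steps are small, and it can blow up when there are many steps, so the dyadic grouping is the cleaner route and I would use it.
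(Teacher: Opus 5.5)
Your proof is correct and is essentially the paper's argument. Both group the rows into dyadic classes by $\Delta_i$ and use that every column sum over a class is at most the class's largest $\Delta$, which is within a factor $2$ of each member's $\Delta_i$. The paper applies the Markov-type count inside each class and adds the at most $\log_2 r+2$ counts, whereas you first add the normalized masses $\sum_i\|A_i\|_1/\Delta_i$ and apply Markov once, which only reorders the same sums.

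As an aside, your dismissal of summation by parts is mistaken, though this does not affect your proof. Since $1-x\le -\ln x$ for $x\in(0,1]$, one gets $\sum_{i\ge 2}(1-\Delta_i/\Delta_{i-1})\le \ln(\Delta_1/\Delta_n)\le \ln r$. So that route gives the per-column bound $1+\ln r$ however many steps there are.
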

\begin{proof}
    Let $k$ be an integer such that $(\log_2 \Delta_n) - 1\leq k \leq \log_2 \Delta_1$ and assume there is at least one $i$, such that $2^{k} \leq \Delta_i \leq 2^{k+1}$. Let $d$ be the smallest index, such that $\Delta_d \leq 2^{k+1}$, and let $u \geq d$ be the largest index, such that $\Delta_u \geq 2^{k}$.

    The sum in each column of the matrix $A_{d, \ldots, u}$ is at most $2^{k+1}$. So, the total sum of all entries of rows $d, \ldots, u$ is at most $m \cdot 2^{k+1}$. Therefore, the number of rows $i \in [d, u]$ that have sum of entries exceeding $2 \cdot t \cdot (\log_2 r + 2) \cdot \Delta_i $ is at most
    $$
    \frac{m \cdot 2^{k+1}}{2 \cdot t \cdot (\log_2 r + 2) \cdot 2^{k}} = \frac{m}{t \cdot (\log_2 r + 2)},
    $$
    where we used $\Delta_i \geq 2^k$.

    Since for every $\Delta_i$, there is an integer $(\log_2 \Delta_n) - 1 \leq k \leq \log_2 \Delta_1$, such that $2^k \leq \Delta_i \leq 2^{k+1}$, summing over all such $k$, we obtain that the number of rows satisfying (\ref{eqn:exceed}) is at most

    $$
    (\log_2 r + 2) \cdot \frac{m}{t \cdot (\log_2 r + 2)} = \frac{m}{t}.
    $$

\end{proof}

We remark that the $\log_2 r$ term in Lemma \ref{lem:rowbound} cannot be reduced as shown by the following lemma.

\begin{lemma}
\label{lem:counter1}
There exists an absolute constant $m_0 > 0$, such that for every $m \geq m_0$, there exists $n \geq 2 \cdot m$, positive integers $(\Delta_i)$ where $n \geq\Delta_1 \geq \Delta_2 \geq \ldots \geq \Delta_n \geq 1$, and a matrix $A \in [0, 1]^{n \times m}$ that is $(\Delta_i)$-structured and all its rows satisfy
$$
     \|A_i\|_1 > \frac{\ln m}{3} \cdot \Delta_i.
    $$
\end{lemma}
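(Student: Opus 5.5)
We will give an explicit \emph{layered} construction in which every row is a constant multiple of the all-ones vector $\mathbf{1}\in\mathbb{R}^m$. The rows come in levels $\ell=1,\dots,L$: level $1$ sits at the bottom of the matrix and level $L$ at the top. Each level has total column weight exactly $1$. Consequently the suffix column sums grow by $1$ per level, and we can take $\Delta_i=\ell$ for every row $i$ of level $\ell$. The per-row entry at level $\ell$ will be of order $\ell\ln m/m$, which makes each row sum of order $\ell\ln m$, as required.

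\textbf{Construction.} Put $L=\lfloor 2m/\ln m\rfloor$. For $\ell\in[L]$ let
$$N_\ell=\left\lfloor \frac{5m}{2\ell\ln m}\right\rfloor\ge 1 \qquad\text{and}\qquad a_\ell=\frac1{N_\ell}\in(0,1].$$
Level $\ell$ consists of $N_\ell$ identical rows equal to $a_\ell\mathbf{1}$. We stack the levels with level $L$ on top and level $1$ at the bottom, and set $\Delta_i=\ell$ for every row $i$ of level $\ell$. Then $\Delta_i$ is a non-increasing sequence of positive integers going down the matrix, with $\Delta_1=L$ and $\Delta_n=1$.

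\textbf{Checking the structural condition.} Take a row $i$ in level $\ell$. The suffix $A_{i,\dots,n}$ consists of part of level $\ell$ together with all of levels $\ell-1,\dots,1$. Every column of this suffix therefore sums to at most $\sum_{\ell'\le\ell}N_{\ell'}a_{\ell'}=\ell=\Delta_i$. Now the row-sum condition: row $i$ has $\|A_i\|_1=m/N_\ell\ge \frac{2}{5}\,\ell\ln m>\frac{\ln m}{3}\,\Delta_i$.

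\textbf{Remaining conditions.} It is left to check that $n\ge 2m$ and $n\ge\Delta_1$; the second follows from the first, since $\Delta_1=L\le m$. We estimate
$$n=\sum_{\ell=1}^L N_\ell\ \ge\ \frac{5m}{2\ln m}\,H_L-L\ \ge\ \frac{5m}{2\ln m}\bigl(\ln m-\ln\ln m-O(1)\bigr)-\frac{2m}{\ln m},$$
and the right-hand side equals $\tfrac52 m\,(1-o(1))$. This exceeds $2m$ once $m\ge m_0$ for a suitable absolute constant $m_0$.

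The only delicate point is this counting step. The floors in $N_\ell$ and the loss of $\ln\ln m$ in $H_L$ must not push $n$ below $2m$, and that is precisely why the constant $5/2$ is used rather than $2$. The slack between the guaranteed ratio $2/5$ and the required $1/3$ pays for the floors. We note that here $r=\Delta_1/\Delta_n\approx 2m/\ln m$, so $\log_2 r=\Theta(\ln m)$. Since every row exceeds $\frac{\ln m}{3}\Delta_i$, the count $m/t$ in Lemma~\ref{lem:rowbound} would fail at threshold $t\,\Theta(1)\Delta_i$ if the $\log_2 r$ factor were removed. This is the intended consequence.
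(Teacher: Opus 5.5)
Your proof is correct and follows the paper's construction. You stack levels of total column weight $1$, set $\Delta_i$ equal to the level index, and count rows via the harmonic sum. The only difference is that you use uniform fractional rows $\tfrac{1}{N_\ell}\mathbf{1}$ in place of the paper's disjoint $0/1$ blocks of size $(\ell\ln m)/3$; together with the constant $5/2$, this avoids the integrality issue for $(\ell\ln m)/3$ and gives the strict inequality, which the paper's version only attains with equality.
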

\begin{proof}
    We construct the matrix $A$ as a vertical concatenation of matrices $A_1, \ldots, A_k$, where the maximum column sum in each matrix $A_i$ is $1$:
    $$
A = \begin{bmatrix}
A_{k} \\[1ex]
A_{k-1} \\[1ex]
\vdots \\[1ex]
A_1
\end{bmatrix}.
$$
If we look at a row $r$ of matrix $A$ that is a part of a matrix $A_i$, then the maximum column sum in rows $\geq r$ is at most $i$. Let us set $\Delta_r = i$.

For each $i$, we design matrix $A_i$ to have $(i \cdot \ln m)/3$ ones in each row. We can make $A_i$ to have at least $\lfloor 3m/(i \cdot \ln m) \rfloor$ rows, so that ones in each row are in disjoint set of columns. Therefore, choosing $k=\lfloor m / \ln m \rfloor$ the total number of rows $n$ across all matrices is

$$
n \geq \sum_{i=1}^{k} \lfloor 3m/(i \cdot \ln m) \rfloor \geq \sum_{i=1}^{k} \left(3m/(i \cdot \ln m) - 1 \right) = m/\ln m \cdot 3\sum_{i=1}^{\lfloor m/ \ln m \rfloor} \frac{1}{i} - m/\ln m \geq 2 \cdot m
$$
for sufficiently large $m \geq m_0$.
\end{proof}

We rely on the following corollary from \citep{DBLP:conf/focs/LovettM12}.

\begin{corollary}[corollary from \citep{DBLP:conf/focs/LovettM12}]
\label{cor:lovett}
Let $A \in [0, 1]^{n \times m}$ be a matrix and let $c_1, \ldots, c_n \geq 0$ be such that $\sum_{j=1}^n \exp(-c_j^2/16) \leq m/16$. Let $\delta > 0$ be a small approximation parameter. Then there exists an efficient randomized algorithm which for any point $x \in [0, 1]^m$ with probability at least $0.1$ finds a point $x' \in [0,1]^m$ such that

\begin{itemize}
    \item[(i)] $|\langle x - x', A_i \rangle| \leq c_i \|A_i\|_2$.
    \item[(ii)] $x'_i \geq 1 - \delta$ or $x'_i \leq \delta$ for at least $m/2$ indices $i \in [m]$.
\end{itemize}

Moreover, the algorithm runs in time $\textit{poly}(n, m, \delta^{-1})$.
\end{corollary}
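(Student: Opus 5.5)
The plan is to derive the corollary from Theorem~\ref{thm:main-partial-coloring} by an affine change of variables that maps the cube $[0,1]^m$ onto $[-1,1]^m$. Set $y := 2x - \bm{1} \in [-1,1]^m$. We use the rows $v_i := A_i$ as the test vectors and keep the same thresholds $c_i$. The hypothesis $\sum_j \exp(-c_j^2/16) \le m/16$ is exactly the one required there. We also pass a modified parameter $\delta' := 2\delta$ to the partial coloring algorithm. If one wants $\delta' < 1$ small, it suffices to assume $\delta < 1/2$; this is the only case of interest, since for $\delta \ge 1/2$ property (ii) holds trivially for every coordinate.

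The algorithm of Theorem~\ref{thm:main-partial-coloring} returns, with probability at least $0.1$, a point $y' \in [-1,1]^m$. It satisfies $|\langle y-y', A_i\rangle| \le c_i\|A_i\|_2$ for all $i$, and $|y'_\ell| \ge 1-2\delta$ for at least $m/2$ indices $\ell$. Define $x' := (y' + \bm{1})/2 \in [0,1]^m$. Then $x - x' = (y-y')/2$, so
$$
|\langle x-x', A_i\rangle| = \tfrac12 |\langle y-y', A_i\rangle| \le \tfrac12 c_i \|A_i\|_2 \le c_i\|A_i\|_2,
$$
which gives (i), even with an extra factor of $2$ to spare. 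For (ii), take any index $\ell$ with $|y'_\ell| \ge 1-2\delta$. If $y'_\ell \ge 1-2\delta$, then $x'_\ell \ge 1-\delta$. If $y'_\ell \le -(1-2\delta)$, then $x'_\ell \le \delta$. This holds for at least $m/2$ indices.

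The running time is $O((m+n)^3 (2\delta)^{-2} \log(2nm/\delta))$, which is $\mathrm{poly}(n,m,\delta^{-1})$. The affine maps add only $O(m)$ work. The main point needing care is the bookkeeping of $\delta$ under the scaling, including the trivial regime $\delta \ge 1/2$. The entries of $A$ being in $[0,1]$ plays no role beyond $A_i \in \mathbb{R}^m$.
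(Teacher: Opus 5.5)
Your proposal is correct. The paper gives no explicit proof and simply cites this as a corollary of the Partial Coloring Lemma (Theorem~\ref{thm:main-partial-coloring}); your affine substitution $y = 2x - \bm{1}$, $x' = (y' + \bm{1})/2$ with approximation parameter $2\delta$ is exactly the intended derivation. Your handling of (i), which even gains a factor $1/2$, of (ii), and of the trivial regime $\delta \ge 1/2$ is sound.
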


We also need to use the following standard lemma.

\begin{lemma}[\cite{DBLP:journals/dam/BeckF81}]
    \label{lem:kerslide}
    Let $m > n > 0$ be integers, let $A \in [0, 1]^{n \times m}$ be a matrix, and $x \in [0, 1]^m$ be a point. Then, there exists a point $x' \in [0, 1]^m$, such that $A x = Ax'$ and all but $n$ coordinates of $x'$ are rounded (either $1$ or $0$).
\end{lemma}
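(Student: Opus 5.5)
Lemma~\ref{lem:kerslide} is the classical linear-algebra step in Beck--Fiala, and I would prove it by sliding along the kernel of $A$ restricted to the fractional coordinates, which can only stop once at most $n$ coordinates remain fractional. Start from $x' := x$, and let $F \subseteq [m]$ be the set of coordinates $j$ with $0 < x'_j < 1$. This invariant is maintained throughout: $x' \in [0,1]^m$ and $Ax' = Ax$. If $|F| \le n$, we are done.

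Otherwise $|F| > n$. Consider the submatrix $A^{F} \in [0,1]^{n \times |F|}$ formed by the columns indexed by $F$. It has more columns than rows, so there is a nonzero vector $z \in \mathbb{R}^{|F|}$ with $A^{F} z = 0$. Extend $z$ by zeros to a vector $\hat z \in \mathbb{R}^m$. Then $A(x' + \lambda \hat z) = Ax'$ for every real $\lambda$, and the coordinates outside $F$ stay fixed at their integral values.

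Each coordinate $j \in F$ satisfies $x'_j \in (0,1)$ strictly, so $x' + \lambda \hat z \in [0,1]^m$ holds for all $\lambda$ in some interval $[\lambda_-, \lambda_+]$ containing $0$ in its interior. Since $\hat z \ne 0$, this interval is bounded. Take $\lambda = \lambda_+$, the largest feasible value. At this value some coordinate $j \in F$ with $\hat z_j \neq 0$ reaches $0$ or $1$. Replace $x'$ by $x' + \lambda_+ \hat z$; the invariant still holds.

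After this update, the newly fixed coordinate leaves $F$, and coordinates outside $F$ never change. Hence $|F|$ strictly decreases at each step. After at most $m - n$ iterations we reach $|F| \le n$, so all but at most $n$ coordinates are rounded, as claimed. The hypothesis $m > n$ only ensures the statement is nontrivial.

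There is no real obstacle here. The only point needing care is that the step size is positive and finite: positivity follows because the coordinates in $F$ lie strictly inside $(0,1)$, and finiteness because $\hat z$ is nonzero on $F$.
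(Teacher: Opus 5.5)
Your proposal is correct: sliding along a kernel direction of the fractional columns keeps $Ax'$ fixed, and each step rounds at least one new coordinate without unrounding any, so after at most $m-n$ steps at most $n$ coordinates are fractional. The paper gives no proof of this lemma and only cites Beck--Fiala, and your kernel-sliding argument is the standard linear-algebra step from that reference.
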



Now, let us recall the statement of Theorem \ref{thm:logbound} and pose its proof. Our proof is based on the ideas that \citep{DBLP:conf/focs/LovettM12} use to prove their Theorem $2$.

\begin{theorem}[Restatement of Theorem \ref{thm:logbound}]
    There exists an absolute constant $C > 0$, such that the following holds. Let $n \geq \Delta_1 \geq \Delta_2 \geq \ldots \geq \Delta_n > 0$ be a sequence of reals.  Let $A \in [0, 1]^{n \times m}$ be a $(\Delta_i)$-structured matrix and $x_0 \in [0, 1]^m$ be a point, then there exists $x \in \{0, 1\}^m$, such that for all $1 \leq i \leq n$,
    $$
     |\langle A_i, x_0 - x \rangle| \leq C \log_2 (4n / \Delta_n) \cdot \sqrt{\Delta_i\log_2 (4\Delta_1 / \Delta_n)}.
    $$
\end{theorem}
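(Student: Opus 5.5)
We follow the iterated partial coloring scheme of \citep{DBLP:conf/focs/LovettM12}. Write $r = \Delta_1/\Delta_n$ and start from $x^{(0)} = x_0$. In round $t$ only the set $F_t$ of still-fractional coordinates matters, with $m_t = |F_t|$. Each round either invokes Corollary~\ref{cor:lovett} on the submatrix $A$ restricted to the columns $F_t$, which halves $m_t$ (up to $\delta$-rounding errors handled as in \citep{DBLP:conf/focs/LovettM12}), or, once $m_t$ is small, finishes with Lemma~\ref{lem:kerslide} and a trivial rounding. Restricting columns preserves the $(\Delta_i)$-structure, so Lemma~\ref{lem:rowbound} applies in every round with $m_t$ in place of $m$.

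\textbf{Thresholds in one round.} For a row $i$, use $\|A_i\|_2 \le \sqrt{\|A_i\|_1}$, since the entries lie in $[0,1]$. We want the per-round error $c_i\|A_i\|_2$ to be $O(\sqrt{\Delta_i \log_2(4r)})$.
\begin{itemize}
\item Call a row \emph{light} if $\|A_i|_{F_t}\|_1 \le 2(\log_2 r+2)\Delta_i$. For light rows take $c_i = O(1)$; this already gives error $O(\sqrt{\Delta_i\log_2(4r)})$.
\item Otherwise let $s\ge 1$ be minimal with $\|A_i|_{F_t}\|_1 \le 2\cdot 2^{s}(\log_2 r+2)\Delta_i$. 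By Lemma~\ref{lem:rowbound} there are at most $m_t/2^{s-1}$ such rows. Set $c_i = C'\,2^{-s/2}$, so that $c_i\|A_i\|_2 = O(\sqrt{\Delta_i\log_2(4r)})$ again.
\end{itemize}
We then check the condition $\sum_i \exp(-c_i^2/16)\le m_t/16$. For the heavy rows this is a geometric-type sum $\sum_s (m_t/2^{s-1})\exp(-C'^2 2^{-s}/16)$, which is $O(m_t)$ with a small constant. The difficulty is that the light rows number up to $n$, while only $m_t$ is available. So light rows must instead get $c_i = \Theta(\sqrt{\log(n/m_t)})$, which makes their error $O(\sqrt{\log(4n/\Delta_n)}\cdot\sqrt{\Delta_i\log_2(4r)})$. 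With this choice their total contribution $n\,e^{-c_i^2/16}$ is at most $m_t/32$.

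\textbf{Stopping point and summing.} Run partial coloring rounds while $m_t \ge \Delta_n$, say. There are at most $\log_2(m/\Delta_n)$ rounds. We may assume $m\le n$, since otherwise Lemma~\ref{lem:kerslide} first reduces to at most $n$ fractional coordinates with no error, so the number of rounds is $O(\log_2(4n/\Delta_n))$. Each round costs at most $O(\sqrt{\log(4n/\Delta_n)}\sqrt{\Delta_i\log_2(4r)})$, and summing gives something like $\log^{3/2}$, which overshoots the target $\log_2(4n/\Delta_n)\sqrt{\Delta_i\log_2(4r)}$ by a $\sqrt{\log}$ factor.

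\textbf{Main obstacle.} The hardest step is to fix this accounting so that the rounds contribute $O(1)$ each rather than $O(\sqrt{\log})$. The first thing to try is choosing $c_i = \Theta(\sqrt{\log(n/m_t)})$ only for rows whose mass on $F_t$ is at least comparable to $\Delta_i$. A row's mass restricted to $F_t$ should shrink in $\ell_1$ as the rounds proceed, because column sums below row $i$ are at most $\Delta_i$. Then $\|A_i|_{F_t}\|_2$ is better bounded by $\sqrt{\min(\|A_i|_{F_t}\|_1, m_t)}$, and $m_t$ halves each round. If this still only yields a $\sqrt{\log}$ factor per round, the fallback is to accept an $O(1)$ per-round bound for the at most $m_t$ rows of largest norm and an $\ell_2\le\sqrt{m_t}$ bound on the rest, whose geometric decay in $m_t$ makes their contribution sum like the last term.

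\textbf{Final cleanup.} When $m_t < \Delta_n$, apply Lemma~\ref{lem:kerslide} to the rows with nonzero mass on $F_t$. Then round the remaining fractional coordinates arbitrarily; this costs at most $m_t<\Delta_n\le\Delta_i$ per row, which is absorbed in the stated bound.
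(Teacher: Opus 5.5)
There are two genuine gaps. The first is the one you flag yourself: your thresholds only give $O(\log^{3/2}(4n/\Delta_n)\sqrt{\Delta_i\log_2(4r)})$, and neither remedy in your ``main obstacle'' paragraph closes it. Partial coloring gives no per-row control over which coordinates stay fractional, so $\|A_i|_{F_t}\|_1$ need not shrink. Bounding rows only via $\|A_i|_{F_t}\|_2\le\sqrt{m_t}$ fails in early rounds, where $\sqrt{m_t\log(n/m_t)}$ can be of order $\sqrt n\gg\sqrt{\Delta_i}$.

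The missing idea is that a light row's threshold should \emph{grow} as the row gets lighter, instead of being a fixed $O(1)$ or $\Theta(\sqrt{\log(n/m_t)})$. The paper takes, in every round and for every row, $c_i=C\sqrt{(\log_2 r+2)\Delta_i}/\|A_i|_{F_t}\|_2$, so every row pays exactly $C\sqrt{(\log_2 r+2)\Delta_i}$ per round. In your notation, this just means extending your heavy-row rule $c_i=C'2^{-s/2}$ to \emph{all} integers $s$, including $s\le 0$; rows vanishing on $F_t$ impose no constraint.

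Lemma~\ref{lem:rowbound} holds for every $t>0$, including $t<1$. So the rows with $\|A_i|_{F_t}\|_1\in(2^{s},2^{s+1}]\cdot(\log_2 r+2)\Delta_i$ number at most $2^{1-s}m_t$ for every $s\in\mathbb{Z}$. For $s=-u<0$ this count $2^{1+u}m_t$ is crushed by $e^{-c_i^2/16}=e^{-C'^2 2^{u}/16}$. Hence $\sum_i e^{-c_i^2/16}\le m_t\sum_{s\in\mathbb{Z}}2^{1-s}e^{-C'^2 2^{-s}/16}=O(m_t/C'^2)\le m_t/16$ for large $C'$, independently of $n$. Your worry that light rows ``number up to $n$'' disappears: the rows that are numerous are so light that they are essentially unconstrained. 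Over the $O(\log_2(4n/\Delta_n))$ rounds needed to reach $m_t\le\min(\Delta_n,n/4)$, this yields the target $O(\log_2(4n/\Delta_n)\sqrt{\Delta_i\log_2(4r)})$.

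The second gap is your final cleanup. Once $m_t<\Delta_n$, Lemma~\ref{lem:kerslide} gives nothing in general, since the rows with nonzero mass on $F_t$ can far outnumber $m_t$. Rounding the remaining coordinates arbitrarily then costs up to $m_t$, i.e.\ nearly $\Delta_n$, per row. This is not absorbed by a bound of order $\sqrt{\Delta_i}$ times logarithms: with all $\Delta_i=n/2$ the target is $O(\sqrt n)$, while your cleanup only guarantees an error below $n/2$.

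What is needed, and what the paper does, is to keep running partial colorings on these last $m'\le\min(\Delta_n,n/4)$ coordinates. Use uniform thresholds $c_i=8\sqrt{\ln(n/m_t)}$, so that $n(m_t/n)^4\le m_t/16$, together with the trivial bound $\|A_i|_{F_t}\|_2\le\sqrt{m_t}$. The per-round errors $8\sqrt{m_t\ln(n/m_t)}$ decay geometrically as $m_t$ halves and sum to $O(\sqrt{\Delta_n\ln(4n/\Delta_n)})$, which fits the bound since $\Delta_n\le\Delta_i$. Only the $\delta$-close coordinates are then rounded directly, and $\delta=\sqrt{\Delta_n}/(3n^2)$ makes that error negligible. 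Your ``fallback'' is essentially this device, but it is valid only in this final phase, not as a fix for the main one.
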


\begin{proof}[Proof of Theorem \ref{thm:logbound}]
Let $r = \Delta_1 / \Delta_n$. By Lemma \ref{lem:kerslide} we can round $m - n$ coordinates of $x_0$ if $m > n$. Therefore, let us assume that $m \leq n$.

Let us choose $c_i = C \cdot \sqrt{(\log_2 (r) + 2) \cdot \Delta_i}/{\|A_i\|_2}$ and show that $\sum_{i=1}^n \exp(-c_i^2/16) \leq m/16$ for $C$ to be chosen later. For each integer $s$, let $q_s$ denote the number of rows for which it holds

$$
2^s \cdot (\log_2(r) + 2) \cdot \Delta_i \leq ||A_i||_2^2 \leq 2^{s+1} \cdot (\log_2(r) + 2) \cdot \Delta_i.
$$
Note that $\|A_i\|_2^2 \leq \|A_i\|_1$ and, thus, by Lemma \ref{lem:rowbound}, we have $q_s \leq m/2^{s-1}$. Thus,
$$\sum_{i=1}^n \exp(-c_i^2/16) \leq \sum_{s=-\infty}^{\infty} q_s \cdot \exp\left(-\frac{C^2}{16 \cdot 2^{s+1}}\right) \leq$$
$$\leq m \cdot \sum_{s=-\infty}^{\infty} \exp\left(-\frac{C^2}{16 \cdot 2^{s+1}}\right)/2^{s-1} = $$
$$= m \cdot 2 \cdot \sum_{s=-\infty}^{\infty} 2^s \cdot \exp\left(-2^s \cdot \frac{C^2}{32}\right) \leq $$
$$\leq m \cdot 2 \cdot \left(\sum_{s=0}^{\infty} 2^s \cdot \exp\left(-2^s \cdot \frac{C^2}{32}\right) + \sum_{s=0}^{\infty} 2^{-s} \cdot \exp\left(- \frac{C^2}{32 \cdot 2^s}\right)\right) \leq m/16$$
for sufficiently large $C > 0$.

Now, apply Corollary \ref{cor:lovett} for $c_1, \ldots, c_n$ and $\delta = \sqrt{\Delta_n}/(3n^2)$, obtaining a vector $x_1$ that has at least $m/2$ coordinates nearly rounded and $|\langle x_1 - x_0, A_i \rangle| \leq c_i \|A_i\|_2 = C \cdot \sqrt{(\log_2 (r) + 2) \cdot \Delta_i}$. Let $I_1 \subseteq [m]$ denote the coordinates of $x_1$ that are not $\delta$-close to $0$ or $1$. Let us recursively apply this procedure to $(x_1)|_{I_1}$, obtaining a new point $x_2 \in [0, 1]^m$ with a set $I_2 \subseteq I_1$ of at most $m/4$ coordinates not $\delta$-close to $0$ or $1$. By iterating this at most $\log_2 (m/\Delta_n) + 2$ times, we get a sequence of points $x_1, x_2, \ldots, x_l$, for which $l \leq \log_2 (m/ \Delta_n) + 2$ and $x_l$ has all but $\min(\Delta_n, n/4)$ coordinates nearly rounded, denoted by $I_l$. By triangle inequality, for each $i$ we have
$$
|\langle x_l - x_0, A_i \rangle| \leq C \cdot (\log_2 (m / \Delta_n) + 2) \cdot \sqrt{(\log_2 (r) + 2) \cdot \Delta_i}.
$$
Now, let us nearly round the remaining $m' \leq \min(\Delta_n, n/4)$ coordinates. For this, we choose $c_i = 8 \cdot \sqrt{\ln(n/m')}$. Then, $\sum_{i=1}^n \exp(-c_i^2/16) = \sum_{i=1}^n \exp(-4 \cdot \ln(n/m')) =  n \cdot (m'/n)^4 \leq m'/16$. Applying Corollary \ref{cor:lovett} to the sequence $c_i$, $\delta=\sqrt{\Delta_n}/(3n^2)$, and point $(x_l)|_{I_l}$, we get a point $x_{l + 1} \in [0, 1]^m$ that has at most $m'/2$ coordinates $I_{l+1} \subseteq I_l$ not $\delta$-close to $0$ or $1$ and $|\langle x_l - x_{l+1}, A_i\rangle| \leq c_i \cdot \sqrt{m'} = 8 \sqrt{\ln(n/m')} \cdot \sqrt{m'}$ for every $i$. We iterate this procedure until all coordinates get nearly rounded. Denote $x_{l+l'}$ to be the resulting point. Then, by triangle inequality, for every $i$,

$$
|\langle x_l - x_{l+l'}, A_i\rangle| \leq 8 \cdot \sum_{i=0}^{\infty} \sqrt{\ln(2^i \cdot n/m')} \cdot \sqrt{m' / 2^i} \leq C' \cdot \sqrt{\ln(n/m') \cdot m'}
$$
for some constant $C' > 0$.

Let $x$ be $x_{l+l'}$ in which we round each coordinate to the nearest integer. Then, the error of the $i$-th row is estimated by

$$
|\langle x - x_0, A_i\rangle| \leq C \cdot (\log_2 (m / \Delta_n) + 2) \cdot \sqrt{(\log_2 (r) + 2) \cdot \Delta_i} + C' \cdot \sqrt{\ln(n/m') \cdot m'} + \frac{\sqrt{\Delta_n}}{3n} \leq
$$
$$
\leq C \cdot (\log_2 (m / \Delta_n) + 2) \cdot \sqrt{(\log_2 (r) + 2) \cdot \Delta_i} + C' \cdot \sqrt{\ln(n/\Delta_n) \cdot \Delta_n} + \frac{\sqrt{\Delta_n}}{3n} \leq
$$
$$
\leq (C + C' + 1) \cdot (\log_2 (n / \Delta_n) + 2) \cdot \sqrt{(\log_2 (r) + 2) \cdot \Delta_i} =
$$
$$
(C + C' + 1) \cdot \log_2 (4n / \Delta_n) \cdot \sqrt{\Delta_i\log_2 (4r)}.
$$
\end{proof}

In fact, in the proof of the upper bound for $\mathrm{PROP}$, we shall need a slight generalization of Theorem~\ref{thm:logbound} for a vertical concatenation of $(\Delta_i)$-structured matrices.

\begin{theorem}
\label{th:comb1}
    Let $n_1, \ldots, n_k$ and $m$ be positive integers. Let $A^1 \in [0, 1]^{n_1 \times m}, \ldots, A^k \in [0, 1]^{n_k \times m}$ be $(\Delta_j^1), (\Delta_j^2), \ldots, (\Delta_j^k)$-structured matrices respectively. Let $r$ be the maximum ratio $r = \max_{i, j, i', j'} \frac{\Delta^{i}_j}{\Delta^{i'}_{j'}}$, and let $d$ be the minimum $d = \min_{i, j} \Delta^i_j$. Then, for any $x_0 \in [0, 1]^m$, there exists $x \in \{0, 1\}^m$, such that for all $i=1 \ldots k$ and $j= 1 \ldots n_i$
    $$
    |\langle A_j^i, x_0-x \rangle| \leq \Xi((n_1 + \ldots + n_k) / d, r) \cdot \sqrt{k \cdot \Delta_j^i}.
    $$
\end{theorem}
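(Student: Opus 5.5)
The plan is to reduce Theorem~\ref{th:comb1} to Theorem~\ref{thm:logbound} by merging the $k$ structured matrices into a single $(\Delta_i)$-structured matrix of height $N = n_1+\ldots+n_k$. The only thing lost in the merge is a factor $k$ in the column-sum budgets, which explains the $\sqrt{k}$ in the conclusion. The construction is as follows. Collect all rows $A^i_j$ and sort them in non-increasing order of their parameters $\Delta^i_j$, breaking ties arbitrarily. This gives a matrix $B\in[0,1]^{N\times m}$ whose rows are indexed $1,\ldots,N$ in this order, where row $t$ equals $A^{i(t)}_{j(t)}$. Assign to row $t$ the parameter $\Gamma_t := k\cdot \Delta^{i(t)}_{j(t)}$. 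By construction $\Gamma_1\ge\ldots\ge\Gamma_N$.

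\textbf{Structuredness.} The key step is to check that $B$ is $(\Gamma_t)$-structured. Fix $t$ and a column $c$. The rows $t,\ldots,N$ of $B$ split according to the group $i$ they come from. I claim that, within each group $i$, the rows appearing among $t,\ldots,N$ form a suffix $A^i_{j_i},\ldots,A^i_{n_i}$ of $A^i$. This holds because the $\Delta^i_j$ are non-increasing in $j$, so the sorting order can be chosen to be consistent with the original order inside each group: we break ties by the original index $j$. Then the column sum of that suffix is at most $\Delta^i_{j_i}$. Moreover $\Delta^i_{j_i}\le \Delta^{i(t)}_{j(t)}$, since every row at or after position $t$ has parameter at most that of row $t$. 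Summing over the $k$ groups bounds the column sum of $B_{t\ldots N}$ by $k\,\Delta^{i(t)}_{j(t)}=\Gamma_t$. The tie-breaking consistency is the only subtle point here.

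\textbf{Applying Theorem~\ref{thm:logbound}.} The parameters now satisfy $\Gamma_1/\Gamma_N\le r$ and $\Gamma_N \ge k d$. Theorem~\ref{thm:logbound} gives $x\in\{0,1\}^m$ with
\[
|\langle B_t, x_0-x\rangle| \le C\log_2(4N/\Gamma_N)\sqrt{\Gamma_t\log_2(4r)} \le C\log_2(4N/d)\sqrt{\log_2(4r)}\cdot\sqrt{k\,\Delta^{i}_j},
\]
using $\Gamma_N\ge d$. This is exactly $\Xi(N/d,r)\sqrt{k\Delta^i_j}$.

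\textbf{The hypothesis $N\ge\Gamma_1$.} This is the main obstacle. We need $N\ge k\Delta^1_{\max}$, which need not hold. First, we may replace each $\Delta^i_j$ by $\min(\Delta^i_j, n_i-j+1)$ without loss: this is still a valid structural bound, since a column of a $[0,1]$ matrix restricted to $n_i-j+1$ rows has sum at most $n_i-j+1$, and it only decreases the right-hand side we must prove. This gives $\Gamma_t\le k\,n_{i(t)}$, which is still possibly larger than $N$. I would handle this by inspecting the proof of Theorem~\ref{thm:logbound}. There the assumption $n\ge\Delta_1$ is used only to absorb the $\log_2(m/\Delta_n)$ and final-phase terms into $\log_2(4n/\Delta_n)$ after reducing to $m\le n$ via Lemma~\ref{lem:kerslide}. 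Lemma~\ref{lem:rowbound} and the partial coloring steps go through for any $\Delta$'s. So the same argument yields the bound with $\log_2(4N/\Gamma_N)$ replaced by $\max(\log_2(4N/\Gamma_N),1)\cdot O(1)$, which is at most $\Xi(N/d,r)/C$ up to the constant in $C$.

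Alternatively, when $\Gamma_N> N$, the trivial bound suffices: $|\langle B_t,x_0-x\rangle|\le \|B_t\|_1\le\ldots$. Otherwise one can cap the parameters at $N$, apply the theorem, and check that capping only affects rows where the trivial bound already wins. I would try the ``re-run the proof'' route first, since the constant $C$ is absolute and can simply be enlarged.
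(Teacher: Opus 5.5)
Your core argument is the paper's proof. You sort all rows by $\Delta^i_j$, show that the stacked matrix is $(k\Delta^{i(t)}_{j(t)})_t$-structured by summing the contributions of the $k$ groups, and invoke Theorem~\ref{thm:logbound}, using $\Gamma_N=kd\ge d$ and the fact that the factor $k$ cancels in the ratio. The tie-breaking you single out as the subtle point is not needed. The rows of group $i$ below position $t$ all lie in the suffix $\{q:\Delta^i_q\le\Delta^{i(t)}_{j(t)}\}$ of $A^i$, whose column sums are at most $\Delta^{i(t)}_{j(t)}$. Since the entries are nonnegative, any subset of that suffix does no worse; this is how the paper argues.

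You are right that Theorem~\ref{thm:logbound} needs $N\ge\Gamma_1=k\max_{i,j}\Delta^i_j$, and the paper's proof never checks this. However, parts of your repair are off:
\begin{itemize}
\item The replacement $\Delta^i_j\mapsto\min(\Delta^i_j,n_i-j+1)$ is not without loss. It can decrease $d$ and increase $r$, which enlarges $\Xi(N/d,r)$.
\item The trivial bound does not rescue the capped rows. After Lemma~\ref{lem:kerslide} it only gives error $N$, which is too weak when $k\Delta^i_j$ is close to $N$.
\end{itemize}
Your capping alternative works on its own. Set $\Gamma'_t:=\min(\Gamma_t,N)$. This is non-increasing and still a valid structure, since a column restricted to at most $N$ rows with entries in $[0,1]$ sums to at most $N$. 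It satisfies $\Gamma'_1\le N$, $\Gamma'_1/\Gamma'_N\le r$ and $\Gamma'_t\le\Gamma_t$. Theorem~\ref{thm:logbound} then yields the desired bound with the factor $\log_2(4N/\Gamma'_N)=\max\{\log_2(4N/(kd)),2\}$, which is at most $\log_2(4N/d)$ exactly when $d\le N$.

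Some hypothesis of this kind is unavoidable, because the statement as written is false without it. Take $k=n_1=m=1$, $A^1=[1]$, $\Delta^1_1=4$ and $x_0=1/2$. The right-hand side is $\Xi(1/4,1)\cdot 2=0$, while every $x\in\{0,1\}$ has error $1/2$. So your claim that re-running the proof gives a bound of at most $\Xi(N/d,r)$ up to constants can hold only when $d$ is at most a constant multiple of $N$. This is harmless for Theorem~\ref{thm:delta-struct-division}, where $N$ is replaced by $n_1k_t\ge d$. The capped argument covers that case, since it works with $N$ replaced by any $N'\ge\max(N,d)$.
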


\begin{proof}
    Let $n = n_1 + \ldots + n_k$ denote the total number of rows in all matrices. Let us sort all values of $\Delta^i_j$ in decreasing order: $\Delta_{j_1}^{i_1} \geq \Delta_{j_2}^{i_2} \geq \ldots \geq \Delta_{j_n}^{i_n}$. Consider a matrix $A$ obtained from matrices $A^1, \ldots, A^k$ by stacking vertically their rows in the order $(i_1, j_1), (i_2, j_2), \ldots, (i_n, j_n)$
   $$
A = \begin{bmatrix}
A_{j_1}^{i_1} \\[1ex]
A_{j_2}^{i_2} \\[1ex]
\vdots \\[1ex]
A_{j_n}^{i_n}
\end{bmatrix}
$$

We claim that this is a $(k \cdot \Delta_{j_l}^{i_l})_{l=1}^n$-structured matrix. Indeed, consider $l$-th row of this matrix. Since we sorted $\Delta_{j}^{i}$ in decreasing order, only rows that have $\Delta$ less or equal to $\Delta_{j_l}^{i_l}$ are below $l$-th row. Let us fix one of the initial matrices $r \in \{1, 2, \ldots, k\}$ and look only at the subset of its rows $I = \{q ~|~\Delta_{q}^r \leq \Delta_{j_l}^{i_l}\}$. By definition, we know that the maximum sum in a column in $A^r_I$ is at most $\Delta_{j_l}^{i_l}$. Therefore, the maximum sum in a column of the submatrix
$$
\begin{bmatrix}
A_{j_l}^{i_l} \\[1ex]
A_{j_{l+1}}^{i_{l+1}} \\[1ex]
\vdots \\[1ex]
A_{j_n}^{i_n}
\end{bmatrix}
$$

can be bounded by $k \cdot \Delta_{j_l}^{i_l}$ by summing over all $k$ matrices. Now, applying Theorem \ref{thm:logbound} to matrix $A$, we obtain the desired bound of $\Xi(n/d, r) \cdot \sqrt{k \cdot \Delta_{j_l}^{i_l}}$ on linear discrepancy.
\end{proof}

\section{Upper Bound in Theorem~\ref{thm:main2}} \label{sec_upper_bound}
Recall, that the goal is to prove
    $$\disc^k_>(n_1, \ldots, n_k) = O \left( \sqrt{\frac{n}{k}  \cdot \left( \ln^{3/2} \left(n_1 \cdot \frac{k}{n}\right)  + 1 \right) }\right) .$$

Throughout this section, we  use group fair division interpretation of $\disc(A^1, \ldots, A^k)$. Let $n_1 \geq \ldots \geq n_k \geq 1$ be the group sizes, where $n = n_1 + \ldots + n_k$ is the total number of agents. The matrices $A^1 \in [0, 1]^{n_1 \times m}, \ldots, A^k \in [0, 1]^{n_k \times m}$ denote agents utilities, and the goal is to find a partition $v_1, \ldots, v_k \in \{0, 1\}^m,  ~\sum v_i = \bm{1}$, such that for any $i, j$ we have $$\langle A^i_j,  v_i\rangle \geq \langle A^i_j, \bm{1}/k \rangle - c,$$ with $c$ as small as possible.

The following theorem describes a partition algorithm that control the discrepancy of rows in terms of the $(\Delta_j)$-structure of the group matrices.

\begin{theorem}
    \label{thm:delta-struct-division}
    There exists an absolute constant $C' > 0$, such that the following holds. Suppose agents valuations are presented by matrices $A^1 \in [0, 1]^{n_1 \times m}, \ldots, A^k \in [0, 1]^{n_k \times m}$ that are $(\Delta_j^1), (\Delta_j^2), \ldots, (\Delta_j^k)$-structured, respectively. Let $r$ upper bound the ratio between $\Delta^i_j$: $$\max_{i, j, i', j'} \frac{\Delta^{i}_{j}}{\Delta^{i'}_{j'}} \leq r.$$ 
    Let $d \leq n_1$ be the lower bound on the minimum value of $\Delta:$ $d \leq \min_{i, j} \Delta^i_j$. Then, there is an allocation $v_1, \ldots, v_k \in \{0, 1\}^m$, $\sum v_i = \bm{1}$, such that for any $i \in [k]$ and $j \in [n_i]$ we have
    $$
    |\langle A^i_j, v_i \rangle - \langle A^i_j,\bm{1} / k \rangle| \leq C' \cdot \Xi(n_1/d, r) \cdot \sqrt{\Delta^i_j}
    $$
    \end{theorem}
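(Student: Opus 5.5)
The plan is to build the allocation by a recursive binary splitting of the set of colors (groups). We apply Theorem~\ref{th:comb1} at every internal node of the recursion tree. The point is that an error made at a node with $s$ colors is shared among about $s/2$ groups afterwards. As a result, the $\sqrt{s}$ loss in Theorem~\ref{th:comb1} turns into a factor $1/\sqrt{s}$, and these factors sum geometrically.

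\textbf{The recursion.} A node consists of a set $S\subseteq[k]$ of colors, $|S|=s$, together with a bundle $w\in\{0,1\}^m$ of goods currently reserved for $S$. The root is $S=[k]$, $w=\bm 1$.
\begin{itemize}
\item If $s\ge 2$, split $S=S_1\sqcup S_2$ with $|S_1|=s_1=\lceil s/2\rceil$ and $|S_2|=s_2=\lfloor s/2\rfloor$.
\item Take the fractional point $x_0=(s_1/s)\,w$.
\item Consider only the rows of groups $i\in S$, with columns outside $w$ zeroed out. These matrices are still $(\Delta^i_j)$-structured, since zeroing entries does not increase column sums.
\item Apply Theorem~\ref{th:comb1} to get $x\in\{0,1\}^m$, and set $x\le w$ (coordinates outside $w$ are rounded trivially). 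Bundle $x$ goes to $S_1$ and $w-x$ goes to $S_2$.
\item Recurse on both children. Each leaf $\{i\}$ gives $v_i=w$, and $\sum_i v_i=\bm 1$ holds automatically.
\end{itemize}
There are at most $\sum_{i\in S}n_i\le s\,n_1$ rows at the node. So the rounding error for row $(i,j)$ is at most $\Xi(s n_1/d,r)\sqrt{s\,\Delta^i_j}$. This holds for both children, because $\langle A,w-x\rangle-\tfrac{s_2}{s}\langle A,w\rangle=-(\langle A,x\rangle-\tfrac{s_1}{s}\langle A,w\rangle)$.

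\textbf{Induction.} Let $E(s)$ be the least constant such that, at any node with $s$ colors and any bundle $w$, every agent $(i,j)$ with $i\in S$ ends with $|\langle A^i_j,v_i\rangle-\langle A^i_j,w\rangle/s|\le E(s)\sqrt{\Delta^i_j}$. Take agent $i\in S_1$. Its target at the child is $\langle A,x\rangle/s_1$, which differs from $\langle A,w\rangle/s$ by at most $\Xi(s n_1/d,r)\sqrt{s\Delta}/s_1$. This gives
\[
E(s)\le \max_{t\in\{s_1,s_2\}}E(t)+\frac{\sqrt{s}}{\lfloor s/2\rfloor}\,\Xi(s n_1/d,r),\qquad E(1)=0 .
\]
We have $\log_2(4sn_1/d)\le\log_2(4n_1/d)+\log_2 s\le \log_2(4n_1/d)(1+\log_2 s)$, using $d\le n_1$. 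Hence $\Xi(sn_1/d,r)\le (1+\log_2 s)\,\Xi(n_1/d,r)$. Unrolling along a root-to-leaf path, the sizes at least halve at each step. So
\[
E(k)\le \Xi(n_1/d,r)\sum_{\ell\ge 0} O\!\left(\frac{(1+\ell)}{2^{\ell/2}}\right)=C'\,\Xi(n_1/d,r).
\]
At the root, $w=\bm 1$ and $s=k$, which is exactly the claimed bound.

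\textbf{Expected obstacle.} The only real obstacle is the bookkeeping that keeps the final dependence at $\Xi(n_1/d,r)$ rather than $\Xi(n/d,r)$, and free of any $\sqrt{k}$. This is handled by the two observations above:
\begin{itemize}
\item the $\sqrt{s}$ from Theorem~\ref{th:comb1} is divided by $s_1\approx s/2$;
\item the extra $\log s$ inside $\Xi$ is absorbed by the geometric decay.
\end{itemize}
A minor point is the unbalanced split when $s$ is odd. There $s_1/s\ne 1/2$, but $s_1,s_2\ge\lfloor s/2\rfloor$, so the recurrence is unaffected. The depth of the tree is $\lceil\log_2 k\rceil$.
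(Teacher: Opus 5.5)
Your proposal is correct and follows essentially the same route as the paper: recursive halving of the set of groups, Theorem~\ref{th:comb1} at each node with error $\Xi(sn_1/d,r)\sqrt{s\,\Delta^i_j}$, and normalization by the child size, which gives the $\sqrt{s}/\lfloor s/2\rfloor$ factor. A triangle-inequality sum along the root-to-leaf path then absorbs the extra $\log s$ in $\Xi$ by geometric decay; your multiplicative bound $\Xi(sn_1/d,r)\le(1+\log_2 s)\,\Xi(n_1/d,r)$ plays the role of the paper's additive identity $\Xi(ab,r)=\Xi(a,r)+C\sqrt{\log_2(4r)}\log_2 b$. One small imprecision: child sizes only satisfy $\lceil s/2\rceil\le\tfrac{2}{3}s$ for $s\ge 3$ rather than exact halving (e.g.\ $3\to 2$), but this still gives the geometric decay your sum needs.
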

\begin{proof}
The role of this theorem is to provide a high-level partition of our goods. 
Let us recursively divide objects between groups. If $k = 1$, we give all the objects to the only group, incurring zero error.  For $k \geq 2$, we divide all groups into two parts. The first part consists of the first $\lfloor k/2 \rfloor$ groups, and the second part consists of the remaining $\lceil k/2 \rceil$ groups. We apply Theorem \ref{th:comb1} with $x_0 = \bm{1} \cdot \frac{\lfloor k/2 \rfloor}{k}$ to obtain a rounded vector $x \in \{0, 1\}^m$. We recursively divide objects that correspond to ones in $x$ between the first $\lfloor k/2 \rfloor$ groups and divide objects that correspond to zeros in $x$ between the remaining $\lceil k/2 \rceil$ groups. Let $v_i$ be the indicator vector of the set of objects that the $i$-th groups receives at the end of this division.

This division procedure can be viewed as a binary tree. In every node of the tree, we have a segment of groups between which we divide the set of objects assigned to this node. If this is a non-leaf node, then some objects go to the left child and then rest of the objects go to the right child.

For the $j$-th agent from the $i$-th group, let us estimate his evaluation of the goods received. Take the path $P$ from the root of the binary tree to the leaf which corresponds to the $i$-th group. Define $k_t$ to be the number of groups that participated in the division in $t$-th node of path $P$. We have $k_1 = k$ and $k_l = 1$ where $l$ is the number of nodes in $P$. Moreover, we also have $k_{t+1}$ to be equal either to $\lfloor k_t/2\rfloor$ or $\lceil k_t/2 \rceil$ for every $1 \leq t \leq l - 1$.

For $t=1 \ldots l$, let $x_t \in \{0, 1\}^m$ denote the indicator of the set of goods that was divided in the $t$-th node of the path $P$. We have $x_1 = \bm{1}$ and $x_l = v_i$. Let us check that

\begin{equation}
\label{eq:nodebound}
\left| \langle A^i_j, x_{t+1} \rangle - \langle A_j^i,  x_t \frac{k_{t+1}}{k_t} \rangle \right| \leq \Xi(n_1 \cdot k_t/d, r) \cdot \sqrt{k_t \cdot \Delta_j^i}.
\end{equation}

To divide objects in the $t$-th node, we apply Theorem \ref{th:comb1} to the submatrix of $A$ narrowed down to rows and columns that correspond to agents and objects that participate in this division. The theorem gives us a vector $x''$. Let $x' \in \{0, 1\}^m$ be the vector $x''$ padded with zeros for objects that did not participate in this division.  By guarantees of Theorem \ref{th:comb1}, $x'$ satisfies

$$
\left| \langle A^i_j, x' \rangle - \langle A_j^i,  x_t \frac{\lfloor k_t/2 \rfloor}{k_t} \rangle \right| \leq \Xi(n_1 \cdot k_t/d, r) \cdot \sqrt{k_t \cdot \Delta_j^i}.
$$

Here we used that the total number of agents in groups that participated in the division in the $t$-th node of the path $P$ can be bounded by $n_1 \cdot k_t$, since $n_1$ is the largest number of agents in any group.

If the $i$-th group is in the first $\lfloor k_t/2\rfloor$ groups in this division, then we have $x_{t+1} = x'$, concluding (\ref{eq:nodebound}) in this case. If it is among the remaining $\lceil k_t / 2 \rceil$ groups, then we have $x_{t+1} = x_t - x'$. Since $1 =  \frac{\lfloor k_t / 2\rfloor + \lceil k_t / 2\rceil }{k_t}$, we have

$$
\left| \langle A^i_j, x_{t+1} \rangle - \langle A_j^i,  x_t \frac{k_{t+1}}{k_t} \rangle \right| = \left| \langle A^i_j, x_t - x'\rangle - \langle A_j^i,  x_t \frac{\lceil k_t / 2 \rceil}{k_t} \rangle \right| =
$$
$$
=  \left| \langle A^i_j ,x_t \frac{\lfloor k_t / 2 \rfloor}{k_t} \rangle - \langle A_j^i,  x'  \rangle \right| \leq \Xi(n_1 \cdot k_t/d, r) \cdot \sqrt{k_t \cdot \Delta_j^i},
$$
which concludes (\ref{eq:nodebound}).

By the triangle inequality, we have:

\begin{align*}
\left| \langle A^i_j, x_{l} \rangle - \left\langle A_j^i, \boldsymbol{1} \frac{1}{k} \right\rangle \right|
&\leq \left| \left\langle A^i_j, x_{l} \right\rangle - \left\langle A_j^i, \frac{x_{l-1}}{k_{l-1}} \right\rangle \right| + \dots + \left| \left\langle A^i_j, \frac{x_{2}}{k_{2}} \right\rangle - \left\langle A_j^i, \frac{x_{1}}{k_{1}} \right\rangle \right| \\
&\overset{\eqref{eq:nodebound}}{\leq} \sqrt{\Delta^i_j} \cdot \left( \Xi \left(\frac{n_1 \cdot k_{l-1}}{d}, r \right) \cdot \frac{\sqrt{k_{l-1}}}{k_l} + \dots + \Xi \left(\frac{n_1 \cdot k_{1}}{d}, r \right) \cdot \frac{\sqrt{k_{1}}}{k_{2}} \right) \\
&\leq \sqrt{\Delta^i_j} \cdot \Xi \left(\frac{n_1}{d}, r\right) \left( \frac{\sqrt{k_{l-1}}}{k_l} + \dots + \frac{\sqrt{k_{1}}}{k_{2}} \right) \\
&\quad + \sqrt{\Delta^i_j} \cdot C \cdot \sqrt{\log_2 (4r)} \left( \frac{\log_2 (k_{l-1}) \cdot \sqrt{k_{l-1}}}{k_l} + \dots + \frac{\log_2(k_1) \cdot \sqrt{k_{1}}}{k_{2}} \right)
\end{align*}

In the last inequality, we used \eqref{eq_xi_def} and the identity $\Xi(a \cdot b, r) = \Xi(a, r) + C \cdot \sqrt{\log_2(4r)} \cdot \log_2 b$. 

We have $k_{t+1} \le  \lceil k_t / 2 \rceil\le \frac 23k_t$ for $k_t\ge 3$. Thus the expresions 
$$\left( \frac{\sqrt{k_{l-1}}}{k_l} + \dots + \frac{\sqrt{k_{1}}}{k_{2}} \right) \ \ \text{and}\ \  \left( \frac{\log_2 (k_{l-1}) \cdot \sqrt{k_{l-1}}}{k_l} + \dots + \frac{\log_2(k_1) \cdot \sqrt{k_{1}}}{k_{2}} \right)$$ can be upper bounded by a constant. Since $n_1 \geq d$, we have $\Xi(\frac{n_1}{d}, r) \geq C \cdot \sqrt{\log_2(4r)}$. Therefore, the final expression can be bounded by

$$
 C' \cdot \Xi \left( \frac{n_1}{d}, r \right) \cdot \sqrt{\Delta^i_j}
$$

for some constant $C' > 0$.
\end{proof}

    Let $A^1 \in [0, 1]^{n_1 \times m}, \ldots, A^k \in [0, 1]^{n_k \times m}$ be agents valuations of objects. Note that $A^i$ is an $(n_i, n_i, \ldots, n_i)$-structured matrix since it has $n_i$ rows. If we directly apply Theorem \ref{thm:delta-struct-division} with $\Delta^i_j = n_i$, then we obtain a division of goods where the agents from the $i$-th group are at most $$C' \cdot \Xi(n_1/n_k, n_1/n_k) \cdot \sqrt{n_i} = \tilde O(\sqrt{n_i})$$ unsatisfied. This is an acceptable error if the sizes of all groups are equal $n_1 = n_2 = \ldots = n_k$. However, in an unbalanced case where, for example, $n_1$ is large and $n_2 = n_3 = \ldots = n_k = 1$, the algorithm in Theorem~\ref{thm:delta-struct-division} is disadvantageous for the large group and, as a result, gives an error of $\Omega(\sqrt{n_1})$, which is too large.

    To deal with this issue, we show that it is possible to rebalance the advantage/disadvantage that the groups get when applying Theorem~\ref{thm:delta-struct-division} by  \textit{gifting}, or simply assigning, a carefully chosen amount of  goods to the larger groups. 

    \begin{theorem}
        \label{thm:gifting}
        Let $C'' > 0$. Let $A^1 \in [0, 1]^{n_1 \times m}, \ldots, A^k \in [0, 1]^{n_k \times m}$ be agents valuations of objects. Then, there exists a partial assignment of goods to groups $v_1, \ldots, v_k \in \{0, 1\}^m$, $\sum v_i\leq \bm{1})$, such that the following properties are satisfied. Let $l = \langle \bm{1}, v_1 + \ldots + v_k \rangle$ denote the number of assigned goods. Let the matrix $\hat{A}^i  \in [0, 1]^{n_i \times (m - l)}$ be obtained from $A^i$ by removing columns that correspond to the assigned items (the ones in $v_1, \ldots, v_k$). Let $\Upsilon := \Xi(\frac{n_1 k}{n}, \frac{n_1 k}{n})$ where $n = n_1 + \ldots + n_k$. Then,
\begin{enumerate}
    \item $l \leq C''  \cdot \sqrt{\Upsilon \cdot n \cdot k}$.
    \item For every group $i$, there is a permutation of agents $p_1^i, \ldots, p_{n_i}^i$ in this group and values $n_1 \geq \Delta^i_1 \geq \ldots \geq \Delta^i_{n_i} \geq n/(\Upsilon \cdot k)$, such that the matrix $B^i$ obtained from rearranging rows of $\hat{A}^i$ according to the permutation $p^i$
    $$
    B^i = \begin{bmatrix}
\hat{A}^i_{p_1^i} \\[1ex]
\hat{A}^i_{p_2^i} \\[1ex]
\vdots \\[1ex]
\hat{A}^i_{p_{n_i}^i}
\end{bmatrix}
    $$
     is $(\Delta^i_{p_1}, \Delta^i_{p_2}, \ldots, \Delta^i_{p_{n_i}})$-structured and for every $j \in [n_i]$,
    $$
    \max\left(\langle B^i_j, v_i \rangle, C'' \cdot \sqrt{\Upsilon \cdot \frac{n}{k}} \right) \geq  C'' \cdot \Upsilon \cdot \sqrt{\Delta^i_j}.
    $$
    \end{enumerate}
    \end{theorem}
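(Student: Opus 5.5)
The plan is to construct the gifts greedily, group by group, in a peeling procedure that builds each permutation $p^i$ from the top down. Throughout, write $D_0 := n/(\Upsilon k)$ and $T(D) := C''\Upsilon\sqrt{D}$. Note that $T(D_0) = C''\sqrt{\Upsilon n/k}$, so any agent assigned $\Delta^i_j \le D_0$ satisfies item (2) through the second term of the maximum, whatever gifts it receives.

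First, dispose of the small groups. If $n_i \le D_0$, set $\Delta^i_j := D_0$ for all $j$ and give nothing to group $i$. The structure condition holds because entries lie in $[0,1]$, so every column sum of $\hat A^i$ is at most $n_i \le D_0$; also $D_0 \le n_1$. Since $\sum_i n_i = n$, at most $\Upsilon k$ groups have $n_i > D_0$, and only these receive gifts.

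For a large group $i$, maintain a set $S$ of not-yet-placed agents, initially all of group $i$. Run over dyadic levels $D = n_1, n_1/2, n_1/4, \ldots$ down to $D_0$, keeping the invariant that at the start of level $D$ every remaining good has column sum over the rows $S$ at most $2D$ (trivially true at $D=n_1$). At level $D$, proceed as follows.
\begin{enumerate}
\item While some unassigned good has column sum over $S$ larger than $D$, gift that good to group $i$.
\item Whenever an agent in $S$ has accumulated gift value at least $T(2D)$, remove it from $S$, place it next in the permutation (below the agents already placed), and set its $\Delta := 2D$.
\end{enumerate}
When the loop ends, every remaining column has $S$-sum at most $D$, so the invariant holds for level $D/2$. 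Agents still in $S$ after the last level are appended with $\Delta := D_0$ (or $2D_0$, up to constants).

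The $(\Delta_j)$-structure then holds. Each agent placed with $\Delta = 2D$ was removed while all remaining columns had $S$-sum at most $2D$, and the rows below it are a subset of that $S$. Gifted columns are deleted in $\hat A^i$, so they do not count. Monotonicity of the $\Delta$'s is automatic from the level order. Item (2) holds by construction: placed agents have gift value at least $T(\Delta)$, and leftover agents fall under the $D_0$ case.

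The main obstacle is item (1), the bound on the total number $l$ of gifts. The accounting I would use is the following. Every good gifted at level $D$ delivers more than $D$ units of value to agents currently in $S$. An agent absorbs at most $T(2D')+1$ in total before it is removed at its level $D'$. An agent that survives a level absorbs less than $T(2D)$ at that level, which adds a per-level cost. Hence the number of gifts at level $D$ is at most roughly $\bigl(|\text{removed at }D|\cdot T(2D) + |S|\cdot T(2D)\bigr)/D$.

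Summed naively with $|S| \le n_i$, this is dominated by the lowest level and gives about $n_i\Upsilon/\sqrt{D_0}$. Summed over groups, that is $\Upsilon^{3/2}\sqrt{nk}$, a factor $\Upsilon$ too large. To close the gap I would try two refinements. The first is to charge each agent only once overall, measuring its absorbed value against its final threshold rather than per level; the thresholds $T(2D)$ grow geometrically in $D$, so the per-level charges telescope. The second is to bound the number of agents still in $S$ at low levels by an $\ell_1$ argument in the spirit of Lemma~\ref{lem:rowbound}: an agent whose row has small $\ell_1$-mass can be placed at a lower $\Delta$ for free. If the count still loses a $\sqrt{\Upsilon}$ factor, I would fall back to a smaller threshold, for instance placing agents at value $C''\sqrt{\Upsilon D}$ and checking whether the later application of Theorem~\ref{thm:delta-struct-division} tolerates this. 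The precise choice of thresholds in this counting step is what I expect to require the most care.
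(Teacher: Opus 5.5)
Your construction is the paper's gifting algorithm in dyadic form: greedily gift the good with the largest column sum over still-active agents, retire an agent once its gift value reaches $C''\Upsilon\sqrt{\lambda}$, and list agents in retirement order. Your arguments for the $(\Delta_j)$-structure and for item (2) are fine up to bookkeeping. Cap $\Delta$ at $n_1$, and let the last level be exactly $D=D_0$, so that leftover agents get $\Delta=D_0$ rather than $2D_0$; the latter would cost a factor $\sqrt2$ in item (2).

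The gap is item (1), and it cannot be closed, because items (1) and (2) are incompatible as stated. Take $k$ groups of equal size $q=n/k$, all valuations equal to $1$, and $m>C''\sqrt{\Upsilon nk}$. Then $n_1k/n=1$, so $\Upsilon=\Xi(1,1)=2\sqrt2\,C>1$, since $C$ is a large absolute constant. If (1) held, some column would survive in every $\hat A^i$, forcing $\Delta^i_1\ge q$. Since $C''\sqrt{\Upsilon q}<C''\Upsilon\sqrt q$, item (2) for $j=1$ then forces $\|v_i\|_1\ge C''\Upsilon\sqrt q$ for every $i$. Summing gives $l\ge C''\Upsilon\sqrt{nk}>C''\sqrt{\Upsilon nk}$, contradicting (1).

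The paper's proof only appears to establish (1) because of a slip at the end of its counting claim. From $\sum_{i=1}^{n_s}C''\Upsilon/\sqrt{\lambda_T}$ with $\lambda_T\ge n/(\Upsilon k)$ one gets $C''n_s\Upsilon\sqrt{\Upsilon k/n}$, not $C''n_s\sqrt{\Upsilon k/n}$; a factor $\Upsilon$ is dropped. So your naive count of order $\Upsilon^{3/2}\sqrt{nk}$ is exactly what the paper's double counting proves.

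That order is also what item (2) genuinely demands. Take one group of size $n_1\gg k$ whose matrix is block-diagonal with all-ones blocks of $\lceil 2n/(\Upsilon k)\rceil$ agents, and let the other groups be singletons. Each block needs at least $C''\Upsilon\sqrt{2n/(\Upsilon k)}$ gifts of its own goods, which totals about $\Upsilon^{3/2}\sqrt{nk}$, with $\Upsilon\approx C\log_2^{3/2}(4k)$ growing.

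Hence none of your refinements can succeed:
\begin{itemize}
\item Charging each agent once against its final threshold is precisely the paper's argument, and its total is dominated by the bottom level anyway.
\item The $\ell_1$ idea is vacuous for all-ones rows.
\item Lowering the threshold to $C''\sqrt{\Upsilon\Delta}$ changes item (2), and it no longer cancels the roughly $\Upsilon\sqrt{\Delta}$ error of Theorem~\ref{thm:delta-struct-division}.
\end{itemize}

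What your argument does prove is item (1) with $\Upsilon^{3/2}$ in place of $\Upsilon^{1/2}$. Fed into the main proof, after moving the cutoff from $n/(\Upsilon k)$ to $n/k$, this route gives $O\bigl(\sqrt{n/k}\,(\ln^{3/2}(n_1k/n)+1)\bigr)$, rather than the claimed square root of that logarithmic factor.
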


Let us comment on the statement of the theorem. The bound on $l$ is chosen so that the gifted items have negative affect that is covered by the error term. The condition of being  $(\Delta^i_j)$-structured is a technical condition needed for an application of the discrepancy results and essentially allows to bound the row norms of a matrix and its submatrices. The maximum in the statement encodes the two scenarios. One option is that a given row has small $\ell_1$-norm. In this case, we do not need to rebalance it for the application of Theorem~\ref{thm:delta-struct-division} and may ignore the contribution of gifting. The other option is that a row has large $\ell_1$-norm, and Theorem~\ref{thm:delta-struct-division} gives a large error on this row. Then the gifting completely covers this possible error. 


We first derive the main theorem from these ingredients. To this end, we need the following auxiliary technical lemma.
\begin{lemma}
\label{lem:trivbound}
There exists a constant $C'' > 0$, such that for all $r \geq 1$,

$$
\Xi(\Xi(r, r) \cdot r, \Xi(r, r) \cdot r) \leq C'' \cdot \Xi(r, r).
$$
\end{lemma}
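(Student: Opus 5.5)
Since $\Xi(s,r) = C\log_2(4s)\sqrt{\log_2(4r)}$ is an explicit function, we prove the lemma by direct computation. We bound the two arguments of the outer $\Xi$ separately and compare each factor with the corresponding factor of $\Xi(r,r)$.

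Put $X := \Xi(r,r) = C\log_2(4r)\sqrt{\log_2(4r)} = C\,(\log_2(4r))^{3/2}$, and set $L := \log_2(4r)\ge 2$ for $r\ge1$. Then $X = C L^{3/2}$. The outer quantity is
$$\Xi(Xr, Xr) = C\,\log_2(4Xr)\,\sqrt{\log_2(4Xr)} = C\,(\log_2(4Xr))^{3/2}.$$
So it suffices to show $\log_2(4Xr)\le C_0 L$ for an absolute constant $C_0$. The lemma then holds with $C'' = C_0^{3/2}$.

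For the key step we expand the logarithm:
$$\log_2(4Xr) = \log_2(4r) + \log_2 X = L + \log_2 C + \tfrac32\log_2 L.$$
Using $\log_2 L\le L$ and $L\ge 2$, this is at most $L + |\log_2 C| + \tfrac32 L \le \bigl(\tfrac52 + \tfrac{|\log_2 C|}{2}\bigr)L$, which gives $C_0$.

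One point needs care. If $C<1$, then $X$ could be below $1$ and the term $\log_2 X$ negative. An upper bound on $\log_2(4Xr)$ is unaffected by this. We still need $4Xr\ge 1$, so that $\log_2(4Xr)\ge 0$ and the power $3/2$ makes sense, because $\Xi$ involves a square root of a logarithm. We may assume $C\ge1$: the constant in Theorem~\ref{thm:logbound} can always be enlarged, and with $C\ge 1$ we have $X\ge 1$ and everything is positive.

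There is no real obstacle here. The only steps needing attention are this positivity issue and making sure $C''$ is independent of $r$, which the uniform bound $\log_2 L\le L$ provides.
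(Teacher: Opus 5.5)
Your proof is correct and follows the paper's argument: both expand $\log_2(4\,\Xi(r,r)\,r) = \log_2(4r) + \log_2 \Xi(r,r)$ and bound the second term by $O(\log_2(4r))$. The paper does this via $1 \le \Xi(r,r) = O(r)$, and you do it via the explicit $\log_2\Xi(r,r) = \log_2 C + \tfrac32 \log_2 L \le |\log_2 C| + \tfrac32 L$. Your step of enlarging the constant so that $C \ge 1$ makes explicit the lower bound $\Xi(r,r) \ge 1$ that the paper assumes without comment.
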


\begin{proof}

\

\begin{align*}
\Xi(\Xi(r, r) \cdot r, \Xi(r, r) \cdot r) &= C \cdot (\log_2(r) + \log_2(\Xi(r, r)) + 2) \sqrt{\log_2 (r) + \log_2(\Xi(r, r)) + 2} \\
&= O\left((\log_2(r) + 2) \sqrt{\log_2 (r) + 2}\right), \quad \text{since } 1 \leq \Xi(r, r) = O(r) \\
&\leq C'' \cdot \Xi(r, r)
\end{align*}

for some $C'' > 0$.
\end{proof}

Now we are ready to combine the procedure in Theorem \ref{thm:delta-struct-division} and the gifting of Theorem~\ref{thm:gifting} to establish the upper bound in Theorem \ref{thm:main}.

\begin{proof}[Proof of the upper bound in Theorem \ref{thm:main}]

    Let $C'$ be the constant from Theorem \ref{thm:delta-struct-division}. Denote $n = n_1 + \ldots + n_k$ and let $\Upsilon := \Xi(\frac{n_1 k}{n}, \frac{n_1 k}{n})$. By Lemma \ref{lem:trivbound}, there is a constant $C'' > 0$, such that $C' \cdot \Xi(\Upsilon \cdot \frac{n_1 \cdot k}{n}, \Upsilon \cdot \frac{n_1 \cdot k}{n}) \leq C'' \cdot \Xi(\frac{n_1 \cdot k}{n}, \frac{n_1 \cdot k}{n})$.

    Apply Theorem \ref{thm:gifting} with constant $C''$ to obtain a partial assignment $v_1, \ldots, v_k \in \{0, 1\}^m$ and matrices $B^1, \ldots, B^k$ that are $(\Delta^1_{p^1_j}), (\Delta^2_{p^2_j}), \ldots, (\Delta^k_{p^k_j})$ structured, respectively. Let $l = \langle \bm{1}, v_1 + \ldots + v_k \rangle$ denote the number of goods that were assigned by the application of Theorem~\ref{thm:gifting}.

    Now, we apply Theorem \ref{thm:delta-struct-division} to matrices $B^1, \ldots, B^k$ with $r = \Upsilon \cdot  \frac{ n_1  \cdot k}{n}$ and $d = n/(\Upsilon \cdot k)$ to divide the remaining $m - l$ objects among $k$ groups. Note that $dr = n_1$, and the matrices $B^i$ obviously satisfy $\Delta^i_j\le n_1$. Thus, such a choice of $r,d$ is valid. Denote by $u_1, \ldots, u_k \in \{0, 1\}^{m - l}$ the division given by the theorem. We are guaranteed that for the $j$-th agent from the $i$-th group, it holds
    $$
    \langle B^i_{p^i_j}, u_i \rangle  \geq  \langle B^i_{p^i_j},\bm{1} / k \rangle - C' \cdot \Xi \left(\Upsilon \frac{n_1 k}{n}, \Upsilon \frac{n_1 k}{n} \right) \cdot \sqrt{\Delta^i_{p^i_j}} \geq \langle B^i_{p^i_j},\bm{1} / k \rangle - C'' \cdot \Upsilon \cdot \sqrt{\Delta^i_{p^i_j}}.
    $$
    Observe that $\max\left(\langle A^i_j, v_i \rangle, C'' \cdot \sqrt{\Upsilon \cdot \frac{n}{k}} \right) \geq  C'' \cdot \Upsilon \cdot \sqrt{\Delta_j^i}$ implies $\langle A^i_j, v_i \rangle \geq  C'' \cdot \Upsilon \cdot \sqrt{\Delta_j^i} - C'' \cdot \sqrt{\Upsilon \cdot \frac{n}{k}}$. In addition, since at most $l$ objects are gifted, we have $\langle B_{p_j^i}^i, \bm{1}/k \rangle \geq \langle A_j^i, \bm{1}/k\rangle - l/k \geq \langle A_j^i, \bm{1}/k\rangle - C'' \cdot \sqrt{\Upsilon \cdot \frac{n}{k}}$.

    Therefore, combined with the gifted objects $v_i$, the $j$-th agent from the $i$-th group values the assigned objects at least by

    $$
        \langle A^i_j, v_i \rangle  + \langle B^i_{p_j^i}, u_i \rangle  \geq \langle A^i_j, \bm{1}/k \rangle - 2 \cdot C'' \cdot \sqrt{\Upsilon \cdot \frac{n}{k}}.
    $$

    Thus, we have obtained the desired $2 \cdot C'' \cdot \sqrt{\Upsilon \cdot \frac{n}{k}}$ bound on PROP'.

\end{proof}

In order to complete the proof of the main theorem, we are left to  prove Theorem~\ref{thm:gifting}. This is done in the following section.

\subsection{Proof of Theorem \ref{thm:gifting}}\label{sec211}

The overall assignment is simply a combination of partial assigments obtained for each group individually using the following theorem.

\begin{theorem}
     \label{thm:gifting_individual}
     Let $C'' > 0$ be a real, and let $n > 0, n_1 > 0, k > 0$ be integers, such that $n \geq n_1 \geq n/k$. Let $A \in [0, 1]^{n_s \times m}$ be agents valuations of objects for the $s$-th group with $n_s \leq n_1$ agents. Then, there exists a partial assignment of goods $v \in \{0, 1\}^m$ , such that the following properties are satisfied. Let $l = \langle \bm{1}, v \rangle$ denote the number of assigned goods, and let $\Upsilon := \Xi(\frac{n_1 \cdot k}{n}, \frac{n_1 \cdot k}{n})$. Let matrix $A' \in [0, 1]^{n_s \times (m - l)}$ be obtained from $A$ by removing columns that correspond to the assigned items (ones in $v$). Then,
\begin{enumerate}
    \item $l \leq C'' \cdot n_s \cdot \sqrt{\Upsilon \cdot \frac{k}{n}}$
    \item There is a permutation of agents $p_1, \ldots, p_{n_s}$ and values $n_1 \geq \Delta_1 \geq \ldots \geq \Delta_{n_s} \geq n/(\Upsilon \cdot k)$, such that the matrix obtained from rearranging rows of $A'$ according to the permutation $p$
    $$
    B = \begin{bmatrix}
A'_{p_1} \\[1ex]
A'_{p_2} \\[1ex]
\vdots \\[1ex]
A'_{p_{n_s}}
\end{bmatrix}
    $$
     is $(\Delta_{p_1}, \Delta_{p_2}, \ldots, \Delta_{p_{n_s}})$-structured and for every $j \in [n_s]$,
    $$
    \max\left(\langle B_j, v \rangle, C'' \cdot \sqrt{\Upsilon \cdot \frac{n}{k}} \right) \geq  C'' \cdot \Upsilon \cdot \sqrt{\Delta_j}.
    $$
    \end{enumerate}
\end{theorem}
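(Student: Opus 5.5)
We propose a greedy \emph{peeling} procedure over dyadic levels. Put $d_0 := n/(\Upsilon k)$, $D_t := d_0 2^t$ for $t = 0,1,\dots,T$, with $T$ minimal such that $D_T \ge n_s$, and $g(D) := C''\Upsilon\sqrt{D}$. The rows $j$ with $\Delta_j = d_0$ satisfy the required inequality automatically, because $C''\Upsilon\sqrt{d_0} = C''\sqrt{\Upsilon n/k}$. So only rows placed at a level $D_t > d_0$ must be paid for by gifts, and they need gifted value at least $g(D_t)$.

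We maintain a set $S$ of ``active'' rows and a set of ungifted columns, with the invariant that, at the start of level $t$, every ungifted column has sum over $S$ at most $D_t$. This holds initially with $S=[n_s]$, since there are $n_s \le D_T$ rows. At level $t$ we proceed as follows.
\begin{itemize}
\item While some ungifted column has $S$-sum exceeding $D_{t-1}$, we gift that column to the group, i.e.\ set its coordinate of $v$ to $1$.
\item Every active row whose accumulated gifted value reaches $g(D_t)$ is removed from $S$ and assigned $\Delta = D_t$.
\item When no such column remains, the rows still in $S$ pass to level $t-1$; the invariant holds by construction. Rows surviving to the end get $\Delta = d_0$.
\end{itemize}
The rows are then ordered by decreasing level; this gives the permutation $p$. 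Structuredness holds because the rows below a row of level $D_t$ all lie in $S$ at the start of level $t$, and columns are only ever removed.

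It remains to bound $l$. At level $t$ each gifted column contributes more than $D_{t-1} = D_t/2$ to the total gifted value received by rows of $S$. This total is at most $|S|\,(g(D_t)+1)$: each row, active or removed at this level, absorbs at most $g(D_t)+1$ before leaving, since entries are in $[0,1]$. Hence level $t$ gifts at most $2|S|(g(D_t)+1)/D_t = O(|S|\,C''\Upsilon/\sqrt{D_t})$ columns. Summing the geometric series over $t$ gives $l = O(n_s C''\Upsilon/\sqrt{d_0})$.

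This counting is the main obstacle. The target is $C'' n_s\sqrt{\Upsilon k/n} = C'' n_s/\sqrt{d_0}$, so the naive count above is too large by a factor $\Upsilon$. To close the gap we would refine the charging in two ways.
\begin{itemize}
\item Charge the gifted mass only against rows that are actually \emph{removed} at level $t$ (those receive about $g(D_t)$ each), and bound the mass absorbed by surviving rows separately.
\item Bound the number of rows removed at level $t$ via an $\ell_1$-counting argument in the spirit of Lemma~\ref{lem:rowbound}. Columns have $S$-sum at most $D_t$, so few rows can carry mass $\gtrsim D_t$ on the heavy columns. This should give roughly $|S|\cdot\sqrt{d_0/D_t}/\Upsilon$ removed rows, making the per-level count decay geometrically from $n_s/\sqrt{d_0}$.
\end{itemize}
If this refinement fails, the fallback is to lower the gifting threshold from $D_{t-1}$ to $D_t/\Upsilon$-type thresholds, possibly with finer, non-dyadic levels. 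The aim is that each gifted column serves about $\Upsilon$ times more rows, at the cost of a $\log$-factor more levels, which is absorbed into $\Upsilon$. Either variant leaves the rest of the argument unchanged. We would then check the side conditions $n_1 \ge \Delta_1$ (since $D_T \le 2n_s \le 2n_1$, cap at $n_1$) and $\Delta_{n_s} \ge d_0$.
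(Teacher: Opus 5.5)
The factor-$\Upsilon$ gap you flag in item~1 is real. None of your proposed refinements can close it, because item~1 as stated is false.

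Take $k=2$, $n_1=n_2$, $n=2n_1$, so $\Upsilon=\Xi(1,1)=2\sqrt2\,C>1$ (recall that $C$ is a large absolute constant). Let $A$ be the all-ones $n_1\times m$ matrix with $m>C''\sqrt{\Upsilon n_1}$.
Item~1 forces $l\le C''\sqrt{\Upsilon n_1}<m$, so an all-ones column survives in $B$. Structuredness at the top row then forces $\Delta=n_1$ there.
Every agent's gifted value equals $l$. Here $\langle B_j,v\rangle$ is read as $\langle A_{p_j},v\rangle$, which is clearly the intent; literally it is $0$, which is worse. So item~2 for that row requires $\max(l,C''\sqrt{\Upsilon n_1})\ge C''\Upsilon\sqrt{n_1}$, but the left side is at most $C''\sqrt{\Upsilon n_1}$.

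The loss of a full factor $\Upsilon$ is also unavoidable. Take all-ones rows again, with $n_s$ the least integer exceeding $d_0=n/(\Upsilon k)\ge 1$ and $m$ large. Item~2 then forces $l> C''\Upsilon\sqrt{d_0}$, while item~1 allows at most $C''(d_0+1)/\sqrt{d_0}\le 2C''\sqrt{d_0}$. So your naive count $l=O(C''\Upsilon n_s/\sqrt{d_0})$ is already optimal. The rest of your peeling argument is sound: the invariant, structuredness, automatic validity of level-$d_0$ rows (since $C''\Upsilon\sqrt{d_0}=C''\sqrt{\Upsilon n/k}$), and the cap at $n_s\le n_1$.

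Your scheme is a dyadic version of the paper's greedy gifting algorithm. That algorithm repeatedly gifts the column of largest active sum $\lambda$, retires an agent once its gifted value reaches $C''\Upsilon\sqrt{\lambda}$ (recording $\Delta=\lambda$), and stops when $\lambda<n/(\Upsilon k)$.

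The paper's double count is
$T=\sum_t\sum_{i\in S_t}A_{i,j_t}/\lambda_t\le\sum_i C''\Upsilon/\sqrt{\lambda_{t_i-1}}\le C''\Upsilon n_s/\sqrt{\lambda_T}$.
This is your count with constant $1$; it omits the additive $1$ from the last gifted item, which you correctly keep. Substituting $\lambda_T\ge n/(\Upsilon k)$ gives $C''\Upsilon n_s\sqrt{\Upsilon k/n}$. The final line of that claim silently drops this factor $\Upsilon$.

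What is actually provable, by either argument, is item~1 with an extra factor $\Upsilon$. Downstream, the total gift becomes $O(C''\Upsilon^{3/2}\sqrt{nk})$. The overall argument then yields $O(\Upsilon^{3/2}\sqrt{n/k})$, or $O(\Upsilon\sqrt{n/k})$ after moving the stopping threshold to $n/k$. The stated $O(\sqrt{\Upsilon n/k})$ does not follow, although $\tilde O(\sqrt{n/k})$ survives.
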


Let us show how to combine these partial assignments and prove Theorem \ref{thm:gifting}.

\begin{proof}[Proof of Theorem \ref{thm:gifting} via   Theorem \ref{thm:gifting_individual}]
    The procedure has $k$ iterations, where at the $i$-th iteration we select goods for the $i$-th group. \begin{itemize}\item As an input, we get the matrices $\hat A^1(i),\ldots, \hat A^k(i), B^1(i),\ldots, B^{i-1}(i)$. Initially (iteration $1$), we have $\hat A^j(1) = A^j$.
    \item We apply Theorem~\ref{thm:gifting_individual} to the $i$-th group with the assignment matrix $\hat A^i(i)$ playing the role of $A$ and  with $s = i$ to obtain a partial assignment $v_i := v$, a permutation $p^i := p$, and a matrix $B^i(i):= B$.
    \item We then obtain $\hat A^1(i+1),\ldots, \hat A^k(i+1), B^1(i+1),\ldots, B^{i}(i+1)$ by removing the columns that correspond to the items that were gifted at this step from the matrices $\hat A^1(i),\ldots, \hat A^k(i)$, $B^1(i),\ldots, B^{i}(i)$.  
\item After the last step $k$, we output $\hat A^j:=\hat A^j(k+1)$  and $B^j=B^j(k+1)$.
\end{itemize}

    Note that in total the procedure assigns at most $C'' \cdot (n_1 + \ldots + n_k) \cdot \sqrt{\Upsilon \cdot k/n} = 4 \cdot \sqrt{\Upsilon \cdot n \cdot k}$ objects. Moreover, for every $i$, the matrix $B^i$ is $(\Delta^i_{p_j^i})$ structured, since it was $(\Delta^i_{p_j^i})$-structured at step $i$ and this property is hereditary w.r.t. removing columns. 
    At last, for every $i \in [k], j \in [n_i]$, we have $\langle B^i_j,v_i\rangle=\langle B^i_j(i),v_i\rangle$, and 
    $$
    \max\left(\langle B^i_j, v_i \rangle, C'' \cdot \sqrt{\Upsilon \cdot \frac{n}{k}} \right) \geq  C'' \cdot \Upsilon \cdot \sqrt{\Delta_j^i},
    $$
    since it was guaranteed by Theorem \ref{thm:gifting_individual}.
\end{proof}

The rest of the section is devoted to the proof of Theorem \ref{thm:gifting_individual}. We begin with the description of the following algorithm that iteratively constructs the partial assignment $v$. See Algorithm \hyperref[alg:gift]{1} for the pseudocode.
\

\

\hrule

\

\paragraph{Gifting algorithm.} The algorithm starts with an empty assignment $v := 0^m$ and sequentially selects an object that is given to the group. During its execution, the algorithm also maintains a set $S$ of \textit{active} agents. Initially, every agent is active, i.e. $S = \{1, 2, \ldots, n_s\}$.

On each iteration, the algorithm finds a column $j \in [m]$, such that $v_j = 0$ and the sum of active agents valuations of the $j$-th object is maximized: $\sum_{i \in S} A_{i,j} \to \max$. Let $\lambda := \sum_{i \in S} A_{i,j}$ denote the column sum over active agents. We iterate over active agents $i \in S$ and look if $\langle A_i, v \rangle \geq C'' \cdot \Upsilon \cdot \sqrt{\lambda}$. If this holds, then we consider the $i$-th agent to already be fulfilled and remove him from the set of active agents $S \leftarrow S \setminus \{i \}$. If at least one agent is removed after the filtration, we recalculate $j$ and $\lambda$ that maximizes the column sum over the modified set of active agents and perform the filtration again.

After that, at the end of the iteration, if $\lambda < \frac{1}{\Upsilon} \cdot \frac{n}{k}$ or $S = \varnothing$, then the algorithm terminates. Otherwise, we give the $j$-th item to the group: $v_j \leftarrow 1$ and continue with the next iteration.

\

\hrule
\

\

\begin{algorithm}
\caption{Gifting Algorithm}
\begin{algorithmic}[1]
\label{alg:gift}
\State $v \gets 0^m$ \Comment{Initially no object is gifted}
\State $S \gets \{1, 2, \dots, n_s\}$ \Comment{Initially every agent is active}
\For{$t = 1, 2, \dots$}
    \State $j \gets \arg\max_{j : v_j = 0} \sum_{i \in S} A_{i,j}$ \Comment{Find item most desirable by the active agents}
    \State $\lambda \gets \sum_{i \in S} A_{i, j}$

    \State $\text{atLeastOneRemoved} \gets \text{False}$
    \For{$i \in S$}
        \If{$\langle A_i, v \rangle \geq C'' \cdot \Upsilon \cdot \sqrt{\lambda}$}
            \State $S \gets S \setminus \{i\}$ \Comment{Make the $i$-th agent inactive}
            \State $\text{atLeastOneRemoved} \gets \text{True}$
        \EndIf
    \EndFor

    \If{atLeastOneRemoved}
        \State \textbf{goto} \textbf{Step 4} \Comment{Update $j$ and $\lambda$ and repeat filtration}
    \EndIf

    \If{$\lambda < \frac{1}{\Upsilon} \cdot \frac{n}{k}$ \textbf{or} $S = \varnothing$}
        \State \textbf{terminate and return} $v$  \Comment{$\lambda$ is small or all agents are fullfilled}
    \Else
        \State $v_j \gets 1$ \Comment{Give the $j$-th object to the group}
    \EndIf
\EndFor
\end{algorithmic}
\end{algorithm}

Denote $T = \|v\|_1$ to be the number of items that the algorithm has gifted. Note that the algorithm made $T + 1$ iterations in total. The $T$-th iteration was the last when an item was added to the group. For $t=1 \ldots T + 1$, denote $S_t$ to be the set of active agents at the end of the $t$-th iteration, and denote by $\lambda_t$ the column sum over active agents at the end of the $t$-th iteration.

Let us note that $\lambda_T \geq \frac{1}{\Upsilon} \cdot \frac{n}{k}$ since the algorithm did not terminate at $T$-th iteration. Moreover, observe that $\lambda_t$ is monotone in $t$, $\lambda_1 \geq \lambda_2 \geq \lambda_3 \geq \ldots \geq \lambda_T$, since the space of active agents and columns over which we maximize and take the sum only shrinks.

Let $t_i$ be the iteration when the $i$-th agent is removed from the active set. If the agent $i$ is never removed from the active set, we set $t_i := T + 1$.

\

Let us show that $\|v\|_1$ is bounded.

\begin{claim}
    The algorithm described above assigns no more than $C'' \cdot n_s \cdot \sqrt{\Upsilon \cdot k/n}$ objects to the group. That is, $\|v\|_1 \leq C'' \cdot n_s \cdot \sqrt{\Upsilon \cdot k/n}$.
\end{claim}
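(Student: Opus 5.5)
The plan is a fractional charging (double counting) argument. Each gifted item's unit cost is split among the agents who were active when it was given. Each agent's total charge is then bounded by its accumulated value, which the filtration step caps, divided by the current $\lambda$.

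\emph{Splitting the cost.} Let $j_t$ be the item gifted at iteration $t\le T$. At the moment it is given, the current active set $S_t$ satisfies $\lambda_t=\sum_{i\in S_t}A_{i,j_t}$, and the filtration loop has already run with this $\lambda_t$. Since $\lambda_t\ge\lambda_T\ge \frac1\Upsilon\frac nk>0$, I can write
$$T=\sum_{t=1}^{T}\sum_{i\in S_t}\frac{A_{i,j_t}}{\lambda_t}=\sum_{i=1}^{n_s}\ \sum_{t<t_i}\frac{A_{i,j_t}}{\lambda_t}.$$
Here I use that agent $i$ belongs to $S_t$ exactly for $t<t_i$, and that it stays inactive once removed.

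\emph{Bounding one agent's charge.} Fix an agent $i$ and let $\tau$ be the last gifting iteration with $\tau<t_i$. The sequence $\lambda_t$ is nonincreasing, so for every $t\le\tau$ we have $\lambda_t\ge\lambda_\tau$. Hence
$$\sum_{t\le\tau}\frac{A_{i,j_t}}{\lambda_t}\le\frac1{\lambda_\tau}\sum_{t\le \tau}A_{i,j_t}.$$
The sum $\sum_{t\le\tau}A_{i,j_t}$ is the value $\langle A_i,v\rangle$ just after item $j_\tau$ is added. Agent $i$ survived the filtration at iteration $\tau$, so before that addition its value was $<C''\Upsilon\sqrt{\lambda_\tau}$. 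Adding one item raises it by at most $A_{i,j_\tau}\le1$. So agent $i$'s charge is at most $C''\Upsilon/\sqrt{\lambda_\tau}+1/\lambda_\tau$. Using $\lambda_\tau\ge\lambda_T\ge n/(\Upsilon k)$, and summing over the $n_s$ agents, I get
$$T\le n_s\Bigl(C''\Upsilon\sqrt{\Upsilon k/n}+\Upsilon k/n\Bigr).$$
Since $n_1\ge n/k$, the last term $\Upsilon k/n$ is at most $\sqrt{\Upsilon k/n}$ (up to the factor $\sqrt\Upsilon$), so it is dominated. This gives $T=O(n_s\sqrt{k/n})$ times a power of $\Upsilon$.

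\emph{Main obstacle.} The difficulty is the exact power of $\Upsilon$. The naive charging above produces $\Upsilon^{3/2}$, whereas the claim asks for $\Upsilon^{1/2}$. To close the gap I would sharpen the per-agent bound rather than bounding every $\lambda_t$ by the worst case $\lambda_T$. Concretely, I would partition agent $i$'s gifts dyadically according to $\lambda_t\in[2^a,2^{a+1})$. In the block with $\lambda_t\approx 2^a$, the agent's value increment is controlled by $C''\Upsilon\sqrt{2^{a+1}}$, so that block contributes about $C''\Upsilon 2^{-a/2}$. This is a geometric series dominated by the smallest scale $\lambda\approx n/(\Upsilon k)$, so it only removes constant factors. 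If the power of $\Upsilon$ still does not match, I would recheck the thresholds: the termination threshold $\frac1\Upsilon\frac nk$ against the filtration threshold $C''\Upsilon\sqrt\lambda$. I would then adjust the bookkeeping, for instance charging via $\lambda_t\le|S_t|$ together with a bound on the number of active agents. The goal is that the final bound has the form $C''n_s\sqrt{\Upsilon k/n}$ used in the proof of Theorem~\ref{thm:gifting}, where only the order $\sqrt{\Upsilon nk}$ of the total number of gifts matters.
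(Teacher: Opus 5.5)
Your charging argument is the same double counting the paper uses, and your per-agent step is more careful than the paper's. The paper asserts $\sum_{t<t_i}A_{i,j_t}<C''\Upsilon\sqrt{\lambda_{t_i-1}}$, but the filtration at iteration $t_i-1$ only tests the value \emph{before} $j_{t_i-1}$ is added. So your extra term $A_{i,j_\tau}/\lambda_\tau$ is genuinely needed. Your justification for discarding it (citing $n_1\ge n/k$) does not quite work. It is cleaner to note that $i\in S_\tau$ forces $A_{i,j_\tau}\le\lambda_\tau$, so this term is at most $\min(1,1/\lambda_\tau)\le 1/\sqrt{\lambda_\tau}$. Your argument then gives
\[
\|v\|_1\le\sum_{i=1}^{n_s}\frac{C''\Upsilon+1}{\sqrt{\lambda_T}}\le (C''\Upsilon+1)\,n_s\sqrt{\Upsilon k/n}.
\]

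The ``main obstacle'' cannot be overcome, and neither the dyadic refinement nor $\lambda_t\le|S_t|$ can help. The claim as stated is false, and your $\Upsilon^{3/2}$ is the correct order. Take $k\mid n$ and a group of $n_s=n/k\le n_1$ agents who all value each of $m\gg C''\Upsilon\sqrt{n/k}$ goods at $1$.
\begin{itemize}
\item Every column sum over $S$ equals $|S|$, so $\lambda=n/k\ge n/(\Upsilon k)$ until someone is removed.
\item All agents hold the common value $t-1$ at iteration $t$, and are removed together once $t-1\ge C''\Upsilon\sqrt{n/k}$.
\item Hence $\|v\|_1=\lceil C''\Upsilon\sqrt{n/k}\,\rceil$. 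The claim promises $\|v\|_1\le C''\tfrac nk\sqrt{\Upsilon k/n}=C''\sqrt{\Upsilon n/k}$, which is smaller since $\Upsilon\ge 2\sqrt2\,C>1$.
\end{itemize}
With $n_s\approx n/(\Upsilon k)$ all-ones agents instead, the ratio becomes $\approx\Upsilon$, which is unbounded as $n_1k/n\to\infty$. So your bound is tight, and the claim fails even up to constants.

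The paper's proof reaches the stated bound only through a slip in its last line. From $\lambda_T\ge n/(\Upsilon k)$ one gets $C''\Upsilon/\sqrt{\lambda_T}\le C''\Upsilon\sqrt{\Upsilon k/n}$, and the leading factor $\Upsilon$ is dropped. You should therefore state your bound rather than try to reach the claimed one.

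The correction propagates. The same example, with all $k$ groups all-ones of size $n/k$, makes item (1) of Theorems~\ref{thm:gifting_individual} and~\ref{thm:gifting} incompatible with item (2). With the corrected count, the gifting loss $l/k$ in the upper-bound proof is $O(\Upsilon^{3/2}\sqrt{n/k})$ rather than $O(\sqrt{\Upsilon n/k})$. The argument then still gives $\tilde O(\sqrt{n/k})$; stopping the gifting at $\lambda<n/k$ instead rebalances it to $O(\Upsilon\sqrt{n/k})$. It does not, however, give the polylogarithmic factor stated in Theorem~\ref{thm:main}.
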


\begin{proof}
   For $t=1\ldots T$, let us denote by $j_t$  the object given to the group on the $t$-th iteration.
   Let us calculate the following sum in two different ways:
$$
\sum_{t=1}^T \sum_{i \in S_t} \frac{A_{i, j_t}}{\lambda_t}.
$$
Firstly, note that for every $t=1 \ldots T$, we have $\sum_{i \in S_t} A_{i, j_t} = \lambda_t$ by the definition of $\lambda_t$. Therefore, on the one hand
$$
\sum_{t=1}^T \sum_{i \in S_t} \frac{A_{i, j_t}}{\lambda_t} = \sum_{t=1}^T \frac{\lambda_t}{\lambda_t} = T.
$$

On the other hand, we can swap the summations to sum over rows first:

\begin{align}
    \sum_{t=1}^T \sum_{i \in S_t} \frac{A_{i, j_t}}{\lambda_t} &= \sum_{i=1}^{n_s} \sum_{t=1}^{t_i - 1} \frac{A_{i, j_t}}{\lambda_t} && \color{gray} \begin{tabular}[t]{@{}l@{}} reordering summations, $A_{i, j_t}$ contributes \\ to $\lambda_t$ for $t \leq t_i - 1$ \end{tabular} \nonumber \\[10pt]
    &\leq \sum_{i=1}^{n_s} \frac{1}{\lambda_{t_i - 1}} \sum_{t=1}^{t_i - 1} A_{i, j_t} && \color{gray} \text{monotonicity } \lambda_t \geq \lambda_{t_i - 1} \nonumber \\[10pt]
    &\leq \sum_{i=1}^{n_s} \frac{1}{\lambda_{t_i - 1}} \cdot \left( C'' \cdot \Upsilon \cdot \sqrt{\lambda_{t_i - 1}} \right) && \color{gray} \begin{tabular}[t]{@{}l@{}} $\sum_{t=1}^{t_i - 1} A_{i, j_t} < C'' \cdot \Upsilon \cdot \sqrt{\lambda_{t_{i} - 1}}$ holds. \\ Otherwise, the $i$-th agent would be removed \\ from the active set on the $(t_i - 1)$-th iteration \end{tabular} \nonumber \\[10pt]
    &= \sum_{i=1}^{n_s} \frac{1}{\sqrt{\lambda_{t_i - 1}}} C'' \cdot \Upsilon && \color{gray} \text{simplification} \nonumber \\[10pt]
    &\leq \sum_{i=1}^{n_s} \frac{1}{\sqrt{\lambda_{T}}} C'' \cdot \Upsilon && \color{gray} \text{monotonicity } \lambda_{t_i - 1} \geq \lambda_T \nonumber \\[10pt]
    &\leq C'' \cdot n_s \cdot \sqrt{\Upsilon \cdot \frac{k}{n}}  && \color{gray} \text{using } \lambda_T \geq \frac{1}{\Upsilon} \cdot \frac{n}{k}
\end{align}

Therefore, we obtain the desired bound

$$
\|v\|_1 = T \leq C'' \cdot n_s \cdot \sqrt{\Upsilon \cdot \frac{k}{n}}.
$$

\end{proof}
To conclude the proof of Theorem \ref{thm:gifting_individual}, let us define the matrix $B$, the permutation of agents $p$, and the sequence $\Delta_i$.

We define $\Delta_i$ to be the value of $\lambda$ right at the moment when the $i$-th agent is removed from the set $S$. If the agent is never removed from the set $S$, we set $\Delta_i := \max\left(\lambda_{T+1}, \frac{1}{\Upsilon} \cdot \frac{n}{k} \right)$. Let $p_1, p_2, \ldots, p_{n_s}$ be the sequence at which agents are removed from the active set ($p_1$ is the first agent that is removed, $p_2$ is the second, and so on). The agents that are not removed from the active set by the end of the algorithm are placed at the end of the sequence $p$ in any order. Observe that $n_1 \geq \Delta_{p_1} \geq \Delta_{p_2} \geq \ldots \geq \Delta_{p_{n_s}} \geq n/(k \cdot \Upsilon)$ since $\lambda$ can only decrease over time.

Let $A'$ be matrix obtained from $A$ by removing all columns corresponding to the gifted items (ones in $v$). Define the matrix $B$ by arranging the rows of matrix $A'$ in the following sequence

$$
B = \begin{bmatrix}
A'_{p_1} \\[1ex]
A'_{p_2} \\[1ex]
\vdots \\[1ex]
A'_{p_{n_s}}
\end{bmatrix}
$$

Let us show that the matrix $B$ is top-down structured.

\begin{claim}
\label{claim:gift-struct}
    The matrix $B$ is $(\Delta_{p_1}, \Delta_{p_2}, \ldots, \Delta_{p_{n_s}})$-structured.
\end{claim}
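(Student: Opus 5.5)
We need to show that for every position $q \in [n_s]$, every column of $B$ (a non-gifted item $c$) satisfies $\sum_{q' \ge q} B_{q',c} \le \Delta_{p_q}$. The rows $q' \ge q$ are exactly the agents $p_q, p_{q+1}, \ldots, p_{n_s}$. These are the agents still active just before $p_q$ is removed, plus, if several agents are removed in the same filtration pass, possibly some that are removed together with $p_q$. So the plan is to compare this tail sum with the value of $\lambda$ at the moment $p_q$ is removed, which is $\Delta_{p_q}$ by definition.

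\textbf{Removed agents.} Suppose $p_q$ is removed during some filtration pass, and let $S^\ast$ be the active set at the start of that pass. The value $\lambda$ used in the pass was computed as
\[
\lambda = \max_{c:\, v_c = 0} \sum_{i \in S^\ast} A_{i,c}.
\]
This $\lambda$ is $\Delta_{p_q}$, since $\lambda$ is not recomputed inside the pass.

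\begin{itemize}
\item \emph{Tail is contained in $S^\ast$.} Every agent $p_{q'}$ with $q' \ge q$ is removed in this pass or later, or is never removed. Hence $p_{q'} \in S^\ast$. The ordering among agents removed in the same pass does not matter, because all of them are in $S^\ast$.
\item \emph{Columns of $B$ are admissible.} Any column $c$ of $B$ is never gifted, so $v_c = 0$ at that moment. Therefore $c$ is among the columns in the maximization.
\end{itemize}

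Entries are nonnegative, so
\[
\sum_{q' \ge q} B_{q',c} \le \sum_{i \in S^\ast} A_{i,c} \le \lambda = \Delta_{p_q}.
\]

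\textbf{Never-removed agents.} If $p_q$ is never removed, then the tail consists only of agents in the final active set $S_{T+1}$. Every non-gifted column has sum over $S_{T+1}$ at most the final maximum $\lambda_{T+1}$, since the final $\lambda$ is recomputed over exactly the non-gifted columns. This gives a tail sum of at most
\[
\lambda_{T+1} \le \max\Bigl(\lambda_{T+1}, \tfrac{1}{\Upsilon}\cdot\tfrac{n}{k}\Bigr) = \Delta_{p_q}.
\]
One edge case is the algorithm terminating because $S = \varnothing$; then there are no such agents and nothing needs checking.

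\textbf{Main subtlety.} The step needing the most care is the bookkeeping of "the value of $\lambda$ at the moment of removal." The algorithm loops back to Step 4 after a removal and recomputes $\lambda$, and several agents may leave in one pass. I would state explicitly that $\Delta_i$ is the $\lambda$ computed at the start of the pass in which $i$ is removed, with respect to the active set $S^\ast$ at that time. This is a superset of all later-positioned agents.

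Monotonicity of the $\Delta$'s is already noted in the text and is not needed for the structured property itself. The bound $\Delta_{p_1} \le n_1$ follows from column sums over at most $n_s \le n_1$ rows with entries in $[0,1]$.
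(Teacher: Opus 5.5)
Your proof is correct and follows the paper's argument. It uses the same two cases (removed vs.\ never removed), with the tail rows $p_q,\dots,p_{n_s}$ lying in the active set when $p_q$ is removed, and every column of $B$ being a never-gifted column over which $\lambda$ was maximized. Your use of the active set $S^\ast$ at the start of the filtration pass is just a more careful version of the paper's ``just before $p_i$ is removed'' step, and it makes the argument independent of how agents removed in the same pass are ordered.
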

\begin{proof}
Consider the $i$-th row of the matrix $B$. We consider two cases whether the $p_i$-th agent was removed from the set $S$ or not.

If the $p_i$-th agent is never removed from the set $S$, then $p_{i}, p_{i+1}, \ldots, p_{n_s} \in S_{T+1}$. By definition, $\lambda_{T+1}$ is the maximum column sum over rows corresponding to agents from $S_{T+1}$. Therefore, $\Delta_{p_i} = \max\left(\lambda_{T+1}, \frac{n}{\Upsilon \cdot k} \right)$ is also an upper bound on the column sum of the submatrix $B_{i,\ldots, n_s}$.

Now, suppose the $p_i$-th agent is removed from the set $S$ at some point. Consider the moment just before $p_i$-th agent is removed from $S$. The agents $p_{i}, p_{i+1}, \ldots, p_{n_s}$ all lie in the set $S$. We also have that the current value of $\lambda$ is an upper bound on the maximum column sum over rows corresponding to agents from $S$. Therefore, $\Delta_{p_i} = \lambda$ is also an upper bound on the column sum of the submatrix $B_{i, \ldots, n_s}$.

Thus, the matrix $B$ is $(\Delta_{p_1}, \ldots, \Delta_{p_{n_s}})$-structured.
\end{proof}

Finally, let us show that every agent values gifted objects at the desired threshold.

\begin{claim}
\label{claim:necbound}
For every agent $i \in [n_s]$, it holds that
    \begin{equation}
    \max\left(\langle A_i, v \rangle, C'' \cdot \sqrt{\Upsilon \cdot \frac{n}{k}} \right) \geq  C'' \cdot \Upsilon \cdot \sqrt{\Delta_i}.   \label{eq:desbound}
    \end{equation}

\end{claim}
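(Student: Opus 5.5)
We split according to how agent $i$ leaves the gifting algorithm: either it is removed from the active set $S$ by the filtration step, or it stays active until the algorithm terminates. In both cases we compare $\langle A_i, v\rangle$ with $C''\cdot\Upsilon\cdot\sqrt{\Delta_i}$, where $\Delta_i$ is defined as in the text just before Claim~\ref{claim:gift-struct}.

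\emph{Case 1: agent $i$ is removed from $S$.} By definition, $\Delta_i$ is the value of $\lambda$ at the moment of removal. At that moment the filtration test gave $\langle A_i, v^{\mathrm{cur}}\rangle \ge C''\cdot\Upsilon\cdot\sqrt{\lambda} = C''\cdot\Upsilon\cdot\sqrt{\Delta_i}$, where $v^{\mathrm{cur}}$ is the partial assignment at that time. The algorithm only switches coordinates of $v$ from $0$ to $1$, and $A$ has nonnegative entries. Hence $\langle A_i, v\rangle \ge \langle A_i, v^{\mathrm{cur}}\rangle$ for the final $v$, and \eqref{eq:desbound} holds through the first argument of the maximum. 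The one point to check is that $\Delta_i$ is the same $\lambda$ against which the test was made. This is true because $\lambda$ is recomputed at Step~4 before each filtration pass, and the removal inside a pass uses that pass's $\lambda$.

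\emph{Case 2: agent $i$ is never removed.} Then $\Delta_i = \max(\lambda_{T+1}, \frac{n}{\Upsilon k})$. We use the termination condition. Since $i \in S$, we have $S\neq\varnothing$, so the algorithm stopped because $\lambda_{T+1} < \frac{1}{\Upsilon}\cdot\frac nk$. Therefore $\Delta_i = \frac{n}{\Upsilon k}$. It follows that
$$C''\cdot\Upsilon\cdot\sqrt{\Delta_i} = C''\cdot\Upsilon\cdot\sqrt{\tfrac{n}{\Upsilon k}} = C''\cdot\sqrt{\Upsilon\cdot\tfrac nk},$$
so \eqref{eq:desbound} holds through the second argument of the maximum.

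There is one more possibility to handle: the algorithm might stop because no unassigned column remains, which happens when $m$ is exhausted. Then we take $\lambda_{T+1}=0$, and the same computation applies. This bookkeeping about the termination state is the only delicate step; the rest is monotonicity of $v$ together with nonnegativity of the entries of $A$.

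Finally, $B_j$ is a row of $A'$, while $v$ is supported on the gifted columns. So the quantity $\langle B_j, v\rangle$ in Theorem~\ref{thm:gifting_individual} should be read as $\langle A_{p_j}, v\rangle$, which is how the claim is stated. With that reading, Claim~\ref{claim:necbound} for $i=p_j$ gives exactly the required inequality with $\Delta_{p_j}$.
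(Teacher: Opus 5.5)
Your proof is correct and follows the paper's argument. You split on whether agent $i$ is ever removed from $S$, using the filtration test (with $\Delta_i$ equal to that pass's $\lambda$) in one case and the termination condition $\lambda_{T+1} < \frac{1}{\Upsilon}\cdot\frac{n}{k}$ in the other. You are slightly more careful than the paper, which writes $\Delta_i = \lambda_{T+1}$ instead of $\max\bigl(\lambda_{T+1}, \frac{n}{\Upsilon k}\bigr)$ and leaves implicit that $\langle A_i, v\rangle$ can only grow after removal because $v$ is monotone and $A \geq 0$.
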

\begin{proof}
Let us condition on whether the $i$-th agent was removed from the set $S$ or not.

If it was not removed, then $\Delta_i = \lambda_{T+1} < \frac{1}{\Upsilon} \cdot \frac{n}{k}$. Therefore, $\sqrt{\Upsilon \cdot \frac{n}{k}} \geq \Upsilon \cdot \sqrt{\Delta_i}$.

If it was removed, then it could happen only if $\langle A_i, v \rangle \geq C'' \cdot \Upsilon \cdot \sqrt{\lambda}$ was true at that moment. Since $\Delta_i = \lambda$, we get the desired bound.

Therefore, in both cases we can conclude (\ref{eq:desbound}).
\end{proof}

The claims \ref{claim:gift-struct} and \ref{claim:necbound} together conclude the proof of Theorem \ref{thm:gifting_individual}.

\section{Lower Bound in Theorem~\ref{thm:main2}} \label{sec_lower_bound}

In this section, we prove that $\disc^k_>=\Omega(\sqrt{n/k})$. We start with a trivial lower bound on $\oasymdiscm$. 

\begin{lemma}
\label{low:trivial}
    $$
    \disc^k_>(n_1, \ldots, n_k) \geq \frac{k-1}{k}.
    $$
\end{lemma}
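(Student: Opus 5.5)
The plan is a single-good construction. Take $m=1$: there is one good $g$, and every agent in every group values it at $1$. So each matrix $A^i$ is the all-ones column of height $n_i$, which lies in $[0,1]^{n_i\times 1}$ and hence is admissible in the maximum defining $\disc^k_>(n_1,\ldots,n_k)$. It then suffices to show that $\disc^k_>(A^1,\ldots,A^k)\ge \frac{k-1}{k}$ for this instance.

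Fix any partition $v_1,\ldots,v_k\in\{0,1\}^1$ with $\sum_i v_i=\bm 1$. Exactly one group $i_0$ receives the good. Since $k\ge 2$, some other group $i\neq i_0$ exists, and it has $v_i=0$. For any agent $j$ of that group the requirement reads
$$\langle A^i_j,v_i\rangle = 0 \ \ge\ \langle A^i_j,\bm 1/k\rangle - c = \frac1k - c .$$
This forces $c\ge 1/k$.

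That only gives $1/k$, not $\frac{k-1}{k}$, so the instance needs strengthening. The fix is to use $m=k-1$ goods, each valued $1$ by every agent. Any partition of $k-1$ goods among $k\ge2$ groups leaves some group $i$ with nothing. Every agent of that group then has value $0$ against a fair share of $\frac{k-1}{k}$, which yields $c\ge \frac{k-1}{k}$.

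The last step is to note that this holds for every partition, so the minimum over partitions is at least $\frac{k-1}{k}$. Since $\disc^k_>(n_1,\ldots,n_k)$ is a maximum over instances, the lemma follows. No step here is a real obstacle. The only point to get right is choosing $m=k-1$ rather than $m=1$, so that the pigeonhole argument produces an empty group whose fair share is as large as possible.
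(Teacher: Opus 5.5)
Your proof is correct and matches the paper's argument: take $k-1$ goods that every agent values at $1$; by pigeonhole some group receives nothing, so it falls short of its fair share $\frac{k-1}{k}$. The preliminary $m=1$ attempt is an unnecessary detour and can be dropped.
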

\begin{proof}
    Let $k \geq 2$ be the number of groups. Let $n_1, \ldots, n_k > 0$ be group sizes, and let $n = n_1 + \ldots + n_k$ be the total number of agents.

Suppose there are $k-1$ objects, and each of them is valued by $1$ by every agent. Then, in any allocation at least one group receives no objects. The fair share for this group is $\frac{k - 1}{k}$. Therefore, agents from this group are at least $\frac{k-1}{k}$ unsatisfied.
\end{proof}

To prove our lower bound, we rely on the following lower bound on weighted discrepancy.

\begin{theorem}[Theorem 3.2 from \citep{DBLP:conf/sosa/ManurangsiM26}]
    \label{thm:hardmatrix}
    For any integers $r > 0$ and $k > 0$, there exists a matrix $A \in [0, 1]^{r \times (r \cdot k)}$, such that for any $v \in \{0, 1\}^m$, it holds

    $$
    \| A (v - \bm{1}/k)\|_{\infty} \geq \sqrt{r - 1}/16.
    $$
\end{theorem}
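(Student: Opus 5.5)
The plan is a spectral (Hadamard-type) argument, not a probabilistic union bound. A naive union bound does not work here. If $A$ has i.i.d.\ Bernoulli$(1/2)$ entries and $x = v - \bm{1}/k$, then each row of $Ax$ has small-ball probability about a constant $c<1$ at scale $\sqrt r/16$, giving a bound $c^r$ per $v$. But even after restricting to $v$ with $\|v\|_1 \approx r$ (forced by an all-ones row), there are about $\binom{rk}{r} \approx (ek)^r$ candidates. This overwhelms $c^r$ as soon as $k$ is large.

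\textbf{Step 1 (construction).} Let $H$ be an $r\times r$ Hadamard-type $\pm1$ matrix with $H^\top H = rI$; for general $r$, pass to the largest power of two below $r$, which costs only the constant $16$ and the $-1$. Put $M = (H+J)/2 \in \{0,1\}^{r\times r}$. Take $A$ to be $k$ column blocks, each a copy of $M$ with its columns re-indexed (and, if needed, slightly rescaled inside $[0,1]$), so that no signed combination of blocks cancels.

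\textbf{Step 2 (reduction to an $r$-dimensional vector).} The goal is to write $Ax = My$ for some $y\in\mathbb{R}^r$ and then show that $\|y\|_2^2 \gtrsim r$. Orthogonality then gives
\[
\|My\|_2^2 \ \ge\ \frac{r}{4}\,\|y\|_2^2 \ -\ (\text{error from } J),
\]
and the $J$-part is controlled by the all-ones row, i.e.\ by the total mass $\langle \bm 1, y\rangle$. Averaging then yields $\|Ax\|_\infty \ge \|Ax\|_2/\sqrt r \gtrsim \sqrt r$.

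\textbf{Step 3 (the obstacle).} Identical copies of $M$ fail. With $A = [M\ \cdots\ M]$, the vector $y$ is an integer vector minus $\bm 1$, and choosing exactly one $1$ per column class gives $y=0$. So the blocks must differ in a way that makes every achievable $y$ have many coordinates bounded away from $0$.

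I would first try scaling block $b$ by a weight $c_b\in(0,1]$. Then $y_j = \sum_{b: v_{b,j}=1} c_b - \tfrac1k\sum_b c_b$, and I need the target $\tfrac1k\sum_b c_b$ to stay at constant distance from all subset sums of $\{c_b\}$, uniformly over $j$. This is the main difficulty. Subset sums of $k$ weights tend to be dense, so I expect to need something more than scaling. The fallback is to let different blocks use different (e.g.\ mutually rotated) Hadamard bases, and to prove a lower bound on the smallest singular value of $A$ restricted to the affine set $\{v - \bm 1/k : v\in\{0,1\}^{rk}\}$ directly. Since this is the content of \citep{DBLP:conf/sosa/ManurangsiM26}, I would ultimately defer to their argument for this step.
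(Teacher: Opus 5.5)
Your proposal stops where the argument has to happen. Step~3 is left unresolved and you defer to \citep{DBLP:conf/sosa/ManurangsiM26}, so as written it does not prove the statement. The paper also only quotes this result, but a proof proposal must supply the missing step.

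The route you abandon in Step~3 actually works. The obstacle is not density of subset sums, because the weights need not be generic. Take $c_b=1$ for $b\le p:=\lfloor k/2\rfloor$ and $c_b=0$ otherwise. That is, $A=[M\ \cdots\ M\ 0\ \cdots\ 0]$ with $p$ copies of $M$, where $M=(H+J)/2$ for the $r'\times r'$ Sylvester matrix $H$ with $r'=2^{\lfloor\log_2 r\rfloor}$, padded by zero rows and columns. Then $A(v-\bm 1/k)=My$ with $y_j=\sum_{b\le p}v_{b,j}-p/k$. This is an integer minus $p/k\in[1/3,1/2]$, so $|y_j|\ge 1/3$ for every $j$ and every $v$. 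This needs $k\ge 2$; for $k=1$ the statement is false, since $v=\bm 1$ gives $0$.

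Re-indexing the columns of the blocks, as in your Step~1, can never help. If the blocks are column permutations of one another, the indicator of a single block already gives $A(v-\bm 1/k)=0$. What has to differ between blocks is their total weight.

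The $J$-term in Step~2 can then be handled exactly rather than just ``controlled''. Since the first row and column of $H$ are all ones, $\bm 1^\top H=r'e_1^\top$, so with $s=\langle\bm 1,y\rangle$,
\[
\|My\|_2^2=\tfrac14\|Hy+s\bm 1\|_2^2=\tfrac{r'}{4}\left(\|y\|_2^2-y_1^2+(s+y_1)^2\right)\ \ge\ \tfrac{r'}{4}\sum_{j\ge 2}y_j^2\ \ge\ \tfrac{r'(r'-1)}{36}.
\]
Using $r'-1\ge(r-1)/2$, this gives $\|A(v-\bm 1/k)\|_\infty\ge\|My\|_2/\sqrt{r'}\ge\sqrt{r'-1}/6\ge\sqrt{r-1}/(6\sqrt2)\ge\sqrt{r-1}/16$. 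With this, your outline becomes a complete proof with no appeal to the cited source.
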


The theorem guarantees that the maximum of the absolute value of deviation of $\langle A_i, (v - \bm{1}/k)\rangle$ from $0$ is at least $\sqrt{r - 1}/16$. If the deviation is downwards then we immediately infer that  some agent is unsatisfied. However, an upward deviation is only welcomed by an agent and thus does not immediately imply any lower bounds for group division. To deal with this, we introduce agents with complementary valuations. This converts an upward deviation into a downward one. Similarly to \citep{DBLP:conf/sosa/ManurangsiM26}, we derive the following result.

\begin{lemma}
   \label{lem:hardcompl}
    For any $r \geq 1$, there exists $m_0$, such that for any $m \geq m_0$ there exists a matrix $B \in \{0, 1\}^{2r \times m}$ that satisfies the following. For any $v \in \{0, 1\}^m$, there exists $i$, such that
\begin{equation}
\label{eq:fairbound}
\langle B_i, v \rangle \leq \langle B_i, \bm{1}/k\rangle - \frac{\sqrt{r - 1}}{16} + \max(0, \|v\|_1 - m/k)
\end{equation}

\end{lemma}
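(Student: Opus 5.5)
The plan is to take the hard matrix $A \in [0,1]^{r \times rk}$ of Theorem~\ref{thm:hardmatrix} and stack it on top of its complement, so that an upward deviation on a row of $A$ becomes a downward deviation on the corresponding complementary row. Padding with zero columns then handles arbitrary $m$. Concretely, put $m_0 := rk$. For $m \ge m_0$, let $A' := [A \mid 0] \in [0,1]^{r \times m}$, obtained by appending $m - rk$ zero columns to $A$. Define $B$ as the vertical concatenation of $A'$ and $J - A'$, where $J$ is the all-ones $r \times m$ matrix. This gives $2r$ rows.

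Fix $v \in \{0,1\}^m$ and let $v_A \in \{0,1\}^{rk}$ be its restriction to the first $rk$ coordinates. Since the padded columns of $A'$ are zero, $\langle A'_i, v - \bm{1}/k\rangle = \langle A_i, v_A - \bm{1}/k\rangle$. Theorem~\ref{thm:hardmatrix} applied to $v_A$ gives a row $i$ with $|\langle A_i, v_A - \bm{1}/k\rangle| \ge s := \sqrt{r-1}/16$. There are two cases.
\begin{itemize}
\item If this deviation is downward, then $\langle A'_i, v\rangle \le \langle A'_i, \bm{1}/k\rangle - s$. This is \eqref{eq:fairbound} for row $i$ of $B$, because the term $\max(0,\cdot)$ is nonnegative.
\item If it is upward, use the identity
\[
\langle \bm{1} - A'_i, v - \bm{1}/k\rangle = \|v\|_1 - \frac{m}{k} - \langle A_i, v_A - \bm{1}/k\rangle \le \max\left(0, \|v\|_1 - \frac mk\right) - s .
\]
This is exactly \eqref{eq:fairbound} for the complementary row $\bm{1} - A'_i$ of $B$.
\end{itemize}
Both cases are one-line computations, so the inequality itself poses no difficulty.

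The real obstacle is that the lemma asks for $B \in \{0,1\}^{2r\times m}$, while Theorem~\ref{thm:hardmatrix} as quoted only provides $A \in [0,1]^{r\times rk}$. My first step would be to inspect the construction behind Theorem~3.2 of \citep{DBLP:conf/sosa/ManurangsiM26}. I expect it to be combinatorial, built from a Hadamard- or random-sign-type set system, so that its entries are already in $\{0,1\}$. In that case $A'$ and $J - A'$ are $0/1$ and the proof above is complete.

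If the matrix genuinely has fractional entries, blowing up columns does not transfer the hardness: $v$ need not be constant on the copies of a column. I would then fall back on one of two routes. The first is to reprove the $0/1$ version directly, via the standard random-sign argument with $r$ rows that are random subsets of size about half. The second is to observe that the downstream lower-bound argument only needs $B$ with entries in $[0,1]$, and weaken the lemma's requirement accordingly.
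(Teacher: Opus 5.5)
Your proof is essentially the paper's: pad the matrix of Theorem~\ref{thm:hardmatrix} with zero columns, stack it on its complement, and in the upward case use $\langle \bm{1}-A'_i, v-\bm{1}/k\rangle = \|v\|_1 - m/k - \langle A'_i, v-\bm{1}/k\rangle$. The $\{0,1\}$ versus $[0,1]$ mismatch you flag is also in the paper, which simply takes $A\in\{0,1\}^{r\times m_0}$ from Theorem~\ref{thm:hardmatrix}; your second fallback (allowing entries in $[0,1]$) is the right fix, since the lower-bound construction only needs valuations in $[0,1]$, whereas your random-subset fallback would likely not give $\Omega(\sqrt{r})$ with only $r$ rows, because the union bound over the roughly $k^{\Theta(r)}$ relevant vectors $v$ is too costly.
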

\begin{proof}
    Let $A \in \{0, 1\}^{r \times m_0}$ be a matrix given by Theorem \ref{thm:hardmatrix}. For $m \geq m_0$, let a matrix $A' \in \{0, 1\}^{r \times m}$ be obtained from $A$ by padding extra columns with zeroes. Note that the matrix $A'$ still satisfies the property from Theorem \ref{thm:hardmatrix} since zero columns do not change the product.

Define $C$ to be the matrix of rows complementary to $A'$:  $C_{i, j} = 1 - A'_{i, j}$ for all $i, j$. Let $B$ be a vertical concatenation of matrices $A'$ and $C$:

 $$
B = \begin{bmatrix}
A' \\[1ex]
C \\[1ex]
\end{bmatrix}.
$$
Consider a vector $v \in \{0, 1\}^m$ and let us show that the matrix $B$ satisfies the required property.

By Theorem \ref{thm:hardmatrix}, for any $v \in \{0, 1\}^m$, there exists $i$, such that

$$
|\langle A'_i, v - \bm{1}/k \rangle| \geq \frac{\sqrt{r - 1}}{16}.
$$
Let us consider the two cases how the absolute value can be expanded.

If $-\langle A'_i, v - \bm{1}/k \rangle \geq \sqrt{r - 1}/16$ then $\langle A'_i, v  \rangle \leq \langle A'_i, \bm{1}/k \rangle - \sqrt{r - 1}/16$, which implies (\ref{eq:fairbound}).

If  $\langle A'_i, v - \bm{1}/k \rangle \geq \sqrt{r - 1}/16$, then
$$\langle 1-A'_i, v - \bm{1}/k \rangle \le -\sqrt{r - 1}/16 + \langle \bm{1}, v v- \bm{1}/k \rangle =-\sqrt{r - 1}/16 +  \|v\|_1 - m/k.$$ Since $1-A'_i$ is a row of $B$, we get the desired conclusion.
\end{proof}

\subsection{Proof of the Lower Bound}

Lemma \ref{low:trivial} gives  a lower bound $\disc^k_>(n_1, \ldots, n_k) \geq \frac12$. This allows us to assume that $n \geq 2^{14} \cdot k$. Denote $r = \lfloor n/(4 \cdot k) \rfloor \geq 2^{12}$, and let $m_0$ be the integer from Lemma \ref{lem:hardcompl} when applied with $r$. Let $M = n \cdot m_0$ be the number of objects. We design agents valuations for the $i$-th group as follows.

\begin{itemize}
    \item If $n_i < 2r$, then we set all of the agents from the $i$-th group to valuate any object by $1$.
    \item If $n_i \geq 2r$, then we divide all agents into subgroups of size $2 r$ with $<2r$ agents left in no subgroup. In total, this gives $s_i = \lfloor n_i/(2r_i) \rfloor$ subgroups. Let $m_i = \lfloor M/s_i \rfloor \geq m_0$, and let the matrix $B \in \{0, 1\}^{2r \times m_i}$ be given by Lemma \ref{lem:hardcompl} . For the $j$-th subgroup we set the valuations of the objects $(j-1) \cdot m_i + 1,  \ldots, j \cdot m_i$ according to the matrix $B$. All other objects are valuated by $0$ by the agents of this subgroup. See Figure \ref{fig:lowerbound_construction} for an illustration of the chosen valuations.
\end{itemize}
Let us give some intuition behind the construction. Note that $r$ is essentially an average size of a group. Thus, large groups get valuations in the spirit of the construction from \cite{DBLP:conf/sosa/ManurangsiM26}, while small groups evaluate everything to $1$. The evaluation of small groups thus forces that roughly a correct proportion of goods go to the large groups. There, the goods are  partitioned between subgroups. This way, we are able to apply the bound of Lemma~\ref{lem:hardcompl} inside a subgroup, rather than inside the whole group.\\ 

\begin{figure}[h]
    \centering
\begin{tikzpicture}[x=1cm, y=1cm, yscale=-1]

    \draw [thick, decorate, decoration={brace, amplitude=10pt}]
        (0, -0.4) -- (12, -0.4) node [midway, above=12pt] {\Large $M$};

    \draw [thick] (0, -0.1) -- (-0.2, -0.1) -- (-0.2, 11.6) -- (0, 11.6);
    \draw [thick] (12, -0.1) -- (12.2, -0.1) -- (12.2, 11.6) -- (12, 11.6);

    \draw[thick] (0.2, 0.2) rectangle (2.8, 0.8) node[midway] {\Large $B$};
    \draw[thick] (3.2, 1.2) rectangle (5.8, 1.8) node[midway] {\Large $B$};
    \draw[thick] (6.2, 2.2) rectangle (8.8, 2.8) node[midway] {\Large $B$};
    \draw[thick] (9.2, 3.2) rectangle (11.8, 3.8) node[midway] {\Large $B$};

    \foreach \x/\y in {
        4.5/0.5, 7.5/0.5, 10.5/0.5,
        1.5/1.5, 7.5/1.5, 10.5/1.5,
        1.5/2.5, 4.5/2.5, 10.5/2.5,
        1.5/3.5, 4.5/3.5, 7.5/3.5} {
        \node at (\x, \y) {\huge $0$};
    }

    \foreach \x in {0.5, 1.0, ..., 11.5} {
        \node at (\x, 4.25) {\Large $0$};
    }

    \draw [thick, decorate, decoration={brace, amplitude=4pt}] (-0.4, 1) -- (-0.4, 0) node [midway, left=5pt] {$2r$};
    \draw [thick, decorate, decoration={brace, amplitude=4pt}] (-0.4, 2) -- (-0.4, 1) node [midway, left=5pt] {$2r$};
    \draw [thick, decorate, decoration={brace, amplitude=4pt}] (-0.4, 3) -- (-0.4, 2) node [midway, left=5pt] {$2r$};
    \draw [thick, decorate, decoration={brace, amplitude=4pt}] (-0.4, 4) -- (-0.4, 3) node [midway, left=5pt] {$2r$};
    \draw [thick, decorate, decoration={brace, amplitude=4pt}] (-0.4, 4.5) -- (-0.4, 4) node [midway, left=5pt] {$<2r$};

    \draw [thick, decorate, decoration={brace, amplitude=8pt}] (-1.4, 4.5) -- (-1.4, 0) node [midway, left=10pt] {\Large $n_1$};

    \begin{scope}[yshift=5cm]
        \draw[thick] (0.2, 0.2) rectangle (3.8, 0.8) node[midway] {\Large $B$};
        \draw[thick] (4.2, 1.2) rectangle (7.8, 1.8) node[midway] {\Large $B$};
        \draw[thick] (8.2, 2.2) rectangle (11.8, 2.8) node[midway] {\Large $B$};

        \foreach \x/\y in {
            6.0/0.5, 10.0/0.5,
            2.0/1.5, 10.0/1.5,
            2.0/2.5, 6.0/2.5} {
            \node at (\x, \y) {\huge $0$};
        }

        \foreach \x in {0.5, 1.0, ..., 11.5} {
            \node at (\x, 3.25) {\Large $0$};
        }

        \draw [thick, decorate, decoration={brace, amplitude=4pt}] (-0.4, 1) -- (-0.4, 0) node [midway, left=5pt] {$2r$};
        \draw [thick, decorate, decoration={brace, amplitude=4pt}] (-0.4, 2) -- (-0.4, 1) node [midway, left=5pt] {$2r$};
        \draw [thick, decorate, decoration={brace, amplitude=4pt}] (-0.4, 3) -- (-0.4, 2) node [midway, left=5pt] {$2r$};
        \draw [thick, decorate, decoration={brace, amplitude=4pt}] (-0.4, 3.5) -- (-0.4, 3) node [midway, left=5pt] {$<2r$};

        \draw [thick, decorate, decoration={brace, amplitude=8pt}] (-1.4, 3.5) -- (-1.4, 0) node [midway, left=10pt] {\Large $n_2$};
    \end{scope}

    \node at (3, 9.2) {\Large $\vdots$};
    \node at (6, 9.2) {\Large $\vdots$};
    \node at (9, 9.2) {\Large $\vdots$};
    \node at (-1.4, 9.2) {\Large $\vdots$};

    \begin{scope}[yshift=10cm]
        \foreach \y in {0.3, 0.7, 1.1} {
            \foreach \x in {0.4, 0.8, 1.2, 1.6, 2.0, 2.4, 2.8, 3.2, 3.6, 4.0, 4.4, 4.8, 5.2, 5.6, 6.0, 6.4, 6.8, 7.2, 7.6, 8.0, 8.4, 8.8, 9.2, 9.6, 10.0, 10.4, 10.8, 11.2, 11.6} {
                \node at (\x, \y) {$1$};
            }
        }

        \draw [thick, decorate, decoration={brace, amplitude=6pt}] (-1.4, 1.4) -- (-1.4, 0) node [midway, left=10pt] {\Large $n_k$};
        \draw [thick, decorate, decoration={brace, amplitude=4pt}] (-0.3, 1.4) -- (-0.3, 0) node [midway, left=5pt] {$<2r$};
    \end{scope}

\end{tikzpicture}
\caption{The valuations of objects for each group. The big zeros correspond to a matrix of $0$s. The valuations of groups with fewer than $2r$ agents are set to $1$. If a group $i$ has at least $2r$ agents, then it is partitioned into subgroups of size $2r$ where each subgroup values some subset of objects according to matrix $B$. The remaining $<2r$ agents get arbitrary valuations (for concreteness, we set them to $0$).}
\label{fig:lowerbound_construction}
\end{figure}

We go on to the proof. Arguing indirectly, assume that there is an allocation $v_1, \ldots, v_k$, such that every agent is at most $2^{-15} \cdot \sqrt{n/k}$ unsatisfied.

Define $d_i = \sum_{j=1}^{M} v_{i, j} - \frac{M}{k}$ to be the difference between the number of allocated goods and the fair share for the $i$-th group.  Note that $\sum_{i=1}^k d_i = 0$. Let us estimate $\sum_{n_i \geq 2r} d_i$ and $\sum_{n_i < 2r} d_i$ separately.

Consider a group $i$ for which $n_i \geq 2r$. We have $s_i = \lfloor n_i / 2r \rfloor$ subgroups in this group. The $j$-th subgroup has non-zero valuations for the segment of objects $(j - 1) \cdot m_i + 1, \ldots, j \cdot m_i$. By the pigeonhole principle, there is a subgroup $j$, such that $\sum_{l=(j-1) \cdot m_i + 1}^{j \cdot m_i} v_{i, l} - \frac{M}{k \cdot s_i} \leq \frac{d_i}{s_i}$. By (\ref{eq:fairbound}), there is an agent in this subgroup that is at least $\frac{\sqrt{r - 1}}{16} - \max(0, d_i / s_i + 1/k)$ unsatisfied (here we used $|m_i - M/s_i| \leq 1$).  If $d_i/s_i < \frac{\sqrt{r - 1}}{32}$, then we get a contradiction since this agent's unsatisfaction is at least

$$
\frac{\sqrt{r - 1}}{16} - \frac{\sqrt{r - 1}}{32} - \frac{1}{2} \geq \frac{\sqrt{r - 1} - 16}{32} > 2^{-15} \cdot \sqrt{\frac{n}{k}}.
$$
Thus, we have $d_i \geq s_i \cdot \sqrt{r - 1}/32 \geq 2^{-11} \cdot s_i  \cdot \sqrt{n/k}$. Therefore,

\begin{align*}
\sum_{n_i \geq 2r} d_i &\geq \sum_{n_i \geq 2r} 2^{-13} \cdot \frac{n_i}{r} \cdot \sqrt{\frac{n}{k}} \\
&\geq \sum_{n_i \geq 2r} 2^{-13} \cdot n_i \cdot \sqrt{\frac{k}{n}} \\
&\geq 2^{-13} \cdot \sqrt{\frac{k}{n}} \left(\sum_{i=1}^k n_i - \sum_{n_i < 2r} n_i \right) \\
&\geq 2^{-13} \cdot \sqrt{\frac{k}{n}} \left(\sum_{i=1}^k n_i - 2r \cdot k \right) \\
&\geq 2^{-13} \cdot \sqrt{\frac{k}{n}} \left(n - \frac{n}{2}\right) \\
&\geq 2^{-14} \cdot \sqrt{n \cdot k}.
\end{align*}

Consider any agent from group $i$ for which $n_i < 2r$. Recall that it evaluates every good to $1$, and thus $d_i$ is exactly the measure of how unsatisfied he is. Thus, $d_i$ cannot be less than $-2^{-15} \cdot \sqrt{n/k}$. Therefore, we have

$$
\sum_{n_i < 2r} d_i \geq  -2^{-15} \cdot \sqrt{n \cdot k}.
$$
Combining two bounds, we have
$$
\sum_{i=1}^k d_i \geq (2^{-14} - 2^{-15}) \cdot \sqrt{n \cdot k} > 0,
$$
reaching a contradiction.

\bibliographystyle{plainnat}
\bibliography{references}

\end{document}